\theoremstyle{definition}
\newtheorem{definition}{Definition}[section]
\theoremstyle{remark}
\newtheorem{oss}[definition]{Remark}
\theoremstyle{plain}
\newtheorem{theorem}[definition]{Theorem}
\newtheorem{prop}[definition]{Proposition}
\newtheorem{lemma}[definition]{Lemma}
\DeclareMathOperator{\Sym}{Sym}
\DeclareMathOperator{\sign}{sign}
\DeclareMathOperator{\Aut}{Aut}
\def\A{\mathcal{A}}
\def\B{\mathcal{B}}
\def\M{\mathcal{M}}
\def\aux{\textup{aux}}
\def\BV{\mathrm{BV}}
\def\C{\mathbb{C}}
\def\ferm{\textup{ferm}}
\def\H{\mathcal{H}}
\def\Hoch{\mathrm{H}}
\def\phi{\varphi}
\def\t{\textup{t}}
\def\Z{\mathbb{Z}}
\newcommand{\siX}{{\small{$X_{0}$}}}
\newcommand{\siS}{{\small{$S_{0}$}}}
\newcommand{\siXS}{{\small{$(X_{0}, S_{0})$}}}
\newcommand{\swS}{{\small{$\widetilde{S}$}}}
\newcommand{\swX}{{\small{$\widetilde{X}$}}}
\newcommand{\swXS}{{\small{$(\widetilde{X}, \widetilde{S})$}}}
\newcommand{\SBV}{{\small{$S_{\BV}$}}}
\newcommand{\stS}{{\small{$S_{\t}$}}}
\newcommand{\stX}{{\small{$X_{\t}$}}}
\newcommand{\stXS}{{\small{$(X_{\t}, S_{\t})$}}}
\newcommand{\AI}{{\small{$\mathcal{A}_{0}$}}}
\newcommand{\isp}{{\small{$(\mathcal{A}_{0}, \mathcal{H}_{0},D_{0})$}}}
\newcommand{\ABV}{{\small{$\mathcal{A}_{\BV}$}}}
\newcommand{\HBV}{{\small{$\mathcal{H}_{\BV}$}}}
\newcommand{\DBV}{{\small{$D_{\BV}$}}}
\newcommand{\JBV}{{\small{$J_{\BV}$}}}
\newcommand{\BVsp}{{\small{$(\mathcal{A}_{\BV}, \mathcal{H}_{\BV},D_{\BV}, J_{\BV})$}}}
\newcommand{\ATOT}{{\small{$\mathcal{A}_{\t}$}}}
\newcommand{\HTOT}{{\small{$\mathcal{H}_{\t}$}}}
\newcommand{\DTOT}{{\small{$D_{\t}$}}}
\newcommand{\JTOT}{{\small{$J_{\t}$}}}
\newcommand{\TOTsp}{{\small{$(\mathcal{A}_{\t}, \mathcal{H}_{\t},D_{\t}, J_{\t})$}}}
\newcommand{\scB}{{\small{$\mathcal{B}$}}}
\newcommand{\scM}{{\small{$\mathcal{M}$}}}
\newcommand{\cBRST}{BV}
\newcommand{\BRST}{\mbox{\tiny{BRST}}}
\newcommand{\BVt}{\mbox{\tiny{BV}}}
\DeclareMathOperator{\tr}{Tr}
\title[The BV construction for finite spectral triples]{The BV construction for finite spectral triples}
\author{Roberta A. Iseppi}
\address{Georg-August-Universit\"{a}t G\"{o}ttingen, Mathematisches Institut, Bunsenstrasse 3-5, 37073 G\"{o}ttingen, Germany.}
\email{roberta.iseppi@mathematik.uni-goettingen.de}
\date{\today}
\begin{document}

\begin{abstract}
This article presents how the BV formalism naturally inserts in the framework of noncommutative geometry for gauge theories induced by finite spectral triples. Reaching this goal entails that not only all the steps of the BV construction, from the introduction of ghost/anti-ghost fields to the construction of the BRST complex, can be expressed using noncommutative geometric objects, but also that the method to go from one step in the construction to the next one has an intrinsically noncommutative geometric nature.\\
Moreover, we prove that both the classical BV and BRST complexes coincide with another cohomological theory, naturally appearing in noncommutative geometry: the Hochschild complex of a coalgebra. The construction is presented in detail for {\footnotesize{$\mathrm{U}(n)$}}-gauge theories induced by spectral triples on the algebra {\footnotesize{$M_{n}(\mathbb{C})$}}. 
\end{abstract}

\maketitle
\tableofcontents

%OUTLINE:
% intro BV
% intro spectral triples
% the BV extension
% BV-cohomology as Hochschild
% auxiliary fields
% total cohomo as Hoch cohomo
% gauge fixing and BRST coho as hoch
% Quantum version and study of the new critical locus

\section{Why a noncommutative geometric approach to the BV formalism}
\label{Sect: intro}

\noindent
Since its early years, the introduction of the concept of quantization brought two new ideas into the framework of a  mathematical formalization of physics laws: {\emph{discreteness}} and {\emph{noncommutativity}}. Indeed, while the gravitational force, at macroscopic level, can be described as determined by the curvature of the spacetime and hence it has a {\emph{continuous}} nature within the context of Riemannian geometry, the other three fundamental interactions, that is the electromagnetic, strong and weak interactions, are {\emph{discrete}} quantum fields, whose interactions are mediated by elementary particles described by the Standard Model of particle physics.
Moreover, while Riemannian manifolds are intrinsically {\emph{commutative}}, the quantum observables have a {\emph{noncommutative}} character, not only when described as self-adjoint operators on a hilbert space, whose commutator depends on the Planck constant within the framework of the canonical quantization, but also when seen as elements in a noncommutative algebra with a star product obtained via a deformation quantization process. \\
\\
It was to include both these two aspects and hence being able to describe  the continuum as well as the discrete, the commutative as well as the noncommutative, that the classical geometrical notion of spacetime, given in terms of a Riemannian manifold within the context of differential geometry, was enlarged, arriving to the introduction of the notion of {\emph{spectral triple}}, in the context of noncommutative geometry. While a mathematically rigorous definition of the key notion of spectral triple will be given in Section \ref{Sect: intro to NCG and gauge theories} (cf. Definition \ref{def spectral triple}), a more intuitive approach describes a spectral triple as a noncommutative version of the classical notion of compact spin Riemannian manifold: the perspective changes and, while in the classical context the basic objects are topological spaces, their points and the coordinates charts used to describe them, the noncommutative geometric point of view sees the geometric space as encoded in its algebra of coordinates, which can then be taken to be also finite-dimensional or non-commutative. Hence, while the classical models can be recovered by considering commutative $\mathcal{C}^{*}$-algebras, by eliminating the requirement of them being commutative, we obtain an all new class of purely quantum models. Even more, very remarkably, this new geometrical object also has a deep relation to quantum field theory: indeed,  given a spectral triple, finite or infinite dimensional, commutative or not, it always encodes a {\emph{gauge theory}}, that is, a physical theory invariant under the action of a (local) group of symmetries. \\
\\
It is precisely this intrinsic relation existing between spectral triples and gauge theories that makes noncommutative geometry such an interesting and natural framework for a mathematical formalization and geometric interpretation of constructions which involve gauge theories, such as the Batalin-Vilkovisky (BV) construction. As proved in this article, both the classical and the quantum BV constructions for finite gauge theories naturally insert in the mathematical framework provided by noncommutative geometry. Moreover, these constructions are expected to naturally extend to the context of gauge theories induced by general spectral triples, defined on an infinite-dimensional algebra. \\
\\
In more detail, in Section \ref{Sect: intro to BV} we describe the classical and the quantum BV constructions in the context of finite gauge theories, from the introduction of ghost/anti-ghost fields to the construction of the induced quantum BRST cohomology complex while Section \ref{Sect: intro to NCG and gauge theories} is devoted to a brief recall of the key notions and constructions from noncommutative geometry.\\
After these more introductory parts, we prove how all the steps that compose the BV construction can be formalized in the language of spectral triples: in Section \ref{Sect: BV spectral triple} we introduce the notion of {\emph{BV spectral triple}}, which encodes in the structure of a real spectral triple both the extended configuration space as well as the extended action, which has to be a solution of the classical master equation. \\
As any BV-extended theory naturally induces a cohomology complex, namely the {\emph{BV complex}}, analogously it does the BV-spectral triple: in Section \ref{Sect: BV and Hochschild complex} we relate for the first time the BV complex to another cohomological theory which is more naturally embedded in the context of noncommutative geometry, that is, the Hochschild complex of a coalgebra over a comodule. Section \ref{Sect: gauge-fixing procedure in terms of NCG} is devoted to the {\emph{gauge-fixing process}} and explains how also the introduction of auxiliary fields, which are needed to define a gauge-fixing fermion, can be described within the setting of noncommutative geometry. In particular, the process of adding the auxiliary fields can be seen as a further enlargement of the BV-spectral triple. The result is the construction of a so-called {\em total spectral triple}, which contains all fields/ghost fields and auxiliary fields and such that the induced cohomology is quasi-isomorphic to the BV complex. \\
The {\emph{BRST complex}} is the central topic of Section \ref{Sect: BRST and Hochschild complex}, where we prove that the structure already found for the BV complex appears also in the gauge-fixed case. 
%Section \ref{Sect: Quantum BV formalism} will take us from the classical to the quantum case. In particular, we will prove how the application of the BV construction solved the problem of the appearance of orbits of critical points for the action functional \siS \ in our class of models.
To conclude, in Section \ref{Sect: conclusion} we summarize how the whole BV construction can be entirely performed in the framework of noncommutative geometry. We again refer to {\small{$\mathrm{U}(n)$}}-gauge theories induced by spectral triples on the matrix algebras {\small{$M_{n}(\mathbb{C})$}}.\\
\\
\noindent
{\em Acknowledgments:} Here we further develop a project whose preliminary results were found in collaboration with W. D. van Suijlekom, whom the author would like to thank for many inspiring conversations. 

\section{The BV construction: origin and physical relevance}
\label{Sect: intro to BV}

\noindent
The Batalin-Vilkovisky (BV) construction was first discovered \cite{BV2, BV3, BV1} as a cohomological solution to the problem of determining a procedure to quantize gauge theories via the path integral approach, where the presence of local symmetries in the action functional does not allow one to straightforwardly apply the standard perturbative approach to perform the quantization.\\
\\
In this section we briefly review the steps that led to the discovery of this construction, addressing the subject using the language of algebraic geometry and mostly focusing on the case of finite dimensional gauge theories. For completeness, we would like to mention that this formalism has been widely investigated also from other purely mathematical perspectives. In particular, the topic has been approached using techniques from {\emph{functional analysis}} (cf. K. Fredenhagen  and K. Rejzner \cite{Rejzner2, Rejzner1,  Rejzner3}) as well as following a more {\emph{algebraic approach}}: O. Gwilliam, first with K. Costello \cite{Costello2, Costello1} and then with R. Haugseng \cite{Gwilliam_H}, described the BV construction using the language of {\small{ $\infty$}}-categories. Recently, these two approaches has been related in \cite{Owen_Kasja} for the case of free field theories. In addition to that, the BV construction has been analyzed from a more {\emph{differential-geometric}} point of view by A. S. Cattaneo, P. Mnev and N. Reshetikhin in a series of papers \cite{pavel_cattaneo_r1, pavel_cattaneo_r3, pavel_cattaneo_r2,  pavel_cattaneo_w}. Within this scenario, this article aims to extend what started in \cite{articolo_BV} together with Van Suijlekom and continued in \cite{articolo_ICMP2022}, that is, adding to the above approaches an analysis of the BV construction from the point of view of noncommutative geometry.\\
\\ 
To start, we recall the notion of {\emph{gauge theory}} as it naturally appears in the context of noncommutative geometry (cf. Section \ref{Sect: intro to NCG and gauge theories}): indeed, while classically a gauge theory is given in terms of connections on a principal bundle, whose structure group encodes the symmetry of the physical theory described, here the notion is, in some aspects, more general.

\begin{definition}
\label{Def: gauge theory}
For a real  vector space $X_{0}$ and a  real-valued functional  $S_{0}$  on $X_{0}$,  let 
$F\colon \mathcal{G} \times X_{0} \rightarrow X_{0}$ be a group action on $X_{0}$ for a given group $\mathcal{G}$.
Then the pair $(X_{0}, S_{0})$ is called a {\em gauge theory with gauge group $\mathcal{G}$} if 
 $$S_{0}(F(g, \varphi)) = S_{0}(\varphi)$$
for all $\varphi \in X_{0}$ and $g \in \mathcal{G}$. The space $X_{0}$ is  referred to as the {\em configuration space}, an element $\varphi \in X_{0}$ is called a {\em gauge field} and $S_{0}$ is the {\em action functional}. 
\end{definition}

To quantize a physical theory {\small{$(X_{0}, S_{0})$}} via the path integral approach, one faces the problem of computing integrals of the following type: 
$$\langle \mathcal{O} \rangle = \frac{1}{Z}\int_{X_{0}} \mathcal{O} e^{\frac{i}{\hslash}S_{0}} [d \mu], \quad \quad \mbox{where} \quad \quad Z:=  \int_{X_{0}} e^{\frac{i}{\hslash}S_{0}} [d \mu].
$$
Hence, given an observable {\small{$\mathcal{O}$}}, that is a functional on {\small{$X_{0}$}}, its expectation value {\small{$\langle \mathcal{O} \rangle$}} is determined by a quotient of integrals where $Z$ is the {\emph{partition function}} of the theory and $d \mu$ denotes a measure on {\small{$X_0$}}. First introduced by Feynman \cite{Feynman} in 1965, over the years the notion of path integral has been developed. However, the problem of rigorously defining a measure $d \mu$ on an (infinite-dimensional) configuration space {\small{$X_0$}} is still mostly unsolved and usually partially sidestepped by applying a {\emph{perturbative approach}}: this involves redefining an ill-defined integral by imposing the stationary phase approximation, which can be proved in the finite-dimensional context and is then used as definition in the infinite-dimensional setting. The explicit computation of a path integral is then performed using the tool provided by the {\emph{Feynman diagrams}} {\small{$\Gamma$}}: 
\begin{multline}
 \label{Feynman_eq}
\int_{X_{0}} e^{\frac{i}{\hslash}S_{0}} [d \mu] \underset{\hslash \rightarrow 0}{\backsim} \\
 \sum_{x_{0} \in \{ \mbox{\footnotesize{crit. pts }}S_{0}\}} e^{\frac{i}{\hslash} S_{0}(x_{0})} \ |\det S_{0}^{\prime \prime}(x_{0})|^{-\frac{1}{2}} \ e^{\frac{\pi i}{4} \sign(S^{\prime\prime}_{0}(x_{0}))} (2 \pi \hslash)^{\frac{\dim X_{0}}{2}} \sum_{\Gamma} \frac{\hslash^{- \chi(\Gamma)}}{|\Aut(\Gamma)|} \Phi_{\Gamma}.
\end{multline}
In this formula, a role is played by the value of the action functional \siS \ and of its Hessian at the critical points of \siS \ as well as by the topological characteristics of the Feynman diagrams, such as their Euler characteristic {\small{$\chi(\Gamma)$}} and the order of their automorphism groups {\small{$\Aut(\Gamma)$}}. Beside the details, the main relevance of the formula \eqref{Feynman_eq} is in allowing one to go from an ill-defined integral to a sum. However, generally in the infinite-dimensional context, to prevent a pivotal divergence, the action {\small{$S_0$}} has to have only {\emph{isolated}} {\emph{non-degenerate}} critical points. It is precisely this crucial condition that is not satisfied by gauge-invariant functionals, where the presence of a gauge symmetry makes the critical points appear in orbits. A first instinct might suggest to solve this problem by replacing the domain of integration with the quotient {\small{$X_{0}/\mathcal{G}$}}, where all physically-equivalent points are identified. However, this approach often leads to a quotient {\small{$X_{0}/\mathcal{G}$}} with a degenerate structure, not suitable to enter the computation as domain of integration. A different approach was then suggested by Faddeev and Popov in 1967 \cite{Faddeev-Popov}: instead of reducing the domain of integration, they proposed to eliminate the divergences appearing for a gauge theory in \eqref{Feynman_eq} by introducing extra (non-physical) fields, suggestively called {\em ghost fields}. 

\begin{definition}
A {\em field/ghost field} $\varphi$ is a graded variable characterized by two integers:
 $$\deg(\varphi) \in \mathbb{Z} \quad \mbox{ and } \quad \epsilon(\varphi) \in \{ 0, 1 \},  \quad \mbox{ with } \quad \deg(\varphi) = \epsilon(\varphi) \quad (\mbox{mod} \ \mathbb{Z}/2).$$
$\deg(\varphi)$ is the {\em ghost degree}, while $\epsilon(\varphi)$ is the {\em parity}, which distinguishes between the bosonic case, where $\epsilon(\varphi)=0$ and $\varphi$ behaves as a real variable, and the fermionic case, where $\epsilon(\varphi)=1$ and $\varphi$ behaves as a Grassmannian variable: 
$$\varphi \psi = - \psi \varphi, \quad \quad \mbox{ and }\quad\quad \varphi^{2} =0, \quad \quad  \mbox { if } \quad \epsilon(\varphi) = \epsilon(\psi) = 1.$$
\end{definition}

\noindent
While Faddeev and Popov just introduced ghost fields of ghost degree $1$, as they were interested in a perturbative approach to Yang-Mills theory, the idea of considering ghost fields of higher degree has to be attributed to Becchi, Rouet, Stora \cite{BRS, BRS3, BRS2} and (independently) Tyutin \cite{T} and goes under the name of the {\emph{BRST construction}}. A further development was suggested simultaneously and independently by Zinn-Justin \cite{Zinn-Justin} and Batalin and Vilkovisky \cite{BV2, BV3, BV1} and is now known as the {\emph{BV formalism}}.

\subsection{The BV extension}
To put the entire BV formalism in a nutshell we could say that it can be viewed as an extension process: given an initial gauge theory {\small{$(X_{0}, S_{0})$}}, it determines a new BV-extended theory {\small{$(\widetilde{X}, \widetilde{S})$}}
$$\begin{array}{ccc}
(X_{0}, S_{0}) & ---------\rightarrow  &(\widetilde{X}, \widetilde{S})\\
\mbox{\small{initial gauge theory}} & \mbox{\tiny{+ ghost/anti-ghost fields}} & \mbox{\small{BV-extended theory}}
\end{array}
$$
where the {\em extended configuration space} {\small{$\widetilde{X}$}} is obtained via the introduction of {\em ghost/an-ti-ghost fields} to the initial configuration space:
$$\widetilde{X} = X_{0} \cup \{ \mbox{ghost/anti-ghost fields} \},$$
and the {\em extended action} {\small{$\widetilde{S}$}} is defined by adding extra terms to the initial action {\small{$S_{0}$}} depending on the ghost/anti-ghost fields:
$$\widetilde{S} = S_{0} + \mbox{terms depending on ghost/anti-ghost fields}.$$
What distinguishes the BV construction from the BRST construction is indeed the presence of {\emph{antifields/anti-ghost fields}}, which are used to give the BV-extended configuration space \swX \ a structure which resembles the one of a shifted cotangent bundle.
 
 \begin{definition}
Given a field/ghost field $\varphi$, the corresponding {\em antifield/anti-ghost field} $\varphi^{*}$ is a graded variable with
$$\deg(\varphi^{*}) = - \deg(\varphi) -1, \quad \quad \mbox{ and } \quad \quad \epsilon(\varphi^{*}) = \epsilon(\varphi) +1, \quad (\mbox{mod} \ \mathbb{Z}/2).$$
\end{definition}
\noindent
{\emph{Note:}} The terminology just introduced differs from the one classically presented in the physics literature. Indeed, it explicitly distinguishes between {\em antifields} and {\emph{anti-ghost fields}}, where this second term is used to identify the antifields corresponding to the ghost fields. The term \textquotedblleft antifields\textquotedblright\ is then devoted to designate only the antifields corresponding to the initial fields in \siX.  \\
\\
In the context we are analyzing, a BV-extended theory has the mathematical structure recalled in the following definition. 

\begin{definition}
\label{def: extended theory}
Given a gauge theory {\small{$(X_{0}, S_{0})$}}, a {\em BV-extended theory} associated to it is a pair {\small{$(\widetilde{X}, \widetilde{S})$}}, where the extended configuration space {\small{$\widetilde{X} = \oplus_{i \in \mathbb{Z}} [\widetilde{X}]^{i}$}} is a {\small{$\mathbb{Z}$}}-graded super vector space decomposable as
\begin{equation}
\label{Eq: decomposition ext conf sp}
\widetilde{X} \cong \mathcal{F} \oplus \mathcal{F}^{*}[1], \quad \quad \mbox{ with } \quad [\widetilde{X}]^{0} = X_{0}
\end{equation}
where {\small{$\mathcal{F} = \oplus_{i \geqslant 0} \mathcal{F}^{i}$}} is a graded locally free {\small{$\mathcal{O}_{X_{0}}$}}-module with homogeneous components of finite rank, for {\small{$\mathcal{O}_{X_{0}}$}} the algebra of regular functions on {\small{$X_{0}$}}. Concerning the extended action {\small{$\widetilde{S} \in [\mathcal{O}_{\widetilde{X}}]^{0}$}}, it is a real-valued regular function on {\small{$\widetilde{X}$}}, with {\small{$\widetilde{S}|_{X_{0}}=S_{0}$}}, {\small{$\widetilde{S}\neq S_{0}$}}, such that it solves the {\em classical master equation}, i.e., 
\begin{equation}
\label{classical master eq}
\{\widetilde{S}, \widetilde{S}\}=0,
\end{equation}
where {\small{$\{ -, -\}$}} denotes the graded Poisson structure on the graded algebra {\small{$\mathcal{O}_{\widetilde{X}}$}}.
 \end{definition}

\noindent
{\emph{Note:}} In the decomposition given in \eqref{Eq: decomposition ext conf sp}, $\mathcal{F}$ describes the fields/ghost-fields content of {\small{$\widetilde{X}$}} while {\small{$\mathcal{F}^{*}[1]$}} determines the antifields/anti-ghost fields part, with {\small{$\mathcal{F}^{*}[1]$}} that denotes the shifted dual module of $\mathcal{F}$:
 $$\mathcal{F}^{*}[1] = \oplus_{i \in \mathbb{Z}} \big[ \mathcal{F}^{*}[1] \big]^{i} \quad \quad \mbox{ with } \quad \big[ \mathcal{F}^{*}[1] \big]^{i} = \big[ \mathcal{F}^{*} \big]^{i+1}.$$
Hence, the extended configuration space \swX\ must have a symmetric structure between fields/ghost fields on one side and antifields/anti-ghost fields on the other. Next, {\small{$\mathcal{O}_{\widetilde{X}}$}} denotes the algebra of real-valued regular functions on {\small{$\widetilde{X}$}}. This algebra has a graded structure naturally induced by the grading on \swX\ and it can be endowed with a graded Poisson structure induced by bracket 
$$\{ -, -\}: [\mathcal{O}_{\widetilde{X}}]^{\bullet} \rightarrow [\mathcal{O}_{\widetilde{X}}]^{\bullet +1}$$ 
of degree {\small{$1$}}. This structure is completely determined by requiring that, on the generators of {\small{$\widetilde{X}$}}, it satisfies the following conditions
\begin{equation}
\label{Eq: pairing fields/antifields}
\big\{ \beta_{i}, \beta_{j}\big\}= 0 , \quad \quad \quad \big\{ \beta^{*}_{i}, \beta_{j}\big\} = \delta_{ij} \quad  \quad \mbox{ and } \quad \quad \big\{ \beta^{*}_{i}, \beta^{*}_{j}\big\}=0 
\end{equation}
for {\small{$\beta_{i} \in \mathcal{F}^{p}$}} and {\small{$\beta^{*}_{i} \in \big[ \mathcal{F}^{*}[1] \big]^{-p-1}$}}, where the only non-zero contribution comes from the pairing of a field/ghost field with the corresponding antifield/anti-ghost field. Finally, even though it is not explicitly required in the notion of BV-extended theory, in what follows we will consider pairs {\small{$(\widetilde{X}, \widetilde{S})$}} with {\em finite level of reducibility}. 

\begin{definition}
\label{def: level of reducibility}
A BV-extended theory {\small{$(\widetilde{X}, \widetilde{S})$}} is said to be 
{\em reducible with level of reducibility} {\small{$L$}} if the {\small{$\mathbb{Z}_{\geqslant 0}$}}-graded and finitely-generated {\small{$\mathcal{O}_{X_{0}}$}}-module {\small{$\mathcal{F}$}} entering the decomposition of \swX\ as {\small{$\widetilde{X}= \mathcal{F} \oplus \mathcal{F}^{*}[1]$}} has homogeneous components of at most degree {\small{$L+1$}}. That is, if 
$$\widetilde{X}= \mathcal{F} \oplus \mathcal{F}^{*}[1] \quad \mbox{with} \quad \mathcal{F} = \oplus_{k=0}^{L+1}\mathcal{F}_{k}.$$
In case {\small{$L=0$}}, the theory is called {\em irreducible}.
\end{definition}

At this point, the next natural question to ask is {\emph{how}}, given an initial gauge theory, one can determine the associated BV-extended theory. In particular, the first problem to solve is to establish the type and number of ghost/anti-ghost fields that have to be introduced in \swX. Answering this question goes beyond the purposes of this brief recall of the BV formalism. For a detailed description of the construction in the infinite-dimensional setting we refer to \cite{Mnev_book}. For the finite-dimensional case, that is, the case when the underlying configuration space is given by an affine variety, the standard construction relies on the computation of the Koszul-Tate resolution \cite{Tate} of the Jacobian ideal {\small{$J(S_{0})$}}, with 
$$J(S_{0}):= \langle \partial_{1}S_{0}, \dots, \partial_{n}S_{0} \rangle, \quad \mbox{for } n=\dim X_{0}$$
over the ring {\small{$\mathcal{O}_{X_{0}}$}}. A method is presented by Felder and Kazhdan in \cite{felder}, which uses the entire Koszul-Tate resolution, ensuring the gauge-invariance of the resulting BV-extended theory, while in \cite{primo_articolo} we explain how to select a finite number of ghost fields, which reflects the complexity of the gauge symmetry of the theory. Hence, having a finite family of additional fields in \swX\ allows to determine an {\emph{exact}} solution of the classical master equation as extended functional {\small{$\widetilde{S}$}}. 

\subsection{The BV cohomology complex}
\label{Subsect: The BV cohomology complex}
Having recalled the notion of a BV-extended theory allows to state the one of {\emph{BV cohomology complex}}. Indeed, the condition required in Definition \ref{def: extended theory} that the extended action \swS\ has to solve the classical master equation is equivalent to imposing that the operator {\small{$d_{\widetilde{S}}:= \{ \widetilde{S}, - \}$}} defines a coboundary operator for the \cBRST  \ complex. In other words, the classical master equation condition is what ensures that any BV-extended theory naturally induces a so-called BV cohomology complex.

\begin{definition}
\label{definition BV complex}
Given a BV-extended theory {\small{$(\widetilde{X}, \widetilde{S})$}}, the induced {\em BV cohomology complex} is a cohomology complex whose cochain spaces {\small{$\mathcal{C}_{\BVt}^{i}(\widetilde{X}, d_{\widetilde{S}})$}} and coboundary operator {\small{$d_{\widetilde{S}}$}} are defined as follows, respectively: 
$$\mathcal{C}_{\BVt}^{i}(\widetilde{X}, d_{\widetilde{S}}) := [\text{Sym}_{\mathcal{O}_{X_{0}}}(\widetilde{X})]^{i} = [\mathcal{O}_{\widetilde{X}}]^{i},  $$
for {\small{$ i \in \mathbb{Z}$}}, with {\small{$\text{Sym}_{\mathcal{O}_{X_{0}}}(\widetilde{X})$}} the {\small{$\mathbb{Z}$}}-graded symmetric algebra generated by {\small{$\widetilde{X}$}} on the ring {\small{$\mathcal{O}_{X_{0}}$}}, and 
$$d_{\widetilde{S}}: \mathcal{C}_{\BVt}^{\bullet}(\widetilde{X}, d_{\widetilde{S}}) \rightarrow \mathcal{C}_{\BVt}^{\bullet +1}(\widetilde{X}, d_{\widetilde{S}}), \quad \quad \mbox{ with } \quad \quad d_{\widetilde{S}}:= \{ \widetilde{S}, - \}$$ 
for {\small{$\{ - , - \}$}} denoting the Poisson bracket structure on {\small{$\mathcal{O}_{\widetilde{X}}$}}.
\end{definition}

Together with the BRST cohomology complex, whose definition we will shortly recall, this complex is the main object of investigation of this paper. One of the reasons this complex is interesting, is that its cohomology groups capture physical properties of the initial gauge theory (see for example in \cite{Rejzner3} and \cite{Mnev_book}). In particular, one can argue that the $0$-degree BV cohomology group {\small{$\mathcal{H}^{0}_{\BVt}(\widetilde{X}, d_{\widetilde{S}})$}} describes the classical observables, that is, the on-shell gauge-invariant functionals, of the initial gauge theory \siXS\ while the higher order groups {\small{$\mathcal{H}_{\BVt}^{\bullet}(\widetilde{X}, d_{\widetilde{S}})$}} describe anomalies and obstructions to the quantization of the theory.
 
\subsection{Auxiliary fields and the gauge-fixing process}
\label{Subsec: Auxiliary fields and the gauge-fixing process}
Despite of all the information summarized in it, computing the BV cohomology complex might not be the ultimate goal of our analysis. In particular, one could be interested in computing amplitudes and S-matrix elements for the theory. If this is the case, at this point the computation cannot be performed straightforwardly, due to the presence of antifields/anti-ghost fields in the configuration space \swX. Hence, one could be interested in removing all the antifields/anti-ghost fields, without canceling the positive effect, ascribable to the presence of the ghost fields, of eliminating the redundancy of the gauge symmetry in the action. This goal is reached by performing a {\emph{gauge-fixing process}}: given a BV-extended theory, we want to determine a new pair 
{\small{$(\widetilde{X}, \widetilde{S})|_{\Psi}$}} where neither the {\em gauge-fixed configuration space} {\small{$\widetilde{X}|_{\Psi}$}} nor the {\em gauge-fixed action} {\small{$\widetilde{S}|_{\Psi}$}} depends on antifields/anti-ghost fields. This goal is reached by defining:
$$\widetilde{X}|_{\Psi}:= \big[\mathcal{F} \oplus\mathcal{F}^{*}[1]\big]|_{\varphi_{i}^{*} = \frac{\partial \Psi}{\partial \varphi_{i}}} \quad \mbox{and} \quad \widetilde{S}(\varphi_{i}, \varphi^{*}_{i})|_{\Psi} := \widetilde{S}\Big(\varphi_{i}, \varphi_{i}^{*} =\frac{\partial \Psi}{\partial \varphi_{i}}\Big),$$
for $\Psi$ a so-called {\emph{gauge-fixing fermion}}.

\begin{definition}
\label{def gauge fixing fermion}
Given an extended configuration space {\small{$\widetilde{X}$}}, a {\em gauge-fixing fermion} $\Psi$ on it is a regular function {\small{$ \Psi \in \mathcal{O}^{-1}_{\mathcal{F}}$}}. That is, a regular function depending only on fields/ghost fields, of total degree $-1$ and hence odd parity. 
\end{definition}

Thus the new gauge-fixed configuration space {\small{$\widetilde{X}|_{\Psi}$}} is defined to be the Lagrangian submanifold determined by imposing the {\em gauge-fixing conditions} {\small{$\varphi_{i}^{*} =\partial \Psi/\partial \varphi_{i}$}}. That is, by replacing every antifield/anti-ghost field {\small{$\varphi^{*}_{i} \in \mathcal{F}^{*}[1] $}} with the partial derivative of {\small{$\Psi$}} with respect to the corresponding field/ghost field {\small{$\varphi_{i}$}} while {\small{$\widetilde{S}|_{\Psi}$}} is the restriction of the action \swS \ to {\small{$\widetilde{X}|_{\Psi}$}}.\\
\\
However, an additional obstacle occurs in the application of the gauge-fixing procedure if the BV-extended theory \swXS \ we are considering has been obtained by applying the classical BV construction described above: indeed, the absence of ghost fields of negative ghost degree does not allow to define a gauge-fixing fermion straightforwardly. Hence, an intermediate step needs to be taken: after having performed the first extension by adding ghost/anti-ghost fields, we have to further enlarge the configuration space \swX \ via the introduction of auxiliary fields and to add extra terms at the extended action \swS, obtaining a so-called {\emph{total theory}}: 

$$\begin{array}{ccccc}
(X_{0}, S_{0}) & -----\rightarrow  &(\widetilde{X}, \widetilde{S})& -----\rightarrow  &(X_{\t}, S_{\t})\\
\mbox{\small{initial theory}} & \mbox{\tiny{+ ghosts/anti-ghosts}} & \mbox{\small{BV-extended theory}} & \mbox{\tiny{+ auxiliary fields}} & \mbox{\small{total theory}}
\end{array}
$$

In order not to change the induced BV complex, the  auxiliary fields have to be trivial from a cohomological point of view. That is, the BV complex induced by the new pair \stXS\ should be quasi-isomorphic to the one induced by the BV-extended theory \swXS. Consequently, we ask that the auxiliary fields determine {\emph{contractible pairs}} in cohomology (cf. \cite{BV2, BV1}).

\begin{definition}
\label{auxiliary pair}
An {\em auxiliary pair} is a pair of fields $(B, h)$ such that their ghost degrees and parities satisfy the following relations:
\begin{equation}
\begin{array}{lr}
\deg(h)= \deg(B) +1; \quad \quad & \quad \quad \epsilon(h)= \epsilon(B) + 1 \mbox{ (mod 2) }. 
\label{ghost degree and parity fields in trivial pair}
\end{array}
\end{equation}
\end{definition}

\begin{definition}
Given a BV-extended theory {\small{$(\widetilde{X}, \widetilde{S})$}} and an auxiliary pair {\small{$(B, h)$}}, the corresponding {\em total theory} {\small{$(X_{\t}, S_{\t})$}} has a {\em total configuration space} {\small{$X_{\t}$}} defined as the {\small{$\mathbb{Z}$}}-graded super vector space generated by {\small{$\widetilde{X}$}}, {\small{$(B, h)$}} and their corresponding antifields {\small{$(B^{*}, h^{*})$}}
$$X_{\t}:= \langle \widetilde{X}, B, h, B^*, h^* \rangle, $$
and a {\em total action} {\small{$S_{\t}$}} given by the following sum:
$$S_{\t}:= \widetilde{S} + S_{\text{aux}},\quad \quad \mbox{ where } \quad \quad S_{\text{aux}}:= h B^{*}.$$
\end{definition}

\noindent
{\emph{Note}}. By construction, the theories {\small{$(\widetilde{X}, \widetilde{S})$}} and {\small{$(X_{\t}, S_{\t})$}} satisfy similar properties, the only difference lying in the fact that {\small{$X_{\t}$}} may also contain negatively graded fields. In particular, {\small{$X_{\t}$}} also exhibits the symmetry between fields/ghost fields and antifields/anti-ghost fields content, being decomposable as
$$X_{\t} = \mathcal{E} \oplus \mathcal{E}^{*}[1],$$
where {\small{$\mathcal{E}$}} is a $\mathbb{Z}$-graded finitely generated {\small{$\mathcal{O}_{X_{0}}$}}-module. Moreover, the graded Poisson structure on {\small{$\widetilde{X}$}} can be extended to one on  {\small{$X_{\t}$}} by imposing that  
$$\{ B^{*}, B\} = \{ h^{*}, h\} =1 $$
while the value of the bracket on any other possible combination of auxiliary fields, corresponding antifields or generators in \swX\ is declared to be zero. As a consequence the total action {\small{$S_{\t}$}} solves the classical master equation on {\small{$\mathcal{O}_{X_{\t}}$}}:
$$\{ S_{\t}, S_{\t}\} =0.$$

In order to properly implement a gauge-fixing procedure further conditions have to be enforced on the gauge-fixing fermion $\Psi$. Indeed, we need that the physically relevant quantities that are computed as 
\begin{equation}
\label{eq: integrale per gauge-fix}
\int_{X_{\t}|_{\psi}} \mathcal{O} \ d(\mathrm{Vol}_{X_{\t}|_{\Psi}}),
\end{equation}
for {\small{$\mathrm{Vol}_{X_{\t}|_{\Psi}}$}} a volume form on {\small{$X_{\t}|_{\Psi}$}} and $\mathcal{O}$ a regular function on {\small{$X_{\t}|_{\Psi}$}}, do not depend on the choice of $\Psi$. This condition is satisfied by imposing that the gauge-fixed action is a {\emph{proper solution}} to the classical master equation (cf. \cite{Fior, GPS, Schw}). This condition is what dictates the number and type of auxiliary pairs that have to be introduced, which results to be determined by the level of reducibility of the BV-extended theory. In addition to the reference quoted above, some details on the gauge-fixing procedure for finite-dimensional gauge theory can also be found in \cite{articolo_cohomology}.

\subsection{The BRST cohomology complex}
\label{Subsection: The BRST cohomology complex}
Once the gauge-fixing process has been implemented, a natural question arises: how does this process effect the BV cohomology complex? Is there a residual BV symmetry on {\small{$(X_{\t}, S_{\t})|_{\Psi}$}} that induces a cohomology complex? The answer to this question is always positive, if we consider the theory {\emph{on-shell}}, that is, if we suppose that the equations of motion {\small{$ \partial S_{\t}|_{\Psi}/\partial \varphi_{i} = 0$}} are satisfied for all {\small{$\varphi_{i} \in \mathcal{E}$}} (cf. \cite{AKSZ}). However, an explicit computation shows that, if we restrict the operator {\small{$d_{S_{\t}}$}} to the algebra {\small{$\mathcal{O}_{X_{\t}|_{\Psi}}$}} of regular functions on the gauge-fixed configuration space, this might still satisfy the coboundary condition also off-shell, depending on the explicit form of the action {\small{$\widetilde{S}$}}. Hence, it is meaningful and physically relevant, to consider what we call the {\emph{BRST cohomology complex}}.  

\begin{definition}
\label{definition of BRST cohomology}
 Given a total theory {\small{$(X_{\t}, S_{\t})$}}, with {\small{$X_{\t} = \mathcal{E} \oplus \mathcal{E}^{*}[1]$}} and {\small{$S_{\t}~\in~[\mathcal{O}_{X_{\t}}]^{0}$}}, together with a gauge-fixing fermion {\small{$\Psi \in [\mathcal{O}_{\mathcal{E}}]^{-1}$}}, the induced {\em BRST complex} {\small{$\mathcal{C}_{\BRST}^{j}(X_{\t}|_{\Psi}, d_{S_{\t}}|_{\Psi})$}} is a cohomology complex with
$$\mathcal{C}_{\BRST}^{j}(X_{\t}, d_{S_{\t}})_{\Psi} := \big[ \Sym_{\mathcal{O}_{X_{0}}}(X_{\t}|_{\Psi})\big]^{j}, \quad \quad \mbox{ and } \quad \quad d_{S_{\t}}|_{\Psi} := \left.\big\{ S_{\t}, - \big\}\right|_{X_{\t}|_{\Psi}} $$
where {\small{$j \in \mathbb{Z}$}} and {\small{$X_{\t}|_{\Psi} \subset X_{\t}$}} is the Lagrangian submanifold defined by the gauge-fixing conditions {\small{$\{ \varphi_{i}^{*} = \partial \Psi/\partial \varphi_{i}\}.$}}
\end{definition}
\noindent
The appearance of this second cohomology complex underlines the richness of this mathematical setting. Even more, also the cohomology groups of this BRST complex turn out to have a physical relevance as they encode interesting physical information about the initial gauge theory \siXS, such as, once again, the space of classical observables:
$$H_{\BRST}^{0}(X_{\t}, d_{S_{\t}})_{\Psi} = \{ \mbox{Classical observables for the initial theory } (X_{0}, S_{0}) \}.$$

\subsection{Toward the BV quantization}
To summarize, up to now we described a procedure that, given an initial gauge theory \siXS, associates to it two cohomology complexes, namely the BV and the BRST complexes. Schematically, under the name of {\emph{classical BV construction}} we collect the following passages:

\begin{picture}(400, 100)
\put(-10, 77){$(X_{0}, S_{0})$}
\put(-2, 65){\footnotesize{initial}}
\put(-12, 58){\footnotesize{gauge theory}}

\put(28, 82){\tiny{$+$ ghost/anti-ghost}}
\put(28, 79){\vector(1,0){65}}

\put(97, 77){$(\widetilde{X}, \widetilde{S})$}
\put(83, 65){\footnotesize{BV-extended}}
\put(97, 58){\footnotesize{theory}}

\put(110, 55){\vector(0,-1){20}}
\put(80, 22){$\mathcal{C}_{\BVt}^{\bullet}(\widetilde{X}, d_{\widetilde{S}})$}
\put(81, 10){\footnotesize{BV complex}}

\put(129, 79){\vector(1,0){57}}
\put(131, 82){\tiny{$+$ auxiliary flds}}

\put(189, 77){$(X_{\t}, S_{\t})$}
\put(198, 65){\footnotesize{total}}
\put(196, 58){\footnotesize{theory}}

\put(153, 23){$\cong$}

\put(207, 55){\vector(0,-1){20}}
\put(182, 22){$\mathcal{C}_{\BVt}^{\bullet}(X_{\t}, d_{S_{\t}})$}
\put(183, 10){\footnotesize{BV complex}}

\put(227, 79){\vector(1,0){48}}
\put(229, 82){\tiny{gauge-fixing}}

\put(278, 77){$(X_{\t}, S_{\t})|_{\Psi}$}
\put(282, 65){\footnotesize{gauge-fixed}}
\put(292, 58){\footnotesize{theory}}

\put(302, 55){\vector(0,-1){20}}
\put(260, 22){$\mathcal{C}_{\BRST}^{\bullet}(X_{\t}, d_{S_{\t}})_{\Psi}$}
\put(272, 10){\footnotesize{BRST complex}}
\end{picture}

\noindent
As seen, both the BV as well as the BRST complex naturally appear in the context of BV-extended gauge theories. But how can all of this help us dealing with orbits of critical points for $S_{0}$? The key result in the BV formalism is the so-called BV theorem, first presented by Batalin and Vilkovisky \cite{BV1, BV2} and then formalized in the context of odd-symplectic manifolds by Schwarz \cite{Schw}. In this theorem, it is proven that the following two integrals coincide
\begin{equation}
\label{eq: BV theorem}
 \int_{X_{0}} e^{\frac{i}{\hslash}S_{0}} [d \mu] = \int_{\mathcal{L}\subseteq X_{t}} e^{\frac{i}{\hslash} S_{q}|_{\mathcal{L}}} [d \mu]    
\end{equation}
for {\small{$\mathcal{L}$}} a Lagrangian submanifold homotopically equivalent to \siX\ in the {\small{$\mathbb{Z}$}}-graded supermanifold {\small{$X_{t}$}} and {\small{$S_{q}$}} an element in the same quantum BV-cohomology class of \siS. In other words, the path integral of a gauge theory can be seen as an integral depending on two inputs: the configuration space \siX, which enters as domain of integration, and the action functional \siS, which appearing in the integrand. As proved in this theorem, the path integral depends on these two parameters quite weakly as the results depends only on the {\emph{homotopy}} class of \siX \ and on the {\emph{quantum BV-cohomology}} class of \siS. Hence, while the integral appearing on the left-hand side in \eqref{eq: BV theorem} is often not well defined, we can assign to it the value obtained by computing the integral on the right-hand side for any choice of {\small{$\mathcal{L} \in [X_{0}]_{\text{hom}}$}} and of {\small{$S_{q} \in [S_{0}]_{\text{\tiny{BV},q}}$}}. In particular, one is interested in finding a pair {\small{$(\mathcal{L}, S_{q})$}} such that {\small{$S_{q}|_{\mathcal{L}}$}} only has isolated singular points, which would then allow one to compute this integral by applying a classical stationary phase approximation for $\hslash \rightarrow 0$.\\
\\
To conclude this short review of the BV construction we only have to briefly recall the notion of {\emph{quantum BV cohomology complex}}, which relies on the {\emph{quantum master equation}} as the classical notion of BV cohomology relates to the classical master equation. To state the quantum master equation one has to first introduce the notion of {\emph{BV Laplacian}}. This definition turns out to be quite technical if one wants to state it in its full generality, that is, in the context of odd-symplectic manifolds. However, as we will consider the BV Laplacian only in the context of affine gauge theories, that is, in a context where the underline configuration space can be globally covered with one special Darboux chart, which in the general context was just a local description, in the setting we will be interested in could be taken as a global definition. For a more detailed presentation of the topic in the context of odd-symplectic manifolds in general, we refer the interested reader to \cite{Mnev_book}.

\begin{definition}
Given \swX \ a {\small{$\mathbb{Z}$}}-graded supervector space, the {\emph{Batalin-Vilkovisky Laplacian}} on \swX \ is the odd second-order operator {\small{$\Delta: \mathcal{O}_{\widetilde{X}} \rightarrow  \mathcal{O}_{\widetilde{X}}$}} with global coordinate description 
$$\Delta: = \sum_{i} \frac{\partial }{\partial \varphi_{i}} \frac{\partial }{\partial \varphi^{*}_{i}},$$
where {\small{$\varphi^{*}_{i}$}} denotes the antifield/anti-ghost field associated to the field/ghost field {\small{$\varphi_{i}$}}.
\end{definition}

\noindent
{\emph{Note:}} using the description of {\small{$\Delta$}} in local coordinates, it is possible to check that this operator satisfies {\small{$\Delta^{2} =0$}}.
 
\begin{definition}
Let {\small{$\hslash$}} denote a formal parameter. Then an element {\small{$S_{q} \in [\mathcal{O}_{\widetilde{X}}]^{0}[[-i\hslash]]$}}, that is, a formal power series in {\small{$-i\hslash$}} with coefficients in the {\small{$0$}}-degree part of the graded commutative algebra {\small{$\mathcal{O}_{\widetilde{X}}$}}, is said to satisfy the {\emph{quantum master equation}} if the following holds:
$$\frac{1}{2} \{ S_{q}, S_{q} \} - i \hslash \Delta(S_{q}) = 0.$$
\end{definition}

\begin{definition}
Let {\small{$S_{q}$}} be a solution to the quantum master equation, with {\small{$S_{q} \in [\mathcal{O}_{\widetilde{X}}]^{0}[[-i\hslash]]$}}. Then the {\emph{quantum BV cohomology complex}} is defined to be a cohomology complex with space of cochains and coboundary operator given, respectively, by:
$$\mathcal{C}_{\text{\tiny{BV,q}}}^{\bullet}(\mathcal{O}_{\widetilde{X}}[[-i\hslash]], \delta_{q}): = [\mathcal{O}_{\widetilde{X}}]^{\bullet}[[-i\hslash]], \quad \delta_{q}:= \frac{1}{2} \{ S_{q}, - \} - i \hslash \Delta( - ). $$
\end{definition}

\noindent
{\emph{Note}}: the fact that {\small{$\delta_{q}$}} as defined above determines a coboundary operator for the quantum BV cohomology complex is a direct consequence of {\small{$S_{q}$}} being a solution to the quantum master equation.\\
\\
\noindent
The aim of this article is to present in details the BV construction, from the introduction of ghost fields to the resolution of the quantum master equation, for gauge theories induced by finite spectral triples and to prove how this construction can be described entirely in the language of noncommutative geometry. 

\section{From spectral triples to gauge theories: notions and constructions}
\label{Sect: intro to NCG and gauge theories}
\noindent
In the quest for a mathematical formalism where to describe all the four fundamental interactions appearing in nature, geometry has proved several times to provide an extremely successful setting. Among the proposed approaches, string theory prescribes a micro-structure of the 4-dimensional spacetime where, over each point, one can find extra 6-dimensions rolled up in the shape of a Calabi-Yau manifold. Contrary, noncommutative geometry \cite{Connes_Intro_NCG, Connes_libro_NCG}  does not add extra dimensions but suggests to make the space slightly noncommutative: specifically, one considers models where, over each point of the underlying 4-dimensional spacetime, there is a finite-dimensional noncommutative space \cite{CC96,Connes_Stand_Model,Connes2}. These noncommutative spaces are called {\emph{spectral triples}} and, interestingly, the {\emph{finite}} spectral triples in particular account for the particle content of the physical models. On the other hand, when dropping the finite-dimensionality condition, a spectral triple can be viewed as an extension of the classical concept of compact Riemannian spin manifold \cite{Connes_Reconstruction_Theorem}. \\
\\Next to its relevance from a purely mathematical point of view, very remarkably this new geometrical object has also a deep relation to quantum field theory. Indeed, given a spectral triple, finite or infinite dimensional, commutative or not, it always encodes a {\emph{gauge theory}}, that is, a physical theory invariant under the action of a (local) group of symmetries. This is a classical result in the field, which we recall in Proposition \ref{Prop: gauge theory from spectral triple}. Although stated only in the finite-dimensional context, this property also holds for eventually infinite-dimensional spectral triples.\\
\\
This deep relation between spectral triples and gauge theories is disclosed by several results, among which one can find the description of the full Standard Model of particles as product of a canonical and a finite spectral triple: while the canonical (infinite-dimensional) spectral triple represents the spacetime and is responsible for the gravitational part of the theory, it is the finite spectral triple that accounts for the quantum interactions, whose gauge group is {\small{$\mathrm{SU}(3) \times \mathrm{SU}(2)\times \mathrm{U}(1)$}} \cite{CCM07}. Recently, many results have been obtained in the direction of going beyond the Standard Model from a completely mathematical perspective, using the notion of spectral triple as leading concept \cite{Pati_Salam1, Pati_Salam2, Grand_symm}.  All these results further confirm the importance of the role played by finite spectral triples in the context of a mathematical formalization of quantum field theory and the relevance of describing the BV construction for gauge theories induced by this class of triples.\\
\\ 
In this section, we briefly recall the relevant notions regarding spectral triples and their structure, which will play a key role in our construction. 
\begin{definition}
  \label{def spectral triple}
Let {\small{$\mathcal{A}$}} be an involutive unital algebra, {\small{$\mathcal{H}$}} a separable Hilbert space and {\small{$D: \mathcal{H} \rightarrow \mathcal{H}$}} a self-adjoint operator on {\small{$\mathcal{H}$}} with a dense domain. Then the triple {\small{$(\mathcal{A}, \mathcal{H}, D)$}} is a {\em spectral triple} if {\small{$\mathcal{A}$}} can be faithfully represented as bounded operators on {\small{$\mathcal{H}$}} and the operator {\small{$D$}} has a compact resolvent and bounded commutators $[D, a]$ for each $a \in$ {\small{$\mathcal{A}$}}. Moreover, a spectral triple {\small{$(\mathcal{A}, \mathcal{H}, D)$}} is called {\em finite} if the Hilbert space {\small{$\mathcal{H}$}} and hence the algebra {\small{$\mathcal{A}$}} are finite dimensional.
  \end{definition}

\noindent
{\emph{Note:}} By Wedderburn theorem (cf. for example \cite{walter}) one deduces that, in a finite spectral triple {\small{$(\mathcal{A}, \mathcal{H}, D)$}}, the algebra {\small{$\mathcal{A}$}} is necessarily a direct sum of matrix algebras, i.e.
$$ \mathcal{A}\simeq \bigoplus_{i=1}^{k}M_{n_{i}}(\mathbb{C})
 %,  \quad \quad\quad n_{1}, \dots, n_{k} \in \mathbb{N}.
 $$
 for positive integers $n_{1}, \dots, n_{k}$. Moreover, again in the finite-dimensional setting, any self-adjoint operator {\small{$D$}} on {\small{$\mathcal{H}$}} would automatically satisfy all additional conditions  on its commutators and its resolvent entering the definition of a spectral triple.\\
 \\
Given a spectral triple {\small{$(\mathcal{A}, \mathcal{H}, D)$}}, its structure could be further enriched by the introduction of a {\emph{real structure}} $J$, first suggested in \cite{Real_structure}, and a {\emph{grading map}} $\gamma$ satisfying certain conditions. Because in what follows the notion of real structure as well as the one of {\emph{KO-dimension}} will play a relevant and physically motivated role, we briefly recall their definitions.

  \begin{definition}
An {\em even spectral triple} {\small{$(\mathcal{A}, \mathcal{H}, D)$}} is one in which the Hilbert space {\small{$\mathcal{H}$}} is endowed with a $\mathbb{Z}/2$-grading $\gamma$, given by a linear map $\gamma: \mathcal{H} \to \mathcal{H}$, such that 
$$   D\gamma = -\gamma D  \quad \text{and} \quad   \gamma a = a \gamma $$
for all $a \in \mathcal{A}$.
  \end{definition}
  
\begin{definition}
\label{def real structure}
A {\em real structure of KO-dimension $n$ (mod 8)} on a spectral triple $(\mathcal{A}, \mathcal{H}, D)$ is  an anti-linear isometry $ J: \mathcal{H}\rightarrow \mathcal{H}$ that satisfies 
$$
J^2 = \epsilon  \quad \text{and} \quad  J D = \epsilon^{\prime} DJ
$$
together with the condition $$ J \gamma = \epsilon^{\prime \prime} \gamma J $$ in the even case. 
%$$J^2 = \epsilon, \quad \quad \quad \quad J D = \epsilon^{\prime} DJ,  \quad \quad \quad \quad J \gamma = \epsilon^{\prime \prime} \gamma J, \ \mbox{(even case)}.$$
The constants $\epsilon$, $\epsilon^{\prime}$ and $\epsilon^{\prime \prime}$  depend on the KO-dimension $n$ (mod 8) as follows:
$$
\begin{array}{|c|cccccccc|}
 \hline
 {n} & \phantom{-}0 & \phantom{-}1 & \phantom{-}2 & \phantom{-}3 & \phantom{-}4 & \phantom{-}5 & \phantom{-}6 & \phantom{-}7\\
 \hline
 \epsilon & \phantom{-}1 & \phantom{-}1 & -1 & -1 & -1 & -1 & \phantom{-}1 & \phantom{-}1\\
 \epsilon^{\prime} & \phantom{-}1 & -1 & \phantom{-} 1 & \phantom{-}1 & \phantom{-}1 & -1 & \phantom{-}1 & \phantom{-}1 \\ 
 \epsilon^{\prime \prime} & \phantom{-}1 & & -1 & & \phantom{-}1 & & -1 &\\
 \hline
\end{array}
$$ \vspace{2mm}\\
Moreover, we require for all $a,b\in \mathcal{A}$ that:
\begin{itemize}
\item[-] the action of $\mathcal{A}$ satisfies the {\em commutation rule}: $\big[a,  J b^{*}J^{-1}\big]=0$;
\item[-] the operator $D$ fulfills the {\em first-order condition}: $[ [ D, a ], J b^{*} J^{-1}] = 0$.
\end{itemize} 
When a spectral triple $(\mathcal{A}, \mathcal{H}, D)$ is endowed with such a real structure $J$, it is said to be a {\em real spectral triple} and denoted by $(\mathcal{A}, \mathcal{H}, D, J)$.
 \end{definition}
 
While the standard definition of a real spectral triple applies both to the finite- and the infinite-dimensional case, when we restrict to consider the finite setting this notion can be relaxed, allowing the operator $D$ not to fully commute or anti-commute with the isometry $J$. This gives what we call a {\emph{mixed KO-dimension}} situation. 

\begin{definition}
\label{Def: mixed KO-dim}
For $(\mathcal{A}, \mathcal{H}, D)$ a finite spectral triple and $J$ an anti-linear isometry on $\mathcal{H}$, we say that $(\mathcal{A}, \mathcal{H}, D, J)$ defines a {\em real spectral triple with mixed KO-dimension} if $J$ satisfies  
$$J^{2}= \pm \mathrm{Id} \qquad \mbox{ and } \qquad [a, Jb^{*}J^{-1}] =0$$
for $ a, b \in \mathcal{A}$, the operator $D$ can be seen as a sum $D=D_1 + D_2$ of two non-zero self-adjoint operators $D_{1}$, $D_{2}$, which {\em anti}-commutes and commutes, respectively, with $J$:
$$JD_1 = - D_1 J  \quad \mbox{ and }  \quad JD_2 = D_2J, $$
and, finally, the first-order condition holds:
$$[[D, a], Jb^{*}J^{-1}] = 0, \qquad ( a, b \in \mathcal{A}).$$
\end{definition}
\noindent
In order to be able to associate to any given spectral triple a naturally induced gauge theory, we still have to introduce a notion of action in the context of noncommutative geometry. There are two definitions of action functionals associated to a spectral triple: while the {\emph{spectral action}}, introduced in \cite{CC97}, is the only natural additive spectral invariant of noncommutative geometry, the {\emph{fermionic action}} \cite{CC96,CCM07} is defined on a subspace of the Hilbert space and can depend on the real structure.
  
\begin{definition}
\label{def spectral action}
Given a spectral triple {\small{$(\mathcal{A}, \mathcal{H}, D)$}} and a positive, even real-valued function $f$, the associated {\em spectral action} {\small{$S_{0}$}} is a functional {\small{$S_{0}: [\Omega^{1}_{D}(\mathcal{A})]_{\text{s.a.}} \rightarrow \mathbb{R}$}} defined to be 
$$S_{0} [\varphi]: = \tr  f\left((D + \varphi)/\Lambda\right),
$$
for $\Lambda$ a real cutoff parameter and $f$ a function with a sufficiently rapid decay at infinity to make $f\left((D + \varphi)/\Lambda\right)$ a traceclass operator. 
\end{definition}

\noindent
{\emph{Notation}:} by {\small{$[\Omega^{1}_{D}(\mathcal{A})]_{\text{s.a.}}$}} we denote the space of inner fluctuations of the operator $D$:
$$[\Omega^{1}_{D}(\mathcal{A})]_{\text{s.a.}}:= \Big\{ \varphi = \sum_{j} a_{j} \big[ D, b_{j}\big]:  \varphi^{*} = \varphi, \ a_{j}, b_{j} \in \mathcal{A}  \Big\}, $$
where we only consider finite sums and $*$ denotes the involution structure on {\small{$\mathcal{A}$}}.\\
\\
\noindent
{\emph{Note:}} In the finite-dimensional setting, a family of suitable functions $f$ is given by the polynomials in {\small{$\mathbb{R}[x]$}}. Moreover, in this context, the trace operator coincides with the standard trace of matrices.

\begin{definition}
\label{def: fermionic action}
For a spectral triple {\small{$(\mathcal{A}, \mathcal{H}, D)$}}  (real spectral triple {\small{$(\mathcal{A}, \mathcal{H}, D, J)$}}) the {\em fermionic action}  on a subspace {\small{$\mathcal{H}^{\prime} \subseteq \mathcal{H}$}} is given by
$$S_{\ferm}[\varphi] 
:= \frac{1}{2} \langle \varphi , D \varphi \rangle 
\quad \Big(S_{\ferm}[\varphi] : = \frac{1}{2} \langle J\varphi , D \varphi \rangle  \Big)
$$
for $ \varphi \in \mathcal{H}^{\prime}$, where {\small{$\langle \ , \ \rangle$}} denotes the inner product structure on {\small{$\mathcal{H}$}}.
 \end{definition}
\noindent
We conclude by recalling the classical result (cf. \cite{marcolli}) on how to construct, for a finite spectral triple {\small{$(\mathcal{A}, \mathcal{H}, D)$}}, the induced gauge theory {\small{$(X_{0}, S_{0})$}}, with gauge group {\small{$\mathcal{G}$}}.

\begin{prop}
\label{Prop: gauge theory from spectral triple}
Given {\small{$(\mathcal{A}, \mathcal{H}, D)$}} a finite spectral triple, let {\small{$X_{0}$}} denote the space of inner fluctuations for the operator {\small{$D$}}, {\small{$X_{0} := [\Omega^{1}_{D}(\mathcal{A})]^{*}$}}, {\small{$\mathcal{G}$}} be the space of unitary elements of $\mathcal{A}$ 
$$ \mathcal{G} : =\mathcal{U}(\mathcal{A}) = \{ u \in \mathcal{A}: u u^{*}= u^{*}u =1 \},$$
which acts on $X_0$ via the map 
$$
F:\mathcal{G}  \times  X_{0}  \to X_{0},\qquad
 (u, M) \mapsto  u M u^{*} + u[D, u^{*}], 
$$
and, finally, let $S_0$ be the spectral action defined on {\small{$X_{0}$}} as
$$S_{0} [\varphi]: = \tr \big( f(D + \varphi)\big) ,$$ 
for any {\small{$\varphi\in X_0$}} and some {\small{$f\in \mathbb{R}[x]$}}. Then the pair {\small{$(X_{0}, S_{0})$}} is a gauge theory with gauge group {\small{$\mathcal{G}$}}.
\end{prop}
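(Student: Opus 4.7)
The plan is to check, in order, that the data $(X_0, S_0, \mathcal{G}, F)$ satisfies each clause of Definition \ref{Def: gauge theory}. First I would verify that $X_0 = [\Omega^1_D(\mathcal{A})]^*$ is a real vector space (which is clear, since the self-adjoint elements of the $\mathcal{A}$-bimodule $\Omega^1_D(\mathcal{A})$ form a real subspace) and that $S_0$ is real-valued: since $\varphi = \varphi^*$ and $D = D^*$ the operator $D+\varphi$ is self-adjoint, so for any $f\in\mathbb{R}[x]$ the operator $f(D+\varphi)$ is again self-adjoint and finite-dimensional, hence has real trace.

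Next I would check that $F$ is well-defined, i.e.\ that $F(u,\varphi) = u\varphi u^* + u[D, u^*]$ actually lies in $X_0$. Self-adjointness is a direct computation using $u^*u = 1$; indeed $(u[D,u^*])^* = [D,u^*]^*u^* = -[D,u]u^* = u[D,u^*]$ after expanding, and $(u\varphi u^*)^* = u\varphi u^*$ is immediate. To see that $u\varphi u^* \in \Omega^1_D(\mathcal{A})$, I would write $\varphi = \sum_j a_j[D,b_j]$ and apply the Leibniz rule $[D,b_j u^*] = [D,b_j]u^* + b_j[D,u^*]$ to rewrite $u\varphi u^* = \sum_j ua_j[D, b_j u^*] - \sum_j ua_jb_j[D, u^*]$, which visibly belongs to $\Omega^1_D(\mathcal{A})$. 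Then I would verify the group action axioms: $F(1,\varphi)=\varphi$ is trivial, and for the cocycle identity I would expand
\begin{equation*}
F(u,F(v,\varphi)) = (uv)\varphi(uv)^* + uv[D,v^*]u^* + u[D,u^*],
\end{equation*}
and compare with $F(uv,\varphi) = (uv)\varphi(uv)^* + uv[D, v^*u^*]$, using the Leibniz rule $[D, v^*u^*] = [D,v^*]u^* + v^*[D,u^*]$ together with $uvv^* = u$ to match the two expressions.

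The heart of the argument, and the main computational step, is the gauge invariance $S_0(F(u,\varphi)) = S_0(\varphi)$. The key identity to establish is
\begin{equation*}
D + F(u,\varphi) = u(D+\varphi)u^*,
\end{equation*}
which follows from $u[D,u^*] = uDu^* - D$ and $uu^* = 1$, so the sum telescopes to $u\varphi u^* + uDu^*$. Once this is in hand, the rest is formal: since $f$ is a polynomial, $f(u(D+\varphi)u^*) = u\,f(D+\varphi)\,u^*$, and then cyclicity of the matrix trace together with $u^*u = 1$ yields $\tr f(u(D+\varphi)u^*) = \tr f(D+\varphi)$, i.e.\ $S_0(F(u,\varphi)) = S_0(\varphi)$.

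I do not anticipate a genuine obstacle: each step is a direct algebraic manipulation in the finite-dimensional matrix algebra $\mathcal{A}$, and the only point that requires a little care is checking that $u\varphi u^*$ can be rewritten as a sum of elements of the form $a[D,b]$ so as to stay inside $\Omega^1_D(\mathcal{A})$, which the Leibniz trick above handles. The conceptual content of the proposition is concentrated in the covariance identity $D + F(u,\varphi) = u(D+\varphi)u^*$, which is precisely what makes the inner-fluctuated Dirac operator transform by conjugation under the gauge group $\mathcal{U}(\mathcal{A})$.
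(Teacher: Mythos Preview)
Your proposal is correct and complete: the verification of each clause of Definition~\ref{Def: gauge theory} is carried out accurately, and the key identity $D + F(u,\varphi) = u(D+\varphi)u^{*}$ together with cyclicity of the trace is exactly the right way to obtain gauge invariance of the spectral action. The paper itself does not supply a proof of this proposition; it is stated as a classical result with a reference to the literature, so there is no paper-proof to compare against, but your argument is precisely the standard one.
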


\noindent
\subsection{A $\mathrm{U}(n)$-model.} Before proceeding with the construction, let us introduce an example. This will provide a model on which we could explicitly describe results that might appear quite abstract otherwise. Moreover, this running example will ensure the existence of a BV spectral triple associated to an initial spectral triple as described in the next section as well as of other structures we will introduce in the continuation of this article. For completeness, while this model allows to already cover a large class of physical theories, it is still not the most general class of examples one could consider: indeed, as previously recalled, Wedderburn's theorem would allow the algebra {\small{$\mathcal{A}$}} in a spectral triple to be a finite direct sum of different matrix algebras.\\
\\
Let \isp\ denote the following finite spectral triple: 
$$(\mathcal{A}_{0}, \mathcal{H}_{0}, D_{0}): =  (M_{n}(\mathbb{C}), \mathbb{C}^{n}, D_{0}), $$
for {\small{$n \geqslant 2$}}, where {\small{$D_{0}$}} is a self-adjoint complex {\small{$n \times n$}}-matrix. By applying the classical construction recalled in Proposition \ref{Prop: gauge theory from spectral triple}, the above spectral triple yields an initial gauge theory \siXS\
with initial configuration space \siX, initial action functional \siS\ and gauge group {\small{$\mathcal{G}$}} respectively defined as:
$$X_0=\{ M \in M_n(\mathbb C): M^{*} = M \},
\quad  S_0[M] = \tr f (M)
 \quad  \text{and} \quad  \mathcal G = \mathrm{U}(n),$$
where $*$ denotes the usual adjoint for a matrix, $f$ is a polynomial in {\small{$\mathbb{R}[x]$}} and {\small{$\mathcal G$}} acts on {\small{$X_0$}} via the adjoint action (an explicit computation of \siX \ can be found in \cite{walter}). \\
\\
{\emph{Note:}} In the rest of the article, when speaking about {\emph{$\mathrm{U}(n)$-gauge theories}}, we will always refer to gauge theories of this kind, that is, affine, {\small{$\mathrm{U}(n)-$}}invariant gauge theories.\\
\\
\noindent 
By fixing as a basis for \siX \ the one given by the generalized Gell-Mann matrices {\small{$\left\lbrace\sigma_{a}\right\rbrace_{a=1}^{n^{2}-1}$}} with the identity matrix {\small{$\sigma_{n^2}: = {\mathrm{Id}}$}}, \siX\ appears to be isomorphic to an $n^{2}$-dimensional real affine space, where the coordinates {\small{$x_{1}, \dots, x_{n^{2}}$}} are the dual basis of {\small{$\left\lbrace\sigma_{a}\right\rbrace_{a=1}^{n^{2}}$}}:
$$X_{0} \cong \mathbb{A}^{n^{2}}_{\mathbb{R}}:= \left\lbrace (x_{1}, \dots, x_{n^{2}}) \in \mathbb{R}^{n^{2}}\right\rbrace.$$
\noindent 
As a consequence, the algebra {\small{$\mathcal{O}_{X_{0}}$}} of real-valued regular functions on \siX\ is simply the coordinate ring of the affine space {\small{$\mathbb{A}^{n^{2}}_{\mathbb{R}}$}}, that is, the algebra of polynomials over {\small{$\mathbb{R}$}} in {\small{$n^{2}$}} coordinates:
$$\mathcal{O}_{X_{0}} \cong \mathbb{R}[x_{1}, \dots, x_{n^{2}}].$$
For what concerns the initial action \siS, it is an element in {\small{$\mathcal{O}_{X_{0}}$}} such that it is invariant under the adjoint action of the gauge group {\small{$\mathcal{G} = \mathrm{U}(n)$}}. By applying classical results \cite{Casimir2, Casimir1} in invariant theory for Lie algebras and Lie groups, one can conclude that the action functional \siS\ should be a polynomial in the Casimir operators, whose order goes from {\small{$k=2$}}, with the quadratic Casimir, to {\small{$k=n$}}. In particular, if one supposes the action 
functional \siS \ to be a function only of the quadratic Casimir, it would then have the following form:
 
\begin{equation}
S_{0} = \sum_{k=0}^{r} \mbox{ }(x_{1}^2 + x_{2}^2 + \dots + x_{n^2-1}^2)^k g_{k}(x_{n^2}),
\label{eq: S_0 quadratic Casimir}
\end{equation}
for {\small{$r \in \mathbb{Z}_{\geqslant 0}$}} and {\small{$g_k(x_{n^2})\in \mathbb{R}[x_{n^2}]$}}. As the variable {\small{$x_{n^2}$}} is the dual variable to the identity matrix in the basis, enforcing that the action functional \siS\ is invariant under the action of the gauge group {\small{$\mathrm{U}(n)$}} does not impose any condition on it, leaving {\small{$x_{n^2}$}} to be a free variable in \siS. \\
\\
\noindent
The remainder of this article will be devoted to performing the BV construction on this particular class of gauge theories, which are defined starting from a finite spectral triple. As mentioned in Section \ref{Sect: intro to BV}, the reason for applying the BV formalism is the presence of orbits of critical points for the action functional \siS. Even though an explicit study of the critical locus of \siS \ would require taking into account the function $f$ used to define it, one can still make some preliminary observations. \\
\\
{\emph{The critical locus of \siS.}} Under the mild condition of asking the polynomial {\small{$f$}} not to have terms of order lower or equal to {\small{$1$}}, one can verify that the origin is always a critical point for \siS. However, the origin is also the only orbit described by the action of the gauge group {\small{$\mathrm{U}(n)$}} on \siX \ which is given by just one point, being then isolated, even though often not {\emph{regular}}. Moreover, in the generic case, the origin is not the only critical point of \siS. In particular, one can see this for the class of action functionals \siS \
 which depend only on the quadratic Casimir and that appear in Equation \eqref{eq: S_0 quadratic Casimir}. With a direct computation one can check that the line defined by the equations {\small{$x_{1} = \dots = x_{n^2- 1} =0$}} is always a singular line for \siS, if one assumes {\small{$g_{0}=0$}}, that is, if {\small{$f$}} does not have terms of order lower or equal to {\small{$1$}}. This allows to conclude that in most of the cases \siS \ has a not isolated and regular critical locus. 

\section{The BV extension for finite spectral triples}
\label{Sect: BV spectral triple}
First stated in \cite{articolo_BV} and then refined in \cite{articolo_ICMP2022}, the notion of {\emph{BV spectral triple}} was introduced with the aim of describing in the language of spectral triples the very rich structure associated by the BV formalism to a gauge theory. However, while being able to translate the structure of a BV-extended theory \swXS\ in the one of a BV spectral triple \BVsp\ had already its relevance, as it allowed to relate two different mathematical languages and structures, still a point was missed: to describe how to perform the BV construction entirely at the level of spectral triples. Answering this last open question is the purpose of this section.\\
\\
To reach this goal, we need to be able to explain how, given as initial spectral triple \isp \ any finite spectral triple on the algebra {\small{$\mathcal{A}_{0}\cong M_{n}(\mathbb{C})$}}, one can extract from it all the needed information to construct an associated BV spectral triple \BVsp. This BV-extension process
$$\begin{array}{ccc}
(\mathcal{A}_{0}, \mathcal{H}_{0}, D_{0}) & ---------\rightarrow  & (\mathcal{A}_{\BV}, \mathcal{H}_{\BV}, D_{\BV}, J_{\BV}) \vspace{-6mm}\\
& \mbox{\tiny{BV-extension process}} & \vspace{3mm}\\
\mbox{\small{initial spectral}} && \mbox{\small{BV spectral}} \vspace{-1mm}\\
\mbox{\small{triple}} & & \mbox{\small{triple}}
\end{array}
$$
 \noindent
 will be presented for any {\small{$\mathrm{U}(n)$}}-affine gauge theory induced by any finite spectral triple on a single matrix algebra, for {\small{$n \geqslant 2$}}. \\
\\
As already hinted by the notation chosen to describe a BV spectral triple, the BV-extension process will force the introduction of a real structure. This can be explained by making a comparison between how, given an initial spectral triple, one associates to it an initial gauge theory \siXS \ and, on the other hand, how a BV spectral triple encodes all the information regarding the ghost spectrum and the extended action. Indeed, the initial configuration space \siX\ is fully determined by the algebra \AI\ as the initial fields are the self-adjoint $1$-forms {\small{$[\Omega^{1}(\mathcal{A}_{0})]_{\text{s.a.}}$}} on the algebra \AI\ while the action \siS\ is given as the spectral action defined by a polynomial {\small{$f \in \mathbb{R}[x]$}}. Interestingly, for the BV case the ghost sector and the BV action will be related to the structure of the spectral triple in the element of the Hilbert space \HBV \ and its associated fermionic action {\small{$S_{\ferm}$}}.\\
\\
 Reasons for these differences can be found in the following observations: first of all, differently than for \siS, in {\small{$S_{\BV}:=\widetilde{S}- S_{0}$}} we do have the appearance of Grassmannian variables. It is precisely the presence of Grassmannian variables what points toward the choice of the fermionic action as the right notion to encode the structure of a BV action \SBV. Then, considering that the fermionic action is defined on (a subspace of) the Hilbert space \HBV, it is natural to expect that ghost/anti-ghost fields enter the BV spectral triple as the components of \HBV. Finally, while in the physical context of the BV-extended theory \swXS\ the bosonic fields are modeled by {\emph{real}} variables, spectral triples are usually defined in a {\emph{complex}} context. Hence, to go from complex to real, we have to introduce a real structure \JBV\ on \HBV \ while the similar passage, from complex to real, for the initial configuration space \siX \ was performed by taking into consideration the $*$-structure on the algebra \AI.\\
 \\
Consequently of what just noticed, the BV-extension process at the level of spectral triples is expected to be a construction of the following type:

$$\begin{array}{ccc}
(\mathcal{H}_{0}, D_{0}) & ---------\rightarrow  & (\mathcal{H}_{\BV}, D_{\BV}, J_{\BV}) \vspace{-6mm}\\
& \mbox{\tiny{BV-extension process}} & \vspace{3mm}\\
\mbox{\small{initial Hilbert sp.}} && \mbox{\small{BV Hilbert sp, BV operator}} \vspace{-1mm}\\
\mbox{\small{and operator}} & & \mbox{\small{and real structure}}
\end{array}
$$
\noindent
where the algebra \ABV \ will not play an explicit role as we will prove that {\small{$\A_{\BV} \cong \A_{0}$}} (cf. Lemma \ref{Lemma: Max algebra}). \\
\\
\noindent 
The last question to answer would then be where in the initial spectral triple we can find the information needed to determine on one hand the number/type of ghost fields in the ghost sector and, on the other, the terms to introduce in the BV-action. The aim of this section is to prove that the BV spectral triple is fully determined by the {\emph{Lie algebra {\small{$\mathfrak{u}(n)$}}}}, in particular by its generators and its structure constants. This implies that all the information needed to determine a BV spectral triple associated to our initial spectral triple are collected in the algebra \AI \ and in its {\emph{algebra of unitaries}}. \\
\\
\noindent
Before proving the main theorem of this section, we will briefly recall the notion of {\emph{BV spectral triple}}.

\begin{definition}
Let \isp\ be a finite spectral triple with induced gauge theory \siXS\ and let \BVsp\ denote an (eventually mixed KO-dimensional) real spectral triple with fermionic action {\small{$S_{\text{ferm}}$}}
$$S_{\text{ferm}}: \mathcal{H}_{\BV,f} \rightarrow \mathcal{H}_{\BV, f} \quad \quad \varphi \mapsto \frac{1}{2}\langle J_{\BV}(\varphi), D_{\BV} \varphi \rangle $$ 
with domain of definition the effective BV Hilbert space {\small{$\mathcal{H}_{\BV, f}\subset \mathcal{H}_{\BV}$}}. Then \BVsp\ 
is a {\emph{BV spectral triple}} associated to \isp\ if {\small{$\mathcal{H}_{\BV, f}$}} can be decomposed as 
$$\mathcal{H}_{\BV, f} \cong \mathcal{Q}_{f}^{*}[1] \oplus \mathcal{Q}_{f},$$
for {\small{$\mathcal{Q}_{f}$}} a {\small{$\mathbb{Z}$}}-graded vector space with shifted dual {\small{$\mathcal{Q}_{f}^{*}[1]$}} and the pair \swXS\ for
$$ \widetilde{X} := ( \mathcal{Q}_{f}^{*}[1] + X_{0}^{*} [1] ) \oplus (X_{0} + \mathcal{Q}_{f}) \quad \mbox{and} \quad  \widetilde{S}[\psi^{*}, \varphi^{*}, \varphi, \psi] := S_{0}[\varphi] + \frac{1}{2}S_{\text{ferm}}[\psi^{*},\psi] $$
is a BV-extended theory associated to \siXS.
\end{definition}
\noindent
{\emph{Note:}} all the variables appearing in {\small{$\mathcal{H}_{\BV, f}$}} as well as in \DBV\ have to be treated as graded variables whose bosonic/fermionic parity coincides with the parity of their degree. This is consistent with the appearance of bosonic/fermionic variables in the action \swS.\\
\\
\noindent
We now define the terms \ABV, \HBV, \DBV\ and \JBV \ separately, postponing to Theorem \ref{Thm: BV spectral triple, all n} the proof that they together determine a BV spectral triple for our {\small{$\mathrm{U}(n)-$}}gauge theories.\\
\\
\noindent
{\bf The algebra {\small{$\mathcal{A}_{\BV}$}}.}
For what concerns the algebra \ABV, as briefly mentioned earlier, it does not play a relevant role and it will be defined only a posteriori: after having determined all the other components of the BV spectral triple, \ABV \ will be taken to be the maximal unital $*$-algebra completing the triple (\HBV, \DBV, \JBV) to a spectral triple satisfying both the commutation rule and the first order condition stated in Definition \ref{def real structure}. By performing a direct computation, in Lemma \ref{Lemma: Max algebra} we will prove that the following identity for our model:
$$\mathcal{A}_{\BV} = \mathcal{A}_{0} = M_{n}(\mathbb{C}).$$
Alternatively, one can reach the same conclusion by applying a method based on the notion of {\emph{Krajewski diagram}} \cite{Krajewski, walter}. The fact that the algebra \AI \ does not change in the BV-extension procedure reflects the fact that \AI \ contains the information about the space of initial fields, which appear both in the initial as well as in the BV-extended theory and that are clearly left unchanged by the BV extension procedure.\\
\\
\noindent
{\bf The Hilbert space {\small{$\mathcal{H}_{\BV}$}}.}
Define the Hilbert space {\HBV} as 
$$\mathcal{H}_{\BV} := \mathcal{Q}^{*}[1] \oplus \mathcal{Q} \quad \mbox{for} \quad \mathcal{Q} = [M_{n}(\mathbb{C})]_{0} \oplus [M_{n}(\mathbb{C})]_{1}$$
where the subscripts {\small{$0, 1$}} appearing in the definition of {\small{$\mathcal{Q}$}} are inserted to keep track of the degree assigned to each {\small{$M_{n}(\mathbb{C})$}}-summand. Its inner product structure is given as usual by the Hilbert--Schmidt inner product on each summand {\small{$M_{n}(\mathbb{C})$}},
$$ \langle \phi, \phi'\rangle = \tr(\phi^{*} \phi'), $$
for {\small{$\phi, \phi' \in M_{n}(\mathbb{C})$}}. Within \HBV, the {\emph{effective Hilbert subspace}} {\small{$\mathcal{H}_{\BV, f}~\subset~\mathcal{H}_{\BV}$}} is the domain of the fermionic action {\small{$S_{\ferm}$}} determined by \DBV:
$$\mathcal{H}_{\BV, f} = \mathcal{Q}_{f}^{*}[1] \oplus \mathcal{Q}_{f} \quad \mbox{for} \quad \mathcal{Q}_{f} = [i\mathfrak{su}(n)]_{0} \oplus [i\mathfrak{su}(n)]_{1},$$
More explicitly, if one fixes a basis for {\small{$i\mathfrak{su}(n)$}} the one given by multiplying by {\small{$i$}} the generalized Gell-Mann matrices {\small{$\{ \sigma_{1}, \dots, \sigma_{n^2 -1}\}$}} together with {\small{$i\sigma_{n^2}:= i \cdot \mathrm{Id}$}}, then a generic vector {\small{$ \varphi \in \mathcal{Q}_{f}$}} has the following form
$$\varphi = \left[ 
\begin{matrix}
[x_{1} & \dots & x_{n^2-1} & 0]_{0} \quad  [C_{1} & \dots & C_{n^2-1} & 0]_{1}
\end{matrix}
\right]^{T} \vspace{1mm}$$
where the notation {\small{$\{x_{p}\}_{p= 1, \dots, n^2-1}$}} is used for the real/bosonic initial field and {\small{$\{C_{q}\}_{q= 1, \dots, n^2-1}$}} are the Grassmannian/fermionic ghost fields of ghost degree $1$. In other words, {\small{$i\mathfrak{su}(n)$}} describes the hermitian traceless part of {\small{$[M_{n}(\C)]_{0}$}} and {\small{$[M_{n}(\C)]_{1}$}}.\\
\\
The reason why the space {\small{$\mathcal{H}_{\BV, f} \subseteq \mathcal{H}_{\BV}$}} just defined is the correct domain for {\small{$S_{\ferm}$}} will be clearer after having introduced the real structure \JBV\ on \HBV\ and the operator \DBV. When determining the subspace {\small{$\mathcal{H}_{\BV, f}$}}, the requirement to fulfill is that none of the terms contributed by the fermionic action cancel each other. Hence, as it will be discussed with more details in Remark \ref{Remark: effective HBV}, the domain {\small{$\mathcal{H}_{\BV, f}$}} of the fermionic action {\small{$S_{\ferm}$}} is taken to be either the subspace preserved by the real structure (up to a product by $i$) or its complementary, depending on the operator \DBV \ anti-commuting or commuting with \JBV, respectively.\\
\\
To conclude, the BV-extension procedure performs the following changes at the level of the BV-Hilbert space:

$$\begin{array}{ccc}
\mathcal{H}_{0} = \mathbb{C}^{n}& ---------\rightarrow  & \mathcal{H}_{\BV} = \mathcal{Q} \oplus \mathcal{Q}^{*}[1] \\
\mbox{\small{initial Hilbert space}} & \mbox{\tiny{+ ghost/anti-ghost fields}} & \mbox{\small{BV-Hilbert space}}
\end{array}
$$
for
$$\mathcal{Q} = [M_{n}(\mathbb{C})]_{0} \oplus [M_{n}(\mathbb{C})]_{1}$$ 
{\emph{Note:}} the ghost sector, that is, the number and type of ghost fields to introduce, is fully determined by the Lie algebra {\small{$\mathfrak{u}(n) = \mathfrak{u}(\mathcal{A}_{0})$}} defined by the unitary elements of \AI: indeed, the ghost fields {\small{$\{C_{q}\}_{q= 1, \dots, n^2-1}$}} can be viewed as the dual basis to the generators {\small{$\{ i\cdot \sigma_{1}, \dots, i\cdot \sigma_{n^2 -1}\}$}} of {\small{$i\cdot \mathfrak{su}(n)$}}, which is obtained from {\small{$\mathfrak{u}(n) = \mathfrak{u}(\mathcal{A}_{0})$}} by quotienting out its center.\\
\\
\noindent
{\bf The operator {\small{$D_{\BV}$}}.}
The self-adjoint linear operator $D_{\BV}$ acting on the Hilbert space $\mathcal{H}_{\BV}$ has the following form
\begin{equation}
\label{DBV in blocks}
D_\BV = \begin{pmatrix} 0 & R \\ R^*& S \end{pmatrix}
\end{equation}
where {\small{$R: \mathcal{Q} \rightarrow \mathcal{Q}^{*}[1]$}} and {\small{$S: \mathcal{Q} \rightarrow \mathcal{Q}$}} are two linear operators represented as follows as block matrices:
{\small{
\begin{gather*}
R:= \frac{1}{2}\begin{pmatrix} 0 & - \mathrm{ad}(C) \vspace{1mm}\\
\mathrm{ad}(C) & -  \mathrm{ad}(X)\\
\end{pmatrix}, \qquad 
S:= \begin{pmatrix} 
0 & \mathrm{ad}(X^{*})\vspace{1mm}\\
\mathrm{ad}(X^{*}) & \mathrm{ad}(C^{*})\vspace{1mm}\\
\end{pmatrix}
\end{gather*}
}}
\noindent
Here {\small{$X$}} and {\small{$X^{*}$}} collectively denote, respectively, the fields {\small{$\{x_{p}\}_{p=1, \dots, n^2}$}} in {\small{$[\Omega^{1}(\mathcal{A}_{0})]_{\text{s.a.}} \cong X_{0}$}} \ and the corresponding antifields {\small{$\{x^{*}_{p}\}_{p=1, \dots, n^2}$}} in {\small{$X_{0}^{*}[1]$}} while {\small{$C$}} and {\small{$C^{*}$}}, similarly, denote the ghost fields in degree {\small{$1$}} and their corresponding anti-ghost fields in degree {\small{$-2$}}. Finally, {\small{$\mathrm{ad}$}} is defined to be the following linear map on {\small{$M_{n}(\mathbb{C})$}}:

\begin{equation*}
\begin{aligned}
\mathrm{ad}(z): M_{n}(\mathbb{C}) &\to M_{n}(\mathbb{C}) ;\\
\phi &\mapsto [\alpha(z_{r}), \phi]_{-},
\end{aligned}
\end{equation*}
where the notation {\small{$[-, -]_{-}$}} is used to denote the matrix commutator, $\alpha(z_{r})$ is a hermitian, traceless $n \times n$-matrix, whose expression in terms of the generalized Gell-Mann matrices is:
$$
\alpha = \tfrac{1}{2}\big[(-z_{1})\sigma_1 + \dots + (-z_{n^2-1})\sigma_{n^2-1}\big]
$$
and the symbol {\small{$z_{r}$}} is used to generically denote a ghost/anti-ghost field. We stress that $\mathrm{ad}$ is a derivation on {\small{$M_n(\C)$}}. Explicitly, using once again the orthonormal basis {\small{$\{ i\cdot \sigma_{1}, \dots, i\cdot \sigma_{n^2 -1}, i\cdot \mathrm{Id}\}$}} for {\small{$M_n(\C)$}} obtained using the generalized Gell-Mann matrices, the linear operator $\mathrm{ad}$, when represented as $n^2 \times n^2$-matrices, has in position {\small{$(p, r)$}} the term {\small{$ - \sum_{q}i \cdot f_{pqr} z_{q}$}}, for {\small{$f_{pqr}$}} denoting the totally anti-symmetric structure constants of {\small{$\mathfrak{u}(n)$}}. Hence, it is straightforward to see that the operator \DBV \ is completely determined by the Lie algebra {\small{$\mathfrak{u}(n) = \mathfrak{u}(\mathcal{A})$}} via its structure constants. \\
\\
\noindent
{\bf The real structure {\small{$\mathbf{J_{\BV}}$}.}}
As mentioned above, the need of going from a structure which is naturally defined in a complex setting, as it is a finite spectral triple, to a physical model, which deals with real fields, forces the introduction of a real structure. We take it to be given by the following anti-linear isometry 
\begin{equation}
\label{Eq: JBV}
J_{\BV}:  \mathcal{H}_{\BV} \rightarrow \mathcal{H}_{\BV}, \quad \mbox{with} \quad J_{\BV}(\varphi_{j}): = i \cdot (\phi_{j}^{*}) 
\end{equation}
for $\phi_{j} \in  [M_{n}(\C)]_{j}$ an element in the direct sum defining \HBV \ and $*$ denoting the matrix adjoint.

\begin{oss}
\label{Remark: effective HBV}
The last remark we make is about the Hilbert space \HBV, the degree/parity of its terms and the way how we defined the effective subspace {\small{$\mathcal{H}_{\BV, f}$}}. Indeed, the condition that the extended action \swS\ has to be a element in {\small{$[\mathcal{O}_{\widetilde{X}}]^{0}$}} implies that {\small{$S_{\text{ferm}} \in [\mathcal{F}(\mathcal{H}_{\BV, f})]^{0}$}}, where {\small{$[\mathcal{F}(\mathcal{H}_{\BV, f})]^{0}$}} denotes the real-valued functional on {\small{$\mathcal{H}_{\BV, f}$}} of total degree $0$. Because of this condition on the total degree of {\small{$S_{\text{ferm}}$}}, real and Grassmannian variables appear both in \DBV\ and {\small{$\mathcal{H}_{\BV, f}$}}. Finally, concerning the form of the effective subspace {\small{$\mathcal{H}_{\BV, f}$}}, as {\small{$\mathrm{ad}$}} has off-diagonal terms of the kind {\small{$i \cdot f_{pqr} z_{r}$}}, that is, only purely imaginary terms, we select the effective part of \HBV\ in order to obtain an action {\small{$S_{\text{ferm}}$}} which has real coefficients. This is why we assign {\small{$\mathcal{H}_{\BV, f}$}} to be the subspace such that
$$J_{\BV}(\mathcal{H}_{\BV, f}) = i \mathcal{H}_{\BV, f}.$$
\end{oss}

\begin{prop}
Let us define 
$$(\A_{\BV}, \H_{\BV}, D_{\BV}, J_{\BV}) := (M_{n}(\C), \mathcal{Q}^{*}[1]  \oplus \mathcal{Q}, D_{\BV}, J_{\BV})$$
with {\small{$\mathcal{Q} = [M_{n}(\mathbb{C})]_{0} \oplus [M_{n}(\mathbb{C})]_{1}$}} and \DBV, \JBV \ as in Equation \eqref{DBV in blocks} and \eqref{Eq: JBV} respectively. Then \BVsp\ is a finite real spectral triple of KO-dimension~$1$.
\end{prop}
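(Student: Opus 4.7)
My strategy is to verify directly each axiom of Definition \ref{def spectral triple} together with the real-structure axioms of Definition \ref{def real structure} in KO-dimension $1$, for which the constants are $\epsilon = 1$ and $\epsilon' = -1$, with no grading map required. Since $\H_\BV$ is a finite-dimensional complex vector space (isomorphic to $M_n(\C)^{\oplus 4}$) equipped with the Hilbert--Schmidt inner product on each summand, and $\A_\BV = M_n(\C)$ acts by left multiplication on each summand, the faithfulness, boundedness, and compact-resolvent conditions are automatic; the entire content is algebraic.

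First I would verify the self-adjointness of $D_\BV$. By the block form in \eqref{DBV in blocks} the off-diagonal blocks are adjoints by construction, so it suffices to check $S^* = S$, which reduces to showing that each operator $\mathrm{ad}(z)\colon \phi \mapsto [\alpha(z),\phi]_-$ is self-adjoint on $M_n(\C)$. This follows from cyclicity of the trace and hermiticity of $\alpha(z)$ (a real linear combination of the hermitian Gell-Mann matrices) via
\begin{align*}
\langle \mathrm{ad}(z)\phi,\psi\rangle
&= \tr\bigl((\phi^*\alpha(z)-\alpha(z)\phi^*)\psi\bigr)
= \tr\bigl(\phi^*[\alpha(z),\psi]\bigr)
= \langle\phi,\mathrm{ad}(z)\psi\rangle.
\end{align*}

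Next I would verify the axioms satisfied by $J_\BV$. Anti-linearity and the isometry property follow directly from the explicit factor $i$ in \eqref{Eq: JBV} and the identity $\tr\bigl((i\phi^*)^*(i\psi^*)\bigr)=\overline{\tr(\phi^*\psi)}$. Applying the definition twice gives
\[
J_\BV^{\,2}(\phi_j) = J_\BV(i\phi_j^*) = -i\,J_\BV(\phi_j^*) = -i\cdot i\,(\phi_j^*)^* = \phi_j,
\]
so $\epsilon = +1$. The anti-commutation $J_\BV D_\BV = -D_\BV J_\BV$ reduces, on each $\mathrm{ad}$-block, to the one-line computation
\[
J_\BV\bigl([\alpha(z),\phi]\bigr) = i[\alpha(z),\phi]^* = i[\phi^*,\alpha(z)] = -i[\alpha(z),\phi^*] = -[\alpha(z),\,J_\BV\phi],
\]
where the crucial sign arises from the combination of hermitian conjugation reversing the order of the commutator and the factor $i$ picking up a minus under anti-linearity; this gives $\epsilon' = -1$.

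Finally, the commutation rule and the first-order condition both follow from the observation that conjugation by $J_\BV$ converts the left action of $\A_\BV$ into a right action: a direct calculation shows
\[
(J_\BV\, b^*\, J_\BV^{-1})(\phi_j) = \phi_j b
\]
on each summand. Then $[a,\,J_\BV b^*J_\BV^{-1}](\phi_j) = a\phi_j b - a\phi_j b = 0$, while the derivation property of $\mathrm{ad}$ gives $[D_\BV,a]\phi_j = [\alpha(z),a]\phi_j$, i.e.\ left multiplication by the matrix $[\alpha(z),a]$, which again commutes with right multiplication by $b$. The main subtlety to keep under control throughout is the sign bookkeeping in the Grassmannian sector (the anti-ghost blocks), but since $\alpha(z)$ remains hermitian for every generator $z$ and the matrix adjoint is an anti-involution, the calculations above go through block by block without modification.
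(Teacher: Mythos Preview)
Your proof is correct and follows the same direct-verification strategy as the paper. The only stylistic differences are that the paper argues self-adjointness of $D_{\BV}$ via the total antisymmetry of the structure constants $f_{pqr}$ in the Gell--Mann basis (whereas you use trace cyclicity and hermiticity of $\alpha(z)$, which is the coordinate-free version of the same fact), and that the paper defers the commutation rule and first-order condition to the subsequent Lemma~\ref{Lemma: Max algebra}, whereas you verify them inline via the standard observation that $J_{\BV}b^{*}J_{\BV}^{-1}$ is right multiplication by $b$.
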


\begin{proof}
To check that \BVsp \ define a spectral triple one can observe that \ABV \ is an involutive unital algebra and \HBV \ is a separable Hilbert space. Moreover, one can faithfully represent \ABV \ as bounded operators on \HBV \ by associating to any {\small{$M$}} element in \ABV \ the operator {\small{$P_{M}$}} on \HBV \ given by left multiplication of any summand in \HBV\  by the matrix {\small{$M$}}. As {\small{$M$}} is just a complex matrix, the operator {\small{$P_{M}$}} is clearly bounded. Finally, the operator \DBV \ is clearly self-adjoint on \HBV, due to the structure constants {\small{$f_{pqr}$}} being totally anti-symmetric in all the indices, so that, while in position {\small{$(p,r)$}} one would find the term {\small{$ - \sum_{q}i \cdot f_{pqr} z_{q}$}} in position {\small{$(r,p)$}} one would have {\small{$ - \sum_{q}i \cdot f_{rqp} z_{q} = \sum_{q} i \cdot f_{pqr} z_{q}$}}. The conditions on the domain of \DBV \ being dense, its resolvant being compact and the commutators of \DBV \ with elements in \ABV \ being bounded appearing in Definition \ref{def spectral triple} are automatically satisfied, as we are in the finite dimensional setting. This allows to conclude that {\small{$(\A_{\BV}, \H_{\BV}, D_{\BV})$}} is a finite spectral triple.\\
\\To verify that, by including \JBV \ we obtain a {\emph{real}} spectral triple, we have to check that \JBV \ defines a real structure on {\small{$(\A_{\BV}, \H_{\BV}, D_{\BV})$}} (cf. Definition \ref{def real structure}). With a direct computation one can check that \JBV \ is an antilinear isometry on \HBV \ satisfying {\small{$J^2_{\BV} = \mathrm{Id}$}} and {\small{$J_{\BV}D_{\BV} = - D_{\BV}J_{\BV}$}}, which determines KO-dimension~$1$. Finally, a direct computation, which will be carried out in detail in the proof of Lemma~\ref{Lemma: Max algebra}, shows that also the first-order condition is satisfied. We conclude that \BVsp \ defines a real finite spectral triple of KO-dimension $1$.
\end{proof}

\begin{lemma}
\label{Lemma: Max algebra}
Let \HBV, \JBV\ and \DBV\ be as defined above. Then the maximal unital subalgebra \ABV {\small{$ \subset \mathcal L(\H_\BV)$}} of linear operators on \HBV \ that satisfies
$$
[a,J_\BV b^* J^{-1}_\BV]= 0, \qquad [[D_\BV,a],J_\BV b^* J^{-1}_\BV]= 0; \qquad (a,b\in \mathcal{A}_{\BV})
$$
is given by {\small{$\mathcal{A}_{\BV} = M_n(\C)$}} acting diagonally on \HBV. 
\end{lemma}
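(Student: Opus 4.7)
The plan is to establish the lemma in two steps: first verify that $M_n(\mathbb{C})$ acting diagonally on $\mathcal{H}_{\BV}$ is admissible (i.e.\ satisfies both the commutation rule and the first-order condition), and then show conversely that any $a \in \mathcal{L}(\mathcal{H}_{\BV})$ admissible against all $b$ in this diagonal copy of $M_n(\mathbb{C})$ must itself be of that diagonal form. Together these statements pin down $\mathcal{A}_{\BV}$. A preliminary computation in the block decomposition of $\mathcal{H}_{\BV}$ into four $M_n(\mathbb{C})$-summands (two from $\mathcal{Q}$, two from $\mathcal{Q}^{*}[1]$) shows that $J_{\BV}$ is diagonal with respect to this decomposition and acts on each summand by $\phi \mapsto i\phi^{*}$; consequently, for any operator $b$ whose blocks are left multiplications $L_{B_{ij}}$ (with respect to the Hilbert--Schmidt inner product), the blocks of $J_{\BV} b^{*} J_{\BV}^{-1}$ are the right multiplications $R_{B_{ji}}$.

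With an arbitrary $a \in \mathcal{A}_{\BV}$ written blockwise as $a = (a_{ij})_{i,j=0,\ldots,3}$, I would first apply the commutation rule with $b = L_{B}\otimes \mathrm{Id}_{4}$ ranging over $M_n(\mathbb{C})$ acting diagonally. Since $J_{\BV} b^{*} J_{\BV}^{-1} = R_{B} \otimes \mathrm{Id}_{4}$, the condition reduces to $[a_{ij}, R_{B}] = 0$ for every $(i,j)$ and every $B \in M_n(\mathbb{C})$. Because the commutant of the right regular representation of $M_n(\mathbb{C})$ on itself is precisely the algebra of left multiplications, each block is necessarily of the form $a_{ij} = L_{A_{ij}}$ for some $A_{ij} \in M_n(\mathbb{C})$.

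Next I would apply the first-order condition with the same family of diagonal $b$'s. Using the derivation identities $\mathrm{ad}(z)L_{M} = L_{zM} - L_{M} R_{z}$ and $L_{M}\mathrm{ad}(z) = L_{Mz} - L_{M} R_{z}$, one expands
\[
[D_{\BV}, a]_{ij} = L_{\sum_{k} (z_{ik} A_{kj} - A_{ik} z_{kj})} + \sum_{k} \bigl( L_{A_{ik}} R_{z_{kj}} - L_{A_{kj}} R_{z_{ik}} \bigr),
\]
where $z_{ij}$ denotes the formal $\mathfrak{u}(n)$-matrix for which $(D_{\BV})_{ij} = \mathrm{ad}(z_{ij})$; by inspection of $R$, $R^{*}$ and $S$, each $z_{ij}$ is a specific multiple of one of the generators $X$, $X^{*}$, $C$ or $C^{*}$, depending on the block position. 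A further commutator with $R_{B}$ annihilates the pure-$L$ part and leaves
\[
[[D_{\BV}, a]_{ij}, R_{B}] = \sum_{k} \bigl( L_{A_{kj}} R_{[z_{ik}, B]} - L_{A_{ik}} R_{[z_{kj}, B]} \bigr),
\]
which must vanish for all $B \in M_n(\mathbb{C})$. Since $X$, $X^{*}$, $C$, $C^{*}$ are formally independent generators of the graded polynomial ring of fields/antifields, contributions carrying distinct generators cannot cancel; matching coefficients block by block in the $4\times 4$ expansion of $\bigl(\begin{smallmatrix} 0 & R \\ R^{*} & S \end{smallmatrix}\bigr)$ forces every off-diagonal $A_{ij}$ (with $i\neq j$) to vanish and every diagonal $A_{ii}$ to coincide with a single matrix $A \in M_n(\mathbb{C})$, so $a = L_{A}\otimes \mathrm{Id}_{4}$.

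It remains to verify admissibility of $M_n(\mathbb{C})$ acting diagonally: the commutation rule follows immediately from $[L_{A}, R_{B}]=0$, while for the first-order condition the two $R$-sums in the general expansion above cancel, leaving $[D_{\BV}, L_{A}\otimes \mathrm{Id}_{4}]_{ij} = L_{[z_{ij}, A]}$, a pure left multiplication which automatically commutes with every $R_{B}$. The main obstacle is the coefficient-matching step in paragraph three: one must keep careful track of which of $X$, $X^{*}$, $C$, $C^{*}$ sits in each position $(i,j)$ of $D_{\BV}$ and exploit the formal independence of these variables (and the total antisymmetry of the structure constants of $\mathfrak{u}(n)$) to cleanly decouple the linear constraints on the $A_{ij}$.
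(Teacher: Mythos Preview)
Your approach is correct and follows essentially the same two-step strategy as the paper: use the commutation rule to reduce the possible form of $a$, then use the first-order condition to force all components to agree. Your version is considerably more explicit than the paper's, which simply asserts that the commutation rule makes $\mathcal{H}_{\BV}$ an $\mathcal{A}_{\BV}$-bimodule and invokes (implicitly, or via Krajewski diagrams) the classification of such bimodules to conclude $\mathcal{A}_{\BV}\subset [M_n(\mathbb{C})]^{\oplus 4}$ acting diagonally, and then says a ``straightforward computation'' of the double commutator $[[D_{\BV},(a_1,\dots,a_4)],J_{\BV}(b_1,\dots,b_4)J_{\BV}^{-1}]$ forces $a_1=\dots=a_4$.

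The one genuine difference worth noting: the paper extracts \emph{diagonality} already from the commutation rule (via the bimodule classification), whereas you only obtain that each block is a left multiplication from that step and then use the first-order condition both to kill the off-diagonal $A_{ij}$ and to equate the diagonal ones. Your route is more elementary and self-contained, avoiding any appeal to the structure theory of finite real spectral triples; the paper's route is shorter but leans on that background. Both arrive at the same place, and your identification of the coefficient-matching in the $4\times4$ expansion of $D_{\BV}$ (using the formal independence of $X,X^*,C,C^*$) as the crux is exactly right.
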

\proof
The commutation rule {\small{$[a,J_\BV b^* J_{\BV}^{-1}]= 0$}} for all {\small{$a,b \in \A_{\BV}$}} implies that \HBV \ carries an \ABV -bimodule structure. This restricts \ABV\  to be a subalgebra in
$$\mathcal{A}_{\BV} \subset [M_{n}(\C)]^{\oplus 4}$$
acting diagonally on {\small{$\mathcal{H}_{\BV} = \mathcal{Q} \oplus \mathcal{Q}^{*}[1]$}}. Then, by a straightforward computation of the double commutator 
$$\left[[D_{\BV},(a_1,\dots, a_{4})], J_\BV (b_1, \dots, b_{4})J_\BV^{-1} \right]$$ 
it follows that the first-order condition implies {\small{$a_1= \dots = a_{4}$}} and {\small{$b_1= \dots = b_{4}$}}. This selects the subalgebra {\small{$M_n(\C)$}} as the maximal one for which both of the above conditions are satisfied. Alternatively, one can prove the statement using a graphical method based on the notion of {\emph{Krajewski diagram}} \cite{Krajewski, walter}. 
\endproof
\noindent
We conclude the section by stating its main theorem: the method described above to associate to a given initial finite spectral triple {\small{$(\mathcal{A}_{0}, \mathcal{H}_{0}, D_{0})$}}, a corresponding real spectral triple {\small{$(\mathcal{A}_{\BV}, \mathcal{H}_{\BV}, D_{\BV}, J_{\BV})$}} of KO-dimension $1$ (mod 8) accomplishes our goal of encoding the construction of a BV-extension for our {\small{$\mathrm{U}(n)$}}-gauge theory all within the framework of noncommutative geometry.  

 \begin{theorem}
 \label{Thm: BV spectral triple, all n}
Let {\small{$(\mathcal{A}_{0}, \mathcal{H}_{0}, D_{0}) = (M_{n}(\mathbb{C}), \mathbb{C}^{n}, D_{0})$}} be a finite spectral triple with induced gauge theory \siXS. Then an associated BV-spectral triple is given by:
$$(\mathcal{A}_{\BV}, \mathcal{H}_{\BV}, D_{\BV}, J_{\BV}) = (M_{n}(\mathbb{C}), \mathcal{Q} \oplus \mathcal{Q}^{*}[1], D_{\BV}, J_{\BV})$$ 
for {\small{$\mathcal{Q} = [M_{n}(\mathbb{C})]_{0} \oplus [M_{n}(\mathbb{C})]_{1}$}} and \DBV, \JBV\  as defined in \eqref{DBV in blocks} and \eqref{Eq: JBV}, respectively. 
 \end{theorem}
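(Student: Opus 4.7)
The plan is to verify the conditions of Definition \ref{def: extended theory} for the pair $(\widetilde{X}, \widetilde{S})$ appearing in the definition of a BV spectral triple, given that the real spectral triple structure of $(\mathcal{A}_{\BV}, \mathcal{H}_{\BV}, D_{\BV}, J_{\BV})$ has already been established. What remains to be checked is the decomposition $\mathcal{H}_{\BV, f} \cong \mathcal{Q}_f^*[1] \oplus \mathcal{Q}_f$; the shifted cotangent bundle structure on $\widetilde{X}$ with $[\widetilde{X}]^0 = X_0$; the identities $\widetilde{S}|_{X_0} = S_0$ and $\widetilde{S} \neq S_0$; and, most importantly, the classical master equation $\{\widetilde{S}, \widetilde{S}\} = 0$.

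First, I would check the graded-vector-space structure. By construction $\mathcal{Q}_f = [i\mathfrak{su}(n)]_0 \oplus [i\mathfrak{su}(n)]_1$ is $\mathbb{Z}$-graded and its shifted dual $\mathcal{Q}_f^*[1]$ lies in degrees $-1$ and $-2$, so that $\mathcal{H}_{\BV, f} = \mathcal{Q}_f^*[1] \oplus \mathcal{Q}_f$ is concentrated in degrees $\{-2,-1,0,1\}$. Setting $\mathcal{F} := X_0 \oplus [\mathcal{Q}_f]^1$, with $X_0$ in degree $0$ and the ghost sector $[\mathcal{Q}_f]^1$ in degree $1$, one has $\widetilde{X} = \mathcal{F} \oplus \mathcal{F}^*[1]$ and $[\widetilde{X}]^0 = X_0$, as required. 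The graded Poisson structure on $\mathcal{O}_{\widetilde{X}}$ is then fixed by declaring the canonical pairings \eqref{Eq: pairing fields/antifields} between each field/ghost field and its corresponding antifield/anti-ghost field.

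Next, I would unfold $S_{\ferm}$ explicitly. Denoting a generic element of $\mathcal{H}_{\BV, f}$ by the quadruple $\psi = (C^{*}, x^{*}, x, C)$ associated to the four summands, the block form \eqref{DBV in blocks} of $D_{\BV}$ together with the definition \eqref{Eq: JBV} of $J_{\BV}$ and the Hilbert--Schmidt inner product produce contributions of the form $\tr(x^{*}\,\mathrm{ad}(C)(x))$ and $\tr(C^{*}\,\mathrm{ad}(C)(C))$, with real coefficients by the choice of $\mathcal{H}_{\BV, f}$ enforced in Remark \ref{Remark: effective HBV}. Expanding in the basis $\{i\sigma_a\}$ recovers the standard $\mathrm{U}(n)$ BV action, whose coefficients are the totally antisymmetric structure constants $f_{abc}$ of $\mathfrak{u}(n)$. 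Since every such term involves at least one ghost or antifield variable, the identity $\widetilde{S}|_{X_0} = S_0$ is immediate and $\widetilde{S} \neq S_0$ is obvious.

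The hard step is the classical master equation. I would split $\{\widetilde{S}, \widetilde{S}\} = 2\{S_0, S_{\BV}\} + \{S_{\BV}, S_{\BV}\}$ with $S_{\BV} := \tfrac{1}{2}S_{\ferm}$, noting that $\{S_0, S_0\}=0$ since $S_0$ does not depend on any antifield. The cross term reduces, via the pairing \eqref{Eq: pairing fields/antifields}, to expressions of the type $\sum_p (\partial S_0/\partial x_p)\cdot[C, x]_p$, which is the infinitesimal form of the $\mathrm{U}(n)$-invariance of $S_0$ guaranteed by Proposition \ref{Prop: gauge theory from spectral triple}, hence vanishes. The self-pairing $\{S_{\BV}, S_{\BV}\}$, after contracting all field/antifield and ghost/anti-ghost pairs, produces a coefficient given by the cyclic sum of products of structure constants $\sum_r(f_{pqr}f_{rst} + f_{psr}f_{rtq} + f_{ptr}f_{rqs})$, which vanishes by the Jacobi identity for $\mathfrak{u}(n)$. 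Combining the three contributions yields $\{\widetilde{S}, \widetilde{S}\}=0$ and completes the verification that $(\mathcal{A}_{\BV}, \mathcal{H}_{\BV}, D_{\BV}, J_{\BV})$ is a BV spectral triple associated to $(\mathcal{A}_0, \mathcal{H}_0, D_0)$.
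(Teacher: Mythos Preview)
Your proposal is correct and follows essentially the same approach as the paper: you verify the graded structure of $\widetilde{X}$, unfold $S_{\ferm}$ in terms of the structure constants $f_{pqr}$, and then establish the classical master equation by splitting $\{\widetilde{S},\widetilde{S}\}$ into a cross term (vanishing by $\mathrm{U}(n)$-invariance of $S_0$) and a self-bracket (vanishing by the Jacobi identity). The paper carries out the same argument, only with the explicit coordinate expression $\widetilde{S}=S_0+\sum_{p,q,r}f_{pqr}[x_p^{*}x_qC_r+\tfrac{1}{2}C_p^{*}C_qC_r]$ written out and the Jacobi-identity step broken into two separate computations (one for the terms containing $x^{*}$ and one for those containing $C^{*}$), rather than your single cyclic-sum observation.
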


\begin{proof}
To prove \BVsp \ defines a BV spectral triple for \isp \ one has to check that the pair
$$ \widetilde{X} := (\mathcal{Q}_{f} + X_{0}) \oplus (X_{0}^{*} [1] + \mathcal{Q}_{f}^{*}[1]) \hspace{10mm}  \widetilde{S} := S_{0} +\frac{1}{2} S_{\text{ferm}} $$
with 
$$S_{\text{ferm}}: \mathcal{H}_{\BV,f} \rightarrow \mathcal{H}_{\BV, f} \quad \quad \varphi \mapsto \frac{1}{2}\langle J_{\BV}(\varphi), D_{\BV} \varphi \rangle $$ 
is a BV-extended theory associated to \siXS, for {\small{$X_{0} \cong [\Omega^{1}(\A_{0})]_{s.a.}$}} and {\small{$S_{0}[\varphi] = \text{Tr}(f(D_{0}+ \varphi))$}}. Explicitly, for the extended configuration space \swX \ we have to prove that \swX \ is a {\small{$\mathbb{Z}$}}-graded super vector space, with {\small{$[\widetilde{X}]^{0} = X_{0}$}} and such that {\small{$\widetilde{X} \cong \mathcal{F} \oplus \mathcal{F}^{*}[1]$}}, for {\small{$\mathcal{F} = \oplus_{i \geqslant 0} \mathcal{F}^{i}$}} is a graded locally free {\small{$\mathcal{O}_{X_{0}}$}}-module with homogeneous components of finite rank, for {\small{$\mathcal{O}_{X_{0}}$}} the algebra of regular functions on {\small{$X_{0}$}}. All this immediately follows from the explicit definition of \HBV, where the {\small{$\mathcal{O}_{X_{0}}$}}-module structure can be naturally induced. \\
\\For what concerns the extended action \swS, we have to prove that {\small{$\widetilde{S} \in [\mathcal{O}_{\widetilde{X}}]^{0}$}}, is a real-valued regular function on {\small{$\widetilde{X}$}}, with {\small{$\widetilde{S}|_{X_{0}}=S_{0}$}}, {\small{$\widetilde{S}\neq S_{0}$}} and such that it solves the classical master equation {\small{$
\{\widetilde{S}, \widetilde{S}\}=0,$}} where {\small{$\{ -, -\}$}} denotes the graded Poisson structure on the graded algebra {\small{$\mathcal{O}_{\widetilde{X}}$}}. 
While the degree and regularity conditions can be straightforwardly checked, we focus on proving that \swS \ is a solution of the classical master equation. To do so, let us compute {\small{$S_{\text{ferm}}$}} explicitly in terms of the variables {\small{$x_{p}, x^{*}_{p}, C_{q}, C^{*}_{q}$}}. Taking into consideration the parity of the variables as well as the fact that the structure constants {\small{$f_{pqr}$}} of the Lie algebra {\small{$\mathfrak{u}(n)$}} are totally anti-symmetric in the three indices, one can prove that:
\begin{equation}
\label{eq: extended action U(n)}
\widetilde{S}= S_{0} + \frac{1}{2}S_{\ferm} = S_{0} + \sum_{p, q,r} f_{pqr} \big[\ x_{p}^{*}x_{q}C_{r} + \frac{1}{2}C^{*}_{p}C_{q}C_{r}\big], 
\end{equation}
To conclude the proof we have to verify that {\small{$\{S_{0} + 1/2 S_{\ferm}, S_{0} + 1/2 S_{\ferm}\} =0$}}. As \siS \ is itself a solution to the classical master equation, we have to prove that {\small{$\big\{S_{0}, S_{\ferm}\big\} + \frac{1}{2}\big\{S_{\ferm}, S_{\ferm}\big\} =0$}}. \\
\\
Analyzing separately the contributions of the different terms, one obtains:\\
\\
$(1)$\ {\small{$\big\{S_{0}, S_{\ferm}\big\} = \sum_{pqr} f_{pqr} \ \partial_{p} S_{0} \ x_{q} C_{r} = \sum_{r} \big[ \sum_{pq} f_{pqr} \ \partial_{p} S_{0} \ x_{q} \big] C_{r} =0$ }} \\
\\
 where the last identity follows from noticing that, for any fixed $r$ the sum in brackets is zero as direct consequence of the action functional \siS\ being invariant under the adjoint action of {\small{$\mathfrak{u}(n)$}}.
 \noindent
 {\small{
\begin{multline*}
(2) \ \sum_{p, q, r, s, a}\Big[\big\{ f_{par} \ x^{*}_{p}x_{a}C_{r}, f_{aqs} \ x^{*}_{a}x_{q}C_{s} \big\} + \big\{ f_{pas} \ x^{*}_{p}x_{a}C_{s}, f_{aqr} \ x^{*}_{a}x_{q}C_{r} \big\} \  + \\
\big\{ f_{pqr} \ x^{*}_{p}x_{q}C_{a}, \frac{1}{2} f_{ars} \ C^{*}_{a}C_{r}C_{s} \big\} \Big] = \sum_{p, q, r, s} \big[ \sum_{a} (f_{apr}f_{aqs} - f_{aps}f_{aqr} - f_{apq}f_{ars}) \big] x^{*}_{p}x_{q}C_{r}C_{s} =0 
\end{multline*} 
}} 
where the last passage follows from noticing that, for every fixed set of indices $p, q, r, s$, the expression in the bracket involving thee structure constants is zero as consequence of the Jacobi identity satisfied by the Lie bracket in a Lie algebra.
\noindent
{\small{
\begin{multline*}
(3) \ \sum_{p, q, r, s, a}\Big[\big\{ f_{pqa} \ C^{*}_{p}C_{q}C_{a}, f_{ars} \ C^{*}_{a}C_{r}C_{s} \big\} + \big\{ f_{pra} \ C^{*}_{p}C_{r}C_{a}, f_{aqs} \ C^{*}_{a}C_{q}C_{s} \big\} \ + \\
\big\{ f_{psa} \ C^{*}_{p}C_{s}C_{a}, f_{aqr} \ C^{*}_{a}C_{q}C_{r} \big\} \Big] = \sum_{p, q, r, s} \big[ \sum_{a} (-f_{apq}f_{ars} + f_{apr}f_{aqs} - f_{aps}f_{aqr}) \big] C^{*}_{p}C_{q}C_{r}C_{s} =0 
\end{multline*}
}}
\noindent
Similarly as above, also in this last case the sum is zero as a consequence of the Jacobi identity. This allows to conclude that the functional \swS \ induced by \BVsp \ is indeed a solution to the classical master equation and \BVsp \ is then a BV spectral triple associate to our initial spectral triple \isp.
\end{proof}

\noindent
{\emph{Note:}} we conclude by remaking that, given an initial spectral triple \isp, the only information needed to determine the associated BV spectral triple is {\small{$\mathfrak{u}(\mathcal{A})$}}, that is, the Lie algebra of the unitary elements of \AI.

\section{The BV complex as Hochschild complex}
\label{Sect: BV and Hochschild complex}
\noindent
We concluded the previous section by describing the BV spectral triple associated to any {\small{$\mathrm{U}(n)$}}-gauge theory induced by a finite spectral triple on the algebra {\small{$M_{n}(\C)$}}. We also saw how the BV spectral triple incorporates all the information of the BV-extended pair \swXS. Hence, one can reasonably expect to be able to define the \cBRST\ complex straight from the BV spectral triple. To make this possible, first of all we need to relate this BV complex to another cohomology complex which appears naturally in the context of spectral triples. \\
\\
As indicated in the title, the main result presented in this section is the construction of a graded isomorphism 
$$\Phi: \mathcal{C}_{\BVt}^{\bullet}(\widetilde{X}, d_{\widetilde{S}}) \longrightarrow \mathcal{C}_{\Hoch, \Delta}^{\bullet}(\mathcal{B},  \mathcal{M})$$ 
between the BV cohomology complex {\small{$\mathcal{C}_{\BVt}^{\bullet}(\widetilde{X}, d_{\widetilde{S}})$}} induced by a BV-extended theory \swXS\ and the Hochschild complex {\small{$\mathcal{C}_{\Hoch}^{\bullet}(\mathcal{B},  \mathcal{M})$}} of a coalgebra  \scB\ over a comodule \scM, for a suitable choice of the pair {\small{$(\mathcal{B}, \mathcal{M})$}}. 
The construction will be presented in the general context of finite spectral triple, for a generic BV spectral triple \BVsp\ associated to an initial spectral triple \isp. In this context, constructing an isomorphism of cochain complexes entails to define, for any $k \in \mathbb{Z}$, an isomorphism of finitely generated {\small{$\mathcal{O}_{X_{0}}$}}-modules 
$$\Phi^{k}: \mathcal{C}_{\BVt}^{k}(\widetilde{X}, d_{\widetilde{S}}) \longrightarrow \mathcal{C}_{\Hoch}^{k}(\mathcal{B},  \mathcal{M})$$
between the degree $k$ cochain spaces, which is compatible with the two corresponding coboundary operators, that is, which satisfies the condition
$$d_{H} \circ  \Phi^{k}|_{\varphi} = \Phi^{k+1} \circ d^{k}_{\widetilde{S}}|_{\varphi}$$
for all {\small{$\varphi \in \mathcal{C}_{\BVt}^{k}(\widetilde{X}, d_{\widetilde{S}})$}}. To reach this goal, we first recall the definition of the Hochschild cohomology complex for coalbegras over comodules. Then we explain how, given a BV spectral triple \BVsp \ one determines the pair {\small{$(\mathcal{B},  \mathcal{M})$}} of a suitable coalgebra and comodule. Finally, we define the collection of maps $\{ \Phi^{k}\}_{k \in \mathbb{Z}}$, proving they give the complex isomorphism we are looking for. For more details on coalgebras and their cohomology complexes a classical reference is \cite{Kassel_quantum_Groups}.

\begin{definition}
A {\emph{coalgebra}} is a triple {\small{$(\B, \Delta, \epsilon )$}} where \scB \ is a vector space over a field {\small{$\mathbb{K}$}} and {\small{$\Delta: \B \rightarrow \B \otimes \B$}} \ and {\small{$\epsilon: \B \rightarrow \mathbb{K}$}} are linear maps satisfying the coassociativity and the counit axioms, respectively:
$$(\mathrm{id}_{\B} \otimes \Delta) \circ \Delta = (\Delta \otimes \mathrm{id}_{\B}) \circ \Delta \quad \quad (\mathrm{id}_{\B} \otimes \epsilon) \circ \Delta = \mathrm{id}_{\B} = (\epsilon \otimes \mathrm{id}_{\B}) \circ \Delta. $$
\end{definition}

\begin{definition}
    Consider two coalgebras {\small{$(\B, \Delta, \epsilon )$}} and {\small{$(\B^{\prime}, \Delta^{\prime}, \epsilon^{\prime} )$}}. A linear map {\small{$f: \B \rightarrow\B^{\prime}$}} \ is a {\emph{morphism of coalgebras}} if:
    $$(f \otimes f) \circ \Delta = \Delta^{\prime} \circ f \quad \quad \epsilon = \epsilon^{\prime} \circ f.$$
\end{definition}

\begin{definition}
Let {\small{$(\B, \Delta, \epsilon )$}} be a coalgebra. Then a {\emph{right module}} over \scB \ is a pair {\small{$(\mathcal{M}, \omega_{R})$}} where \scM \ is a vector space and {\small{$\omega_{R}: \mathcal{M} \rightarrow \M \otimes \B$}} is a linear map, called the {\emph{coaction}} of \scB \ on \scM, such that the following axioms are satisfied:
$$ (\mathrm{id}_{\M} \otimes \Delta) \circ \omega_{R} = ( \omega_{R} \otimes \mathrm{id}_{\B}) \circ \omega_{R} \quad \mbox{and} \quad (\mathrm{id}_{\M}\otimes\epsilon) \circ \omega_{R} = \mathrm{id}_{\M}.$$

The notion of Hochschild cohomology was first introduced by Hochschild \cite{Def_Hoch_Cohom, Def_Hoch_Cohom2} while its presentation in terms of derived functor is due to Cartan and Eilenberg \cite{homol}. Because we aim to determine a direct and explicit relation between the BV and the Hochschild complexes for our model, we are interested only in the realization of the Hochschild complex as determined by the so-called standard resolution of an algebra in the category of A-bimodules (cf. \cite{Kassel}). To obtain a notion of the Hochschild complex in the coalgebra case one simply considers this more explicit definition and dualizes it. 

\begin{definition}
Let {\small{$(\B, \Delta, \epsilon )$}} be a coalgebra  and let {\small{$(\M, \omega_{R})$}} be a right comodule over \scB. Then the {\emph{Hochschild cohomology complex}} of the coalgebra \scB \ over the comodule \scM \ has cochain spaces and coboundary operator defined as
follows
$$ \mathcal{C}^{p}_{H, \Delta}(\B, \M):= \M \otimes \B^{\otimes p} \quad \quad d_{H, \Delta}:\mathcal{C}^{p}_{H, \Delta}(\B, \M) \rightarrow \mathcal{C}^{p+1}_{H, \Delta}(\B, \M) $$
with 
$$ d_{H, \Delta}(\varphi):= \omega_{R}(f) \otimes y_{1} \otimes \dots \otimes y_{p} + \sum_{j=1}^{p}(-1)^{j} f \otimes y_{1} \otimes \dots \otimes \Delta(y_{j}) \otimes \dots \otimes y_{p}  $$
for {\small{$\varphi = f \otimes y_{1} \otimes \dots \otimes y_{p} \in \mathcal{C}^{p}_{H, \Delta}(\B, \M)$}}. The definition of the coboundary map {\small{$d_{H, \Delta}$}} can then be extended to the entire vector space of cochains of degree $p$ by linearity.
\end{definition}
\noindent
\begin{lemma}
\label{Lemma: Hoch is complex, non graded}
Given {\small{$(\B, \Delta, \epsilon )$}} a coalgebra  and {\small{$(\M, \omega_{R})$}} a right comodule over \scB, the pair {\small{$(\mathcal{C}^{\bullet}_{H, \Delta}(\B, \M)), d_{H, \Delta}$)}} defines a cohomology complex.
\end{lemma}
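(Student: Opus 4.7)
The plan is to verify that $d_{H, \Delta}^{2} = 0$ on each cochain space $\mathcal{C}^{p}_{H,\Delta}(\B,\M)$, which together with the obvious grading immediately yields the complex structure. The computation proceeds by a direct cancellation argument dual to the classical check that the Hochschild differential of an algebra over a bimodule is a coboundary. Using Sweedler notation $\omega_{R}(f) = \sum f_{(0)} \otimes f_{(1)}$ with $f_{(0)} \in \M$, $f_{(1)} \in \B$, and $\Delta(y) = \sum y_{(1)} \otimes y_{(2)}$, I would first decompose $d_{H, \Delta} = \sum_{j=0}^{p}(-1)^{j} d_{j}$, where $d_{0}$ denotes the insertion of $\omega_{R}$ on the $\M$-slot (producing one new $\B$-factor at the leftmost $\B$-position), and $d_{j}$ for $1 \le j \le p$ denotes the insertion of $\Delta$ at the $j$-th $\B$-factor of the tensor product. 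This way, the sign $(-1)^{0}$ of the coaction term and the signs $(-1)^{j}$ of the comultiplication terms are unified into a single indexing convention.

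Expanding $d_{H,\Delta}^{2} = \sum_{i,j}(-1)^{i+j}\, d_{i}\, d_{j}$ then reduces the proof to three families of cosimplicial identities. First, coassociativity $(\Delta \otimes \mathrm{id}_{\B}) \circ \Delta = (\mathrm{id}_{\B} \otimes \Delta) \circ \Delta$ gives $d_{j}\, d_{j} = d_{j+1}\, d_{j}$ for each $1 \le j \le p$, so the two entries cancel because of opposite signs. Second, the coaction axiom $(\omega_{R} \otimes \mathrm{id}_{\B}) \circ \omega_{R} = (\mathrm{id}_{\M} \otimes \Delta) \circ \omega_{R}$ handles the analogous pair $d_{0}\, d_{0}$ and $d_{1}\, d_{0}$, now involving the comodule factor rather than two copies of the coalgebra. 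Finally, when $i > j+1$ the operators $d_{i}$ and $d_{j}$ act on non-adjacent factors of the tensor product, so they simply commute past each other, producing the identity $d_{i}\, d_{j} = d_{j}\, d_{i-1}$ whose associated signs $(-1)^{i+j}$ and $(-1)^{j+(i-1)}$ are again opposite.

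The main obstacle, and essentially the only source of difficulty, is the bookkeeping: after applying $d_{j}$ the tensor factors are re-indexed, so the identity $d_{i}\, d_{j} = d_{j}\, d_{i-1}$ must be verified by tracking positions explicitly, and one must check that the $d_{0}$-term, which interacts with $\M$ rather than with two copies of $\B$, integrates correctly into the same pattern via the coaction axiom. Once these pairings are tabulated, every summand of $d_{H,\Delta}^{2}$ has a unique partner of opposite sign, so the double sum vanishes. Since this is a pointwise identity on generators $\varphi = f \otimes y_{1} \otimes \dots \otimes y_{p}$, extending by linearity completes the verification and establishes that $\bigl(\mathcal{C}^{\bullet}_{H,\Delta}(\B,\M), d_{H,\Delta}\bigr)$ is a cohomology complex.
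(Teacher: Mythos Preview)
Your proposal is correct and follows essentially the same cancellation strategy as the paper: the paper expands $d_{H,\Delta}^{2}$ explicitly on a generic tensor and groups the resulting summands into pairs that cancel by, respectively, the coaction compatibility $(\omega_{R}\otimes\mathrm{id})\circ\omega_{R}=(\mathrm{id}\otimes\Delta)\circ\omega_{R}$, the coassociativity of $\Delta$, and the obvious commutation of insertions at non-adjacent slots. Your packaging of the same three cancellations as the cosimplicial identities $d_{0}d_{0}=d_{1}d_{0}$, $d_{j}d_{j}=d_{j+1}d_{j}$, and $d_{i}d_{j}=d_{j}d_{i-1}$ for $i>j+1$ is a cleaner bookkeeping device, but the underlying argument is identical.
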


\begin{proof}
    Given {\small{$\mathcal{C}^{\bullet}_{H, \Delta}(\B, \M)$}} the vector space {\small{$\M \otimes \B^{\otimes p}$}}, the map {\small{$d_{H, \Delta}$}} is a well-defined linear map of degree {\small{$1$}}. To conclude that {\small{$(\mathcal{C}^{\bullet}_{H, \Delta}(\B, \M)), d_{H, \Delta}$)}} defines a cohomology complex the last thing to check is that {\small{$d_{H, \Delta}$}} is a coboundary operator, that is, it satisfies {\small{$(d_{H, \Delta})^{2} =0$}}. Because of the linearity of the map {\small{$d_{H, \Delta}$}} we can verify this coboundary condition on one single term {\small{$\varphi \in \mathcal{C}^{p}_{H, \Delta}(\B, \M)$}}. Due once again to the linearity of {\small{$d_{H, \Delta}$}}, without loss of generality we can assume that {\small{$\omega_{R}(f)= m \otimes x$}}, with {\small{$m \in \M$}} and {\small{$x \in \B$}}, for $f$ the coefficient in \scM\ of the element $\varphi$ taken into consideration. 
    
    \begin{align*}
    & d_{H, \Delta} \big( d_{H, \Delta} (\varphi) \big)  \\     & \quad = d_{H, \Delta} \big(\omega_{R}(f) \otimes y_{1} \otimes \dots \otimes y_{p} + \sum_{j=1}^{p}(-1)^{j} f \otimes y_{1} \otimes \dots \otimes \Delta(y_{j}) \otimes y_{j+1} \otimes \dots \otimes y_{p} )\\
   & \quad 
   = \omega_{R}(m) \otimes x \otimes y_{1} \otimes \dots \otimes y_{p} - m \otimes \Delta(x) \otimes y_{1} \otimes \dots \otimes y_{p} \\
   & \quad \quad + \sum_{j= 1}^{p}(-1)^{j}m \otimes x \otimes \dots \otimes \Delta(y_j) \otimes \dots \otimes y_{p}\\
   & \quad \quad + \sum_{i= 1}^{p}(-1)^{i+1}m \otimes x \otimes \dots \otimes \Delta(y_i) \otimes \dots \otimes y_{p} \\
   & \quad \quad + \sum_{i=1}^{p}(-1)^{i} \sum_{j=1}^{i-1} (-1)^{j+1} m \otimes \dots \otimes \Delta(y_{j})\otimes \dots \otimes \Delta(y_i) \otimes \dots \otimes y_{p}\\
   & \quad \quad + \sum_{i=1}^{p}(-1)^{i} \sum_{j=i+3}^{p} (-1)^{j} m \otimes \dots \otimes \Delta(y_i) \otimes \dots \otimes \Delta(y_{j}) \otimes \dots \otimes y_{p}\\
   & \quad \quad + \sum_{i= 1}^{p}(-1)^{2i}m \otimes \dots \otimes \big(\Delta \otimes \mathrm{Id}\big)(\Delta(y_{i})) \otimes \dots \otimes y_{p}\\
   & \quad \quad + \sum_{i= 1}^{p}(-1)^{2i+1}m \otimes \dots \otimes \big( \mathrm{Id} \otimes \Delta \big)(\Delta(y_{i})) \otimes \dots \otimes y_{p} =0
        \end{align*}
To reach the final result it is enough to observe that the first two summands cancel each other due to the compatibility of the coaction with the coproduct. For what concerns the third/four and fifth/sixth summands, they cancel pairwise having opposite sign. The last two summands, cancel due to the coassociativity of the coproduct. This allows to conclude that the pair {\small{$(\mathcal{C}^{\bullet}_{H, \Delta}(\B, \M)), d_{H, \Delta}$)}} defines a cohomology complex.
\end{proof}
\end{definition}

This definition of a Hochschild cohomology complex for a coalgebra over a comodule can be extended to the case of a graded coalgebra. We need to consider this generalisation due to the presence in the BV complex of generators of degree different from {$0, 1$}.

\begin{definition}
A {\emph{1-shifted (non-unital) graded coalgebra}} is a pair {\small{$(\B, \Delta)$}} such that $\B$ is a {\small{$\Z$}}-graded vector space, {\small{$\mathcal{B} = \bigoplus_{n \in \mathbb{Z}} \mathcal{B}_{n}$}}, and {\small{$\Delta: \B \rightarrow \B \otimes \B$}} is a linear map such that
$$\Delta(y_{a}) \in \bigoplus_{i+j = a+1} \B_{i} \otimes \B_{j},$$
for {\small{$y_{a}$}} a homogeneous element in \scB \ of degree $a$. Moreover, the coproduct {\small{$\Delta$}} has to satisfy the following graded coassociativity condition:
$$(\Delta \otimes \mathrm{id})(\Delta(\varphi)) = (-1)^{|z^{(1)}| +1} (id \otimes \Delta) (\Delta(\varphi)), $$
for $\Delta(\varphi) = z^{(1)} \otimes z^{(2)}$ and for $|z^{(1)}|$ denoting the degree of the homogenous term $z^{(1)}$ in the graded vector space $\B$.
\end{definition}

Similarly, we need to adapt to this graded context also the notion of right comodule.

\begin{definition}
Let {\small{$(\B, \Delta)$}} be a 1-shifted graded coalgebra. Then a degree-1 right comodule over $\B$ is a pair {\small{$(\M, \omega)$}} with \scM \ a vector space and {\small{$\omega_{R}$}} a degree 1 coaction, that is, a linear map {\small{$\omega_{R}: \M \rightarrow \M \otimes \B_{1}$}} such that it satisifes a graded compatibility with the coproduct:
$$(\omega_{R} \otimes \mathrm{id}) \circ \omega_{R} = - (id \otimes \Delta) \circ \omega_{R}.$$
\end{definition}

\begin{definition}
\label{def: Hochschild cohom, graded}
Let {\small{$(\B, \Delta)$}} be a 1-shifted graded coalgebra, with decomposition {\small{$\B = \oplus_{n \in \mathbb{Z}} B_{n}$}}, and let {\small{$(\M, \omega_{R})$}} be a degree-1 right comodule over $\B$. Then the {\emph{graded Hochschild complex}} of the coalgebra \scB \ on the comodule \scM \ has the cochain spaces defined as
$$\mathcal{C}^{p}_{H, \Delta}(\B, \M): = \M \otimes T^{p}(\B)$$
where {\small{$T^{p}(\B)$}} denotes the super-graded tensor algebra of $\B$ with
\begin{multline*}
    T^{p}(\B):= \{\varphi = \sum_{I} y_{1}^{i_{1}} \otimes \dots \otimes y_{k}^{i_{k}}, \ y^{a} \in \B_{a}, \ I=(i_{1}, \dots, i_{k}), \ i_{1} + \dots + i_{k} =p \} / \sim
\end{multline*}
for $p \in \mathbb{Z}$, where the quotient is taken with respect to the equivalence relations implementing the graded commutativity of the different cofactors in {\small{$\varphi$}}. The coboundary operator {\small{$d_{H, \Delta}:\mathcal{C}^{p}_{H, \Delta}(\B, \M) \rightarrow \mathcal{C}^{p+1}_{H, \Delta}(\B, \M) $}}, it is defined as 
\begin{multline}
\label{Eq: def hoch coboundary graded}
    d_{H, \Delta}(\varphi):= \omega_{R}(f) \otimes y^{i_{1}}_{1} \otimes \dots \otimes y^{i_{k}}_{k}\\
    + \sum_{j=1}^{k}(-1)^{i_{1} + \dots + i_{j-1}} f \otimes y^{1}_{1} \otimes \dots \otimes \Delta(y^{i_{j}}_{j}) \otimes y^{i_{j+1}}_{j+1} \otimes \dots \otimes y^{i_{k}}_{k}  
\end{multline} 
for {\small{$\varphi = f \otimes y^{i_{1}}_{1} \otimes \dots \otimes y^{i_{k}}_{k} \in \mathcal{C}^{p}_{H, \Delta}(\B, \M)$}}.
\end{definition}

\begin{lemma}
Given {\small{$(\B, \Delta)$}} a 1-shifted graded coalgebra  and {\small{$(\M, \omega_{R})$}} a  degree-1 right comodule over \scB, the pair {\small{$(\mathcal{C}^{\bullet}_{H, \Delta}(\B, \M)), d_{H, \Delta}$)}} defines a cohomology complex.
\end{lemma}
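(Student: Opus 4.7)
The plan is to follow the proof of Lemma \ref{Lemma: Hoch is complex, non graded}, carefully tracking the Koszul signs introduced by the $\mathbb{Z}$-grading. Since $d_{H,\Delta}$ is linear, it suffices to verify $d_{H,\Delta}^{2}(\varphi)=0$ on a homogeneous cochain $\varphi = f \otimes y^{i_{1}}_{1} \otimes \dots \otimes y^{i_{k}}_{k}$, and one checks at once from \eqref{Eq: def hoch coboundary graded} that $d_{H,\Delta}$ is a well-defined linear operator of degree $+1$: both the coaction $\omega_{R}$ (by assumption) and the coproduct $\Delta$ (by the graded coassociativity shift) raise the total degree of a cochain by one.

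Expanding $d_{H,\Delta}^{2}(\varphi)$ produces four families of terms: (a) the double coaction $(\omega_{R} \otimes \mathrm{id}) \circ \omega_{R}$ applied to $f$; (b) one application of $\omega_{R}$ to $f$ combined with one application of $\Delta$, either to the newly exposed element of $\mathcal{B}$ or to one of the $y^{i_{j}}_{j}$; (c) two applications of $\Delta$ to the same $y^{i_{j}}_{j}$; and (d) applications of $\Delta$ to two distinct slots $y^{i_{i}}_{i}$ and $y^{i_{j}}_{j}$. The aim is to show that each family either cancels internally or pairs off with another.

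The three structural cancellations I would verify are as follows. Family (a) cancels against the piece of (b) where $\Delta$ acts on the fresh $\mathcal{B}$-factor produced by $\omega_{R}$: this is precisely the graded compatibility axiom $(\omega_{R} \otimes \mathrm{id}) \circ \omega_{R} = -(\mathrm{id} \otimes \Delta) \circ \omega_{R}$, with the minus sign compensating the Koszul prefix in $d_{H,\Delta}$. Family (c) cancels by graded coassociativity: writing $\Delta(y^{i_{j}}_{j}) = z^{(1)} \otimes z^{(2)}$ with $|z^{(1)}| + |z^{(2)}| = i_{j}+1$, the two subsequent applications of $\Delta$, to $z^{(1)}$ and to $z^{(2)}$, pick up a relative Koszul sign $(-1)^{|z^{(1)}|}$ that is exactly cancelled by the sign $(-1)^{|z^{(1)}|+1}$ appearing in the coassociativity relation. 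Family (d) cancels pairwise between the orderings $(i,j)$ and $(j,i)$: applying $\Delta$ in the $i$-th slot first inflates the Koszul prefix seen by the $j$-th slot by one unit (since the degree sum in slot $i$ increases from $i_{i}$ to $i_{i}+1$), producing the compensating sign. The remaining mixed terms in family (b), with $\omega_{R}$ acting on $f$ and $\Delta$ on some $y^{i_{j}}_{j}$, cancel by the same mechanism as (d), the coaction $\omega_{R}$ now playing the role of $\Delta$ in the very first slot.

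The main obstacle, as in the ungraded case, is the sign bookkeeping, and it is genuinely more delicate here because the degree shift $+1$ produced by every application of $\omega_{R}$ or $\Delta$ must be correctly threaded through the Koszul prefix $(-1)^{i_{1}+\dots+i_{j-1}}$ before one can invoke graded coassociativity or the compatibility of $\omega_{R}$ with $\Delta$. Once the two sign conventions are aligned with these two structural axioms, all four families cancel and $d_{H,\Delta}^{2}=0$, which concludes the proof.
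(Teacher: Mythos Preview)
Your proposal is correct and follows essentially the same route as the paper: reduce to $d_{H,\Delta}^{2}=0$ on a homogeneous cochain, expand, and pair off the resulting summands using the graded compatibility of $\omega_{R}$ with $\Delta$ and the graded coassociativity of $\Delta$, while keeping track of the degree shifts $|\omega_{R}(f)|=1$ and $|\Delta(y^{i_{j}}_{j})|=i_{j}+1$. The paper's own proof is in fact just this sketch, without spelling out your four families; the only point it mentions that you pass over quickly is that $d_{H,\Delta}$ is well defined on the quotient by graded commutativity, which is worth a word since $T^{p}(\B)$ is defined modulo those relations.
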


\begin{proof}
Before starting the proof we notice that the coboundary operator {\small{$d_{H, \Delta}$}} is well defined on the quotient due to the grading of the coproduct {\small{$\Delta$}}. Similarly to Lemma \ref{Lemma: Hoch is complex, non graded}, to prove the statement we only have to check that the linear operator {\small{$d_{H, \Delta}$}} as defined in \eqref{Eq: def hoch coboundary graded} satisfies the coboundary condition, that is, {\small{$d_{H, \Delta}^{2} = 0$}}. This can be checked with a direct computation. The different summands appearing will cancel in pairs, with the properties used to draw this conclusion being, in order, the graded compatibility of the coaction {\small{$\omega_{R}$}} with the coproduct {\small{$\Delta$}} and the graded coassociativity of {\small{$\Delta$}}. Finally, useful remarks would be to notice that {\small{$|\omega_{R}(f)| = 1$}} and {\small{$|\Delta(y^{i_{j}}_{j})| = i_{j} +1$}}.
\end{proof}

After having recalled the definition of the Hochschild complex in the coalgebra setting, we have now to determine a pair {\small{$(\mathcal{B}, \mathcal{M})$}} such that their Hochschild complex {\small{$(\mathcal{C}^{\bullet}_{\Hoch, \Delta}(\mathcal{B}, \mathcal{M}), d_{\Hoch,\Delta})$}} coincides with the BV complex {\small{$(\mathcal{C}^{\bullet}_{\BV}(\widetilde{X}, d_{\widetilde{S}}), d_{\widetilde{S}})$}}. \\
\\
\noindent
{\bf The coalgebra $\mathcal{B}$.} Let \BVsp \ denote a BV spectral triple associated to an initial spectral triple \isp. Then let $\B = \oplus_{n\in \mathbb{Z}} \B_{n}$ be given, as vector space, by:
\begin{equation}
\label{Eq: def B}
\mathcal{B}:=
(\mathcal{Q}_{f} + [\mathcal{O}_{X_{0}}]^{\leq(\deg(S_{0})-1)}) \oplus (X_{0}^{*} [1] + \mathcal{Q}_{f}^{*}[1]).    
\end{equation}
More explicitly, the homogeneous components of \scB \ are 
$$\B_{0} = [\mathcal{O}_{X_{0}}]^{\leq(\deg(S_{0})-1)}, \quad \B_{-1} = X_{0}^{*} [1], \quad \B_{m} = [\mathcal{Q}_{f}]^{m}, \quad \B_{-m} = [\mathcal{Q}_{f}^{*}[1]]^{-m}$$ for {\small{$m>1$}}, where {\small{$[\mathcal{O}_{X_{0}}]^{\leq(\deg(S_{0})-1)}$}} denotes the finite dimensional vector space of polynomials in the initial fields \siX \ and with degree strictly lower then the degree of the action functional \siS.\\
\\
To complete the description of the coalgebra \scB, we still have to endow it with a coproduct structure {\small{$\Delta: \B \rightarrow \B \otimes \B$}} such that {\small{$\Delta$}} is a linear map satisfying 
$$\Delta(y^{a}) \in \bigoplus_{i+j = a+1} \B_{i} \otimes \B_{j}.$$ 
Given {\small{$y_{a}$}} a homogeneous element in \scB, we define 
\begin{equation}
\label{Eq: def Delta}
\Delta(y^{a}):= \left\lbrace S_{0} + \frac{1}{2} S_{\ferm}, y^{a} \right\rbrace     
\end{equation}

for {\small{$\{ - , - \}$}} the Poisson bracket structure defined on {\small{$\mathcal{O}_{\widetilde{X}} \cong \mathcal{F}(\mathcal{H}_{\BV})$}}, for {\small{$\mathcal{F}(\mathcal{H}_{\BV})$}} denoting the space of polynomials on {\small{$\mathcal{H}_{\BV, f}$}}, and induced by the pairing between fields/ghost fields and corresponding antifields/anti-ghost fields.

\begin{lemma}
    \label{lemma: B is a graded coalgebra}
The pair {\small{$(\B, \Delta)$}} defined in \eqref{Eq: def B} and \eqref{Eq: def Delta} is a 1-shifted graded coalgebra.
\end{lemma}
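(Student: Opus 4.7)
The goal is to check the three defining properties of a $1$-shifted graded coalgebra: that $\B$ carries a $\Z$-grading, that $\Delta$ is a well-defined linear map into $\B\otimes\B$ that shifts degree by one, and that $\Delta$ satisfies the graded coassociativity identity.

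The grading of $\B$ is immediate from the decomposition \eqref{Eq: def B}. For the second point, I would evaluate $\Delta(y^a)=\{\widetilde S,y^a\}$ on each type of homogeneous generator of $\B$, using the explicit formula \eqref{eq: extended action U(n)} for $\widetilde S$ together with the pairing relations \eqref{Eq: pairing fields/antifields}. On a field $x_a\in\B_0$ one gets $\sum_{q,r}f_{aqr}\,x_q\otimes C_r\in\B_0\otimes\B_1$; on a ghost $C_r\in\B_1$ one gets $\tfrac12\sum_{q,s}f_{rqs}\,C_q\otimes C_s\in\B_1\otimes\B_1$; on an anti-ghost $C^*_r\in\B_{-2}$ the result lies in $(\B_{-1}\otimes\B_0)\oplus(\B_{-2}\otimes\B_1)$; on an antifield $x^*_a\in\B_{-1}$ one obtains $\sum_{p,r}f_{par}\,x^*_p\otimes C_r\in\B_{-1}\otimes\B_1$ together with the polynomial $\partial_a S_0\in\B_0$, which is split into $\B_0\otimes\B_0$ by a canonical decomposition of each of its monomials (both factors have degree $<\deg S_0$, and hence lie in $\B_0$). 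In each case the total output degree is $a+1$, and $\Delta$ then extends linearly to all of $\B$ while preserving the shift by one, since the Poisson bracket $\{\widetilde S,-\}$ has degree $+1$.

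For the graded coassociativity I would bootstrap from the classical master equation $\{\widetilde S,\widetilde S\}=0$, established in Theorem~\ref{Thm: BV spectral triple, all n}. By the graded Jacobi identity one has $\{\widetilde S,\{\widetilde S,y^a\}\}=\tfrac12\{\{\widetilde S,\widetilde S\},y^a\}=0$ for every homogeneous $y^a$. Writing $\Delta(y^a)=\sum_i z^{(1)}_i\otimes z^{(2)}_i$ and expanding the outer bracket via the graded Leibniz rule gives
$$
0=\sum_i\{\widetilde S,z^{(1)}_i\}\,z^{(2)}_i+(-1)^{|z^{(1)}_i|}\sum_i z^{(1)}_i\,\{\widetilde S,z^{(2)}_i\},
$$
and, reading the inner brackets as $\Delta$ applied on each tensor factor, this rearranges into $(\Delta\otimes\mathrm{id})\Delta(y^a)=(-1)^{|z^{(1)}|+1}(\mathrm{id}\otimes\Delta)\Delta(y^a)$, which is exactly the graded coassociativity relation.

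The principal obstacle is the antifield case $\Delta(x^*_a)$: one must specify the splitting of the polynomial $\partial_a S_0\in\B_0$ into $\B_0\otimes\B_0$ and verify that this convention is compatible with the Leibniz expansion used in the coassociativity step, so that the Koszul signs line up correctly. Once this choice is fixed, the remaining verifications reduce to straightforward bookkeeping with the structure constants $f_{pqr}$, whose total antisymmetry and Jacobi identity do all the work.
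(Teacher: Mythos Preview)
Your approach is essentially the same as the paper's: both verify the degree shift by using that $S_{\ferm}$ is cubic (treating the antifield case $x^*_a$ separately because of the extra $\partial_a S_0$ term, which is viewed as lying in $\B_0\otimes\B_0$), and both derive graded coassociativity from $\{\widetilde S,\{\widetilde S,-\}\}=0$ expanded via the graded Leibniz rule. The splitting of $\partial_a S_0$ into $\B_0\otimes\B_0$ that you flag as the principal obstacle is a point the paper also leaves implicit, so your explicit case-by-case computation is, if anything, slightly more careful than the paper's argument.
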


\begin{proof}
To prove that {\small{$(\B, \Delta)$}} defines a graded coalgebra it is enough to notice that, by construction, the map {\small{$\Delta$}} is linear as the Poisson bracket is linear in both entries. Concerning the condition on the degree of the coproduct imposing that {\small{$\Delta(y_{a})$}} should be an element in {\small{$\bigoplus_{i+j = a+1} \B_{i} \otimes \B_{j}$}}, let us first discuss it for {\small{$y^{a} \in \B_{a}$}} with {\small{$a \neq -1$}}. In this case the only not-trivial contribution to the Poisson bracket comes from the contraction of {\small{$y^{a}$}}  with {\small{$S_{\ferm}$}}. By definition, the fermionic action is exactly cubic in the field/antifields/ghost/anti-ghost fields. A consequence of this property is that, when contracting {\small{$S_{\ferm}$}} with an element {\small{$y^{a}$}}, the result is always an element in the tensor product {\small{$\B_{i} \otimes \B_{j}$}}, with {\small{$i + j = a+1$}} that follows from the fact that the term {\small{$ S_{0} + 1/2 \cdot S_{\ferm}$}} has degree $0$ and the Poisson bracket has degree 1. If instead we consider a homogeneous element {\small{$y_{i}^{a}= x^{*}_{i}$}} of degree {\small{$a = -1$}}, in addition to the contribution coming from the contraction with the fermionic action, we would have also a term of the kind {\small{$\partial S_{0}/\partial x_{i}$}} coming from the contraction of {\small{$x^{*}_{i}$}} with the action \siS. However, also this term can be viewed as an element in  {\small{$\B_{0} \otimes \B_{0}$}}, with the condition on the degree clearly satisfied. Finally, to conclude that {\small{$(\B, \Delta)$}} is a graded coalgebra we still have to verify that {\small{$\Delta$}} satisfies the graded coassociativity condition:
$$(\Delta \otimes \mathrm{id}) \circ \Delta(\varphi) = (-1)^{|z^{(1)}| +1} (id \otimes \Delta)(\Delta(\varphi)), $$
for {\small{$\Delta(\varphi) = z^{(1)} \otimes z^{(2)}$}}. Substituting the explicit expression of {\small{$\Delta$}} in the left-hand side of the above equality, one would find:
$$(\Delta \otimes \mathrm{id}) \circ \Delta(\varphi) = \{ \widetilde{S}, z^{(1)} \} \otimes z^{(2)}, $$
where we use the fact that {\small{$S_{0} + 1/2 \cdot S_{\ferm} = \widetilde{S}$}} by the assumption made on the BV spectral triple \BVsp. Similarly, the right-hand side would be:
$$(id \otimes \Delta)(\Delta(\varphi)) = z^{(1)} \otimes \{ \widetilde{S}, z^{(2)}\}.$$
However, we know that {\small{$\{ \widetilde{S}, - \}$}} defines a coboundary operator, that is, given any element {\small{$\varphi$}}, it holds that {\small{$\{ \widetilde{S}, \{ \widetilde{S}, \varphi\} \} =0$}}. Explicitly, this determines the following series on equalities:
$$ 0 = \{ \widetilde{S}, \{ \widetilde{S}, \varphi\} \} = \{ \widetilde{S}, z^{(1)} \otimes z^{(2)} \} =  \{ \widetilde{S}, z^{(1)} \}\otimes z^{(2)} + (-1)^{|z^{(1)}|} z^{(1)}\otimes  \{ \widetilde{S}, z^{(2)} \}, $$
which implies the graded coassociativity of {\small{$\Delta$}}. 
\end{proof}

\noindent
{\bf The comodule $\mathcal{M}$.} Similarly to what we have done for the definition of the coalgebra \scB, let \BVsp \ be a BV spectral triple associated to an initial spectral triple \isp. Then we denote by \scM \ be the algebra generated by the space of self-adjoint 1-forms {\small{$\Omega^{1}(\mathcal{A}_{0})_{\text{s.a.}}$}}
\begin{equation}
    \label{eq: def M}
\mathcal{M}:= \langle \Omega^{1}(\mathcal{A}_{0})_{\text{s.a.}} \rangle\cong \mathcal{O}_{X_{0}},
\end{equation}

where we recall that
$$\Omega^{1}(\mathcal{A}_{0})_{\text{s.a.}} := \big\{ \varphi = \sum_{i} a_{j} [D_{0}, b_{j}] : \varphi^{*}= \varphi, a_{j}, b_{j} \in \mathcal{A}_{0}\big\} \cong X_{0}.$$
To complete \scM \ to a degree-1 right comodule over \scB \ we need to endow it with a linear map {\small{$\omega_{R}: \M \rightarrow \M \otimes \B_{1}$}} satisfying the graded compatibility condition with the coproduct {\small{$\Delta$}} on \scB. Given an element {\small{$f \in \M$}}, we define:
\begin{equation}
\label{Eq: def of omega}    
\omega_{R}(f):= \left\lbrace S_{0} + \frac{1}{2} S_{\ferm}, f\right\rbrace,
\end{equation}
with {\small{$\{ -, - \}$}} once again denoting the Poisson structure induced by the pairing of fields/antifields.

\begin{lemma} 
The pair {\small{$(\M, \omega_{R})$}}, with {\small{$\omega_{R}$}} defined in \eqref{Eq: def of omega} is a degree-1 right comodule over the coalgebra {\small{$(\B, \Delta)$}}.
\end{lemma}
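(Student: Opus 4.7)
The plan is to verify the three properties required for $(\M, \omega_R)$ to be a degree-$1$ right comodule over $(\B, \Delta)$: linearity of $\omega_R$, the degree-$1$ range condition $\omega_R(\M) \subset \M \otimes \B_1$, and the graded compatibility $(\omega_R \otimes \mathrm{id}) \circ \omega_R = -(\mathrm{id} \otimes \Delta) \circ \omega_R$. Linearity is immediate from the bilinearity of the graded Poisson bracket in its second slot.

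For the range condition, I would compute $\omega_R(f)$ explicitly for a polynomial $f \in \M \cong \mathcal{O}_{X_0}$. Since $f$ depends only on the initial fields $\{x_p\}$, the bracket $\{\widetilde{S}, f\}$ receives contributions only from those monomials of $\widetilde{S}$ containing some antifield $x_p^*$. By the closed form of the extended action in Equation \eqref{eq: extended action U(n)}, these are exactly the cubic terms $\sum_{p,q,r} f_{pqr}\, x_p^* x_q C_r$ appearing in $\tfrac{1}{2}S_{\ferm}$. Applying the pairing $\{x_p^*, x_p\} = 1$ produces, up to sign,
\[
\omega_R(f) = \sum_{p,q,r} f_{pqr}\, (\partial_p f \cdot x_q) \otimes C_r,
\]
where $\partial_p f \cdot x_q$ remains a polynomial in the $\{x_p\}$, hence an element of $\M$, while $C_r \in [\mathcal{Q}_f]^1 = \B_1$. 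The degree-$1$ property is automatic, since $|\widetilde{S}|=0$ and the Poisson bracket carries degree $+1$.

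The coaction axiom mirrors the coassociativity argument already carried out in Lemma \ref{lemma: B is a graded coalgebra}. By Theorem \ref{Thm: BV spectral triple, all n}, the extended action solves the classical master equation $\{\widetilde{S}, \widetilde{S}\}=0$, so the operator $\{\widetilde{S}, -\}$ squares to zero. Writing $\omega_R(f) = f^{(1)} \otimes f^{(2)}$ in Sweedler-style notation with $f^{(1)} \in \M$ of degree $0$ and $f^{(2)} \in \B_1$, the identity $\{\widetilde{S}, \{\widetilde{S}, f\}\} = 0$ combined with the graded Leibniz rule yields
\[
0 = \{\widetilde{S}, f^{(1)}\} \otimes f^{(2)} + (-1)^{|f^{(1)}|}\, f^{(1)} \otimes \{\widetilde{S}, f^{(2)}\} = \omega_R(f^{(1)}) \otimes f^{(2)} + f^{(1)} \otimes \Delta(f^{(2)}),
\]
where the sign $(-1)^{|f^{(1)}|}$ collapses to $+1$ because $|f^{(1)}|=0$. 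Rearranging gives precisely the required identity $(\omega_R \otimes \mathrm{id}) \circ \omega_R = -(\mathrm{id} \otimes \Delta) \circ \omega_R$.

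The only mild obstacle is bookkeeping the Koszul signs, but because $\M$ is concentrated in degree $0$ they collapse and the argument reduces to the same manipulation already used for $\Delta$. No new structural input is required beyond the classical master equation, which has already been established in Theorem \ref{Thm: BV spectral triple, all n}.
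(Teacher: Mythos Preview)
Your proof is correct and follows essentially the same approach as the paper: both verify linearity from the bilinearity of the bracket, argue the range condition $\omega_R(\M)\subset\M\otimes\B_1$ by noting that only the antifield terms $f_{pqr}\,x_p^*x_qC_r$ in $\tfrac12 S_{\ferm}$ contribute (since $\widetilde S$ is cubic and of total degree~$0$), and derive the graded compatibility with $\Delta$ from the classical master equation exactly as in the coassociativity argument of Lemma~\ref{lemma: B is a graded coalgebra}. Your version is in fact slightly more explicit, writing out $\omega_R(f)=\sum_{p,q,r}f_{pqr}(\partial_p f\cdot x_q)\otimes C_r$ where the paper argues the same landing point more abstractly.
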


\begin{proof}
To conclude that {\small{$(\M, \omega_{R})$}} is a degree-1 right comodule we only have to check that the map {\small{$\omega_{R}$}} takes values, as claimed, in the tensor product {\small{$\M \otimes \B_{1}$}}, it is linear and, finally, it satisfies a graded compatibility condition with the coproduct on 
\scB. However, when contracting an element $f$ in \scM \ with the action \swS, the only non-zero contribution comes from the terms in {\small{$S_{\ferm}$}} containing the antifields {\small{$x^{*}_{a}$}}, which have ghost degree $-1$. As already observed, the action {\small{$S_{\ferm}$}} is exactly cubic in the variables. Even more, when the gauge theory considered has the algebra of gauge symmetries which is closed on-sell one has an action {\small{$S_{\ferm}$}} which is linear in the antifields/anti-ghost fields and hence quadratic in the fields/ghost fields. This is in particular what happens in the {\small{$\mathrm{U}(n)$}}-models we analyzed in Section \ref{Sect: BV spectral triple}. All this, together with the fact that {\small{$S_{0} + 1/2 \cdot S_{\ferm}$}} has total ghost degree 0, allows to conclude that necessarily, contacting {\small{$S_{\ferm}$}} with a term $f$ in \scM \ we could only find a term in {\small{$\M \otimes \B_{1}$}}. Finally, similarly to what seen for the proof of the graded coassociativity of the coproduct {\small{$\Delta$}}, also the graded compatibility of the coaction {\small{$\omega_{R}$}} and the coproduct {\small{$\Delta$}} follows directly from the properties of the Poisson structure as well as from the fact that \swS \ is a solution to the classical master equation. This allows to conclude that {\small{$(\M, \omega_{R})$}} defines a degree-1 right comodule over \scB. 
\end{proof}

\subsection{The isomorphism of cohomology complexes}
Having described in full generality how to associate to any BV spectral triple the induced pair of coalgebra and comodule, that is, knowing how to perform the step:
$$(\mathcal{A}_{\BV}, \mathcal{H}_{\BV}, D_{\BV}, J_{\BV}) \longrightarrow (\mathcal{B}, \mathcal{M})$$
we can finally state and proof the main result of this section: the BV cohomology induced by a BV-extended theory \swXS \ is isomorphic to the Hochschild cohomology determined by the pair {\small{$(\mathcal{B}, \mathcal{M})$}}, as described in Definition \ref{def: Hochschild cohom, graded}.

\begin{theorem}
\label{Theorem: BV and Hochschild}
Let \BVsp\ be the BV spectral triple associated to an initial spectral triple \isp\ and corresponding to a BV-extended theory \swXS. Then, given {\small{$(\mathcal{B}, \mathcal{M})$}} the associated pair of a 1-shifted graded coalgebra {\small{$\mathcal{B}$}} together with the degree-1 right comodule {\small{$\mathcal{M}$}} over it, the following isomorphism holds at the level of the induced cohomology complexes:
$$\mathcal{C}^{\bullet}_{\BV}(\widetilde{X}, d_{\widetilde{S}}) \cong \mathcal{C}^{\bullet}_{\Hoch, \Delta}(\mathcal{B}, \mathcal{M}) $$
\end{theorem}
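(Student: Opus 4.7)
The strategy is to build, for each degree $k \in \Z$, an explicit $\mathcal{O}_{X_{0}}$-module isomorphism $\Phi^{k} \colon [\mathcal{O}_{\widetilde{X}}]^{k} \to \mathcal{C}^{k}_{\Hoch, \Delta}(\mathcal{B}, \mathcal{M})$ and then verify that the resulting family $\{\Phi^{k}\}_{k\in \Z}$ intertwines $d_{\widetilde{S}}$ with $d_{\Hoch,\Delta}$. The key observation driving the construction is that both complexes are built from the same raw material: graded-symmetric polynomials in the fields, ghosts, antifields and anti-ghosts with coefficients in $\mathcal{O}_{X_{0}}$, equipped with the single operator $\{\widetilde{S}, -\}$. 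The BV side packages all generators into one symmetric algebra, while the Hochschild side merely separates the $\mathcal{O}_{X_{0}}$-coefficient (storing it in $\mathcal{M}$) from the remaining generators (storing them as tensor factors in $T^{k}(\mathcal{B})$). Concretely, on a monomial $f(x)\cdot y_{1}\cdots y_{m}\in [\mathcal{O}_{\widetilde{X}}]^{k}$, with $f \in \mathcal{O}_{X_{0}}$ and $y_{1}, \dots, y_{m}$ homogeneous generators of $\widetilde{X}\setminus X_{0}$ of total ghost degree $k$, I would set
\begin{equation*}
\Phi^{k}\bigl(f \cdot y_{1}\cdots y_{m}\bigr) := f \otimes y_{1} \otimes \cdots \otimes y_{m}
\end{equation*}
and extend $\R$-linearly. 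Since each $y_{j}$ lies in $\mathcal{Q}_{f}$, $X_{0}^{*}[1]$ or $\mathcal{Q}_{f}^{*}[1]$, and thus in $\mathcal{B}$, the right-hand side is a well-defined element of $\mathcal{M}\otimes T^{k}(\mathcal{B})$; the quotient defining $T^{k}(\mathcal{B})$ by graded commutativity matches the graded-symmetric relations used in $\Sym_{\mathcal{O}_{X_{0}}}(\widetilde{X})$, so $\Phi^{k}$ descends consistently and admits a two-sided inverse given by multiplying the tensor factors, yielding a graded $\mathcal{O}_{X_{0}}$-module isomorphism in each degree.

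The heart of the proof is then the commutation $d_{\Hoch,\Delta}\circ \Phi^{k} = \Phi^{k+1} \circ d_{\widetilde{S}}^{k}$. This is essentially tautological once one unpacks the definitions: by Lemma~\ref{lemma: B is a graded coalgebra} and the definitions of $\omega_{R}$ and $\Delta$ in \eqref{Eq: def Delta} and \eqref{Eq: def of omega}, both the coaction and the coproduct are literally the restrictions of the operator $\{\widetilde{S},-\}$ to $\mathcal{M}$ and to $\mathcal{B}$ respectively. Applying the graded Leibniz rule for the Poisson bracket gives
\begin{equation*}
\{\widetilde{S}, f \cdot y_{1}\cdots y_{m}\} = \{\widetilde{S}, f\}\cdot y_{1}\cdots y_{m} + \sum_{j=1}^{m}(-1)^{|y_{1}|+\cdots+|y_{j-1}|} f \cdot y_{1}\cdots \{\widetilde{S}, y_{j}\}\cdots y_{m},
\end{equation*}
and its image under $\Phi^{k+1}$ becomes $\omega_{R}(f)\otimes y_{1}\otimes\cdots\otimes y_{m} + \sum_{j}(-1)^{\cdots} f \otimes y_{1}\otimes\cdots\otimes \Delta(y_{j})\otimes\cdots\otimes y_{m}$, which is exactly $d_{\Hoch,\Delta}(\Phi^{k}(f\cdot y_{1}\cdots y_{m}))$ as given in \eqref{Eq: def hoch coboundary graded}. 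Linearity then extends the identity to the whole cochain space, yielding the desired cochain-map property.

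The main obstacle I foresee is careful bookkeeping rather than a deep new idea. First, one must verify that the Koszul signs produced on the BV side by the graded Leibniz rule exactly match those appearing in the Hochschild differential; this is where the assumption that $\widetilde{S}$ has total ghost degree $0$ and that $\{-,-\}$ has degree $+1$ becomes essential. Second, one must check that when $y_{j}=x_{i}^{*}$, the contribution $\{\widetilde{S}, x_{i}^{*}\}$ contains a pure polynomial term $\partial S_{0}/\partial x_{i}$ that must land in the $\B_{0}\otimes \B_{0}$ summand of $\B\otimes \B$; this is precisely why $\B_{0}$ is taken to be the truncated space $[\mathcal{O}_{X_{0}}]^{\leq (\deg(S_{0})-1)}$ in \eqref{Eq: def B}, and verifying this compatibility is where the concrete form of the BV spectral triple, together with the classical master equation $\{\widetilde{S}, \widetilde{S}\}=0$ ensuring coassociativity of $\Delta$, enters decisively.
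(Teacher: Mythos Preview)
Your proposal is correct and follows essentially the same approach as the paper: both define $\Phi^{k}$ by separating the $\mathcal{O}_{X_{0}}$-coefficient from the remaining ghost/antifield generators (sending $f\, y_{1}\cdots y_{m}$ to $f\otimes y_{1}\otimes\cdots\otimes y_{m}$), and both verify the cochain-map property by exploiting that $d_{\widetilde{S}}$ and $d_{\Hoch,\Delta}$ act as graded derivations whose restrictions to $\mathcal{M}$ and $\mathcal{B}$ are literally $\omega_{R}$ and $\Delta$. Your explicit Leibniz-rule expansion and discussion of the $x_{i}^{*}$ contribution simply make explicit what the paper summarizes by saying the identities on generators follow ``straightforwardly from the definitions of $\omega_{R}$ and $\Delta$''.
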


\begin{proof}
To prove the statement, we need to define a collection {\small{$\{\Phi_{k}\}_{k \in \mathbb{Z}}$}} of maps $$\Phi^{k}: \mathcal{C}_{\BVt}^{k}(\widetilde{X}, d_{\widetilde{S}}) \longrightarrow \mathcal{C}_{\Hoch, \Delta}^{k}(\mathcal{B},  \mathcal{M})$$
between the degree $k$ cochain spaces, which is compatible with the two corresponding coboundary operators, that is, which satisfies the condition
$$d_{H, \Delta} \circ  \Phi^{k}|_{\varphi} = \Phi^{k+1} \circ d^{k}_{\widetilde{S}}|_{\varphi}$$
for all $\varphi \in \mathcal{C}_{\BVt}^{k}(\widetilde{X}, d_{\widetilde{S}})$. As briefly recalled in Section \ref{Subsect: The BV cohomology complex}, any BV-extended theory \swXS\ naturally induces a cohomology theory, called the {\emph{\cBRST\ cohomology}}, whose cochain spaces are:
$$\mathcal{C}_{\BVt}^{k}(\widetilde{X}, d_{\widetilde{S}}) := [Sym_{\mathcal{O}_{X_{0}}}(\widetilde{X})]^{k} = [\mathcal{O}_{\widetilde{X}}]^{k},  $$
for {\small{$ k \in \mathbb{Z}$}}, with coboundary operator given by
$$d_{\widetilde{S}}: \mathcal{C}_{\BVt}^{\bullet}(\widetilde{X}, d_{\widetilde{S}}) \rightarrow \mathcal{C}_{\BVt}^{\bullet +1}(\widetilde{X}, d_{\widetilde{S}}), \quad \quad \mbox{ for } \quad \quad d_{\widetilde{S}}:= \{ \widetilde{S}, - \}.$$ 
\noindent
Hence, from the definitions of \scB\ and \scM \ given respectively in \eqref{Eq: def B} and \eqref{eq: def M}, it immediately follows that
$$\mathcal{C}^{k}_{\Hoch, \Delta}(\mathcal{B}, \mathcal{M}) := \M \otimes T^{k}(\B) \cong \Sym_{\mathcal{M}}^{k}(\mathcal{B}) \cong \Sym_{\mathcal{O}_{X_{0}}}^{k}(\widetilde{X}) =: \mathcal{C}^{k}_{BV}(\widetilde{X}, d_{\widetilde{S}}),$$
in any degree {\small{$k \in \mathbb{Z}$}}. This isomorphism becomes evident if, given an element {\footnotesize{$\varphi$}} in {\footnotesize{$\mathcal{C}^{k}_{BV}(\widetilde{X}, d_{\widetilde{S}})$}} we describe it as {\footnotesize{$\varphi = \sum_{I = i_{1}, \dots, i_{p}} f_{I} D^{i_{1}}_{1} \dots D^{i_{p}}_{p}$}} where $f_{I}$ is an element in {\footnotesize{$\mathcal{O}_{X_{0}}$}}, {\footnotesize{$D^{i_{j}}_{j}$}} generically denotes any ghost/antifield/anti-ghost field with ghost degree {\small{$i_{j}$}} and the indices {$I$} have variable cardinality but they satisfy the condition {\small{$i_{1} + \dots + i_{p} = k$}}. Then, fixed {\small{$k$}} in {\small{$\mathbb{Z}$}}, we explicitly define the isomorphism map {\small{$\Phi_{k}$}} to be:
$$ \Phi^{k}: \mathcal{C}^{k}_{BV}(\widetilde{X}, d_{\widetilde{S}}) \rightarrow \mathcal{C}^{k}_{\Hoch, \Delta}(\mathcal{B}, \mathcal{M}) \quad \quad \sum_{I} f_{I} D^{i_{1}}_{1} \dots D^{i_{p}}_{p} \to \sum_{I} f_{I} \otimes y_{1}^{i_{1}} \otimes \dots \otimes y_{p}^{i_{p}}   $$
To conclude the proof we just need to check the compatibility of this isomorphism on the spaces of cochains with the two coboundary operators {\small{$d_{\widetilde{S}}$}} and {\small{$d_{H, \Delta}$}}. Due to the linearity of the maps {\small{$\Phi^{\bullet}$}} as well as of the coboundary operators, we can restrict ourselves to consider an element {\small{$\varphi$}} given by one single summand. Moreover, as both {\small{$d_{\widetilde{S}}$}} and {\small{$d_{H, \Delta}$}} act as graded derivation, the compatibility of the isomorphism with the two coboundary operators will immediately follows from verifying this holds on the generators:
$$\Phi^{1}(\{ \widetilde{S}, f\}) = \omega_{R}(\Phi^{0}(f)) \quad \quad \Phi^{i_{j}+1}(\{ \widetilde{S}, D^{i_{j}}_{j}\}) = \Delta(\Phi^{i_{j}}(D^{i_{j}}_{j})).$$
As both these equation follows straightforwardly from the definitions of {\small{$\omega_{R}$}} and {\small{$\Delta$}}, we can then conclude that the collection of maps {\small{$\{ \Phi^{k}\}_{k \in \mathbb{Z}}$}} is a cochain isomorphism for the cochain complexes {\small{$\mathcal{C}^{\bullet}_{\BV}(\widetilde{X}, d_{\widetilde{S}})$}} and {\small{$ \mathcal{C}^{\bullet}_{\Hoch, \Delta}(\mathcal{B}, \mathcal{M})$}}
\end{proof}

\noindent
{\emph{Note:}} The theorem was proved for a generic BV spectral triple \BVsp, not only in the specific case of {\small{$\mathrm{U}(n)$}}-gauge theories we focused our attention on in Section \ref{Sect: BV spectral triple}. This implies that the Hochschild cohomology complex associated to a BV spectral triple and the BV cohomology of the corresponding BV extended theory \swXS \ are isomorphic also in case one considers reducible gauge theories with level of reducibility $L>0$ and not only in the irreducible case, which involves only ghost fields of ghost degree $1$. The only additional condition one has to take into account is the need for the BV action {\small{$S_{\BV}$}} to be linear in the antifields/anti-ghost fields and quadratic in the ghost fields (cf. Lemma \ref{lemma: B is a graded coalgebra}). This is a property that comes naturally in the context of BV spectral triple associted to {\small{$\mathrm{U}(n)$}}-gauge theories. However, one could remark that the linearity in the  antifields/anti-ghost fields is not a very restrictive condition as it is satisfied each time one considers a gauge theory whose algebra of gauge symmetries is closed on-shell.\\
\\
In this section we reached the goal of proving, for the first time in the framework provided by NCG and spectral triples, that the \cBRST\ complex coincides with the Hochschild cohomology complex of a suitable coalgebra \scB\ over a comodule \scM. We expect that this new point of view on the \cBRST\ cohomology could not only contribute to its explicit computation, by relating it to another already well-known and widely investigated cohomology theory, but also to provide the BV complex with a more geometrical understanding within the framework of NCG.\\

\section{Auxiliary fields and the total spectral triple}
\label{Sect: gauge-fixing procedure in terms of NCG}
\noindent
As explained and motivated in Section \ref{Subsec: Auxiliary fields and the gauge-fixing process}, determining the BV-extended pair \swXS \ and its induced BV complex might not be the final result we aim for: indeed, the presence of antifields and anti-ghost fields, both in the configuration space \swX \ as well as in the extended action \swS, imposes the implementation of a gauge-fixing procedure before being able to perform some computations for the theory, such as determining the elements of the S-matrix. However, a preliminary step is required before being able to apply such a gauge-fixing, that is, the introduction of auxiliary fields (cf. Definition \ref{auxiliary pair}). This is just a technical step: the minimal collection of auxiliary pairs to introduce, their type and their number, only depends on the {\emph{level of reducibility}} of the theory (cf. Definition \ref{def: level of reducibility}), as stated in the following theorem, where {\small{$\Psi$}} denotes a gauge-fixing fermion as defined in Definition \ref{def gauge fixing fermion}.

\begin{theorem}
\label{teorema auxiliary fields caso generale}
Let {\small{$(\widetilde{X}, \widetilde{S})$}} be a BV-extended theory with level of reducibility L. Then, to determine a total theory {\small{$(X_{\t}, S_{\t})$}} whose gauge-fixed action {\small{$S_{\t}|_{\Psi}$}} is a proper solution to the classical master equation, a collection of auxiliary pairs {\small{$\{ (B_{i}^{j}, h_{i}^{j}) \}$}}, with {\small{$i= 0, \dots, L$}} and {\small{$j=1,\dots, i+1$}}, has to be introduced, which is completely determined by imposing that 
$$\deg(B_{i}^{j}) = j -i -2  \quad \mbox{ if j is odd}, \quad \quad \deg(B_{i}^{j}) = i -j +1 \quad \mbox{ if j is even}.$$
\end{theorem}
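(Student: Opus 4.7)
The plan is to unfold the properness condition on \stXS \ into an explicit rank requirement on the Hessian at critical points, and then to verify by induction on the level of reducibility $L$ that exactly the stated pyramid of auxiliary pairs is both necessary and sufficient. First, I would make precise what it means that $S_{\t}|_{\Psi}$ is a proper solution of the classical master equation: concretely, that the Hessian of $S_{\t}|_{\Psi}$ at its critical locus has rank equal to the dimension of $\mathcal{E}$, so that no residual gauge freedom survives. Translating into the language of the BV formalism, the auxiliary fields must supply, for every antifield or anti-ghost field in \swX, a non-degenerate pairing via some partial derivative of the gauge-fixing fermion $\Psi$, and they must do so without introducing new unfixed gauge symmetries. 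Since $\Psi$ must have total ghost degree $-1$ and depend only on fields/ghost fields, the degree of each new auxiliary field is already strongly constrained by which antifield it is meant to gauge-fix.

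For the base case $L=0$ the BV-extended theory has only a single generation of ghost fields, all of ghost degree $1$. To write down a non-trivial $\Psi$ of ghost degree $-1$, one needs a new variable of degree $-1$ as a \emph{slot} to pair against the initial fields of degree $0$; this is exactly $B_0^1$ with $\deg(B_0^1) = 1 - 0 - 2 = -1$, together with its Nakanishi--Lautrup companion $h_0^1$ of ghost degree $0$. A direct Hessian computation, which one can carry out explicitly in the $\mathrm{U}(n)$-model, then verifies that with this single pair added the gauge-fixed action is a proper solution.

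The inductive step is the heart of the argument. Assuming the claim for levels $0,\dots,k-1$, one passes to level $k$: the new ghosts-for-ghosts of ghost degree $k+1$ reflect the $(k+1)$-st stage reducibility of the gauge algebra, and fixing their gauge symmetries forces the introduction of an anti-ghost $B_k^1$ of ghost degree $-k-1 = 1-k-2$. However, adding $B_k^1$ introduces its own shift symmetry at the level of the anti-ghost/ghost pairing, and fixing this symmetry in turn requires a new pair $(B_k^2, h_k^2)$ with $\deg(B_k^2)=k-1=k-2+1$. Iterating this \emph{unraveling} produces the alternating ladder: the odd-indexed pairs continue to descend in ghost degree while the even-indexed pairs ascend, reproducing exactly $\deg(B_k^j)=j-k-2$ for $j$ odd and $\deg(B_k^j)=k-j+1$ for $j$ even. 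The $(k+1)$-st pair closes the recursion by landing on a ghost degree that exhausts the deepest antifield still needing to be fixed, so no further pair is required at level $k$.

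The main obstacle is the bookkeeping of this alternating pattern at each step of the induction, and in particular the verification that each new auxiliary pair contributes precisely one non-degenerate block to the Hessian of $S_{\t}|_{\Psi}$ without either leaving a residual symmetry or over-constraining the system. This amounts to a linear-algebra computation organised by ghost degree on the tangent space at a critical point. It ultimately shows that both the total count $(L+1)(L+2)/2$ of auxiliary pairs and their precise degree assignments are forced by properness, so that the collection described in the theorem is both necessary and sufficient.
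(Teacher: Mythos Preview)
The paper does not actually supply a proof of this theorem: immediately after the statement it simply records that ``this theorem was first proved inductively by Batalin and Vilkovisky'' and cites \cite{BV1,BV2}. So there is no in-paper argument to compare your proposal against beyond the single word ``inductively''.

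Your plan is compatible with that remark: you also argue by induction on the level of reducibility, with the irreducible case $L=0$ as the base and the alternating ladder of anti-ghosts and extra-ghosts as the inductive mechanism. The degree bookkeeping you describe reproduces the Batalin--Vilkovisky triangle, and your reformulation of properness as a rank condition on the Hessian of $S_{\t}|_{\Psi}$ at the stationary locus is the standard way the condition is unpacked. In that sense your outline follows the same route the paper points to rather than a genuinely different one.

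That said, what you have written is a plan, not a proof. Two places would need to be made precise before it could stand on its own. First, the inductive step: you assert that adding $B_k^1$ ``introduces its own shift symmetry'' forcing $B_k^2$, and so on, but you do not exhibit the degeneracy in the Hessian that each new pair is meant to cure, nor do you show that the process actually terminates at $j=k+1$ (rather than, say, continuing indefinitely). Second, the necessity direction: you claim the collection is ``forced by properness'', but the argument only sketches sufficiency; to get necessity you would need to show that any smaller or differently graded set of auxiliary pairs leaves some block of the Hessian singular for every admissible $\Psi$. Since the paper defers all of this to the original references, filling in these details would go beyond what is required here, but you should be aware that they are not yet present in your write-up.
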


\noindent
This theorem was first proved inductively by Batalin and Vilkovisky (cf. \cite{BV1}, \cite{BV2}).\\
\\
In order for the language of noncommutative geometry to provide a suitable and worthwhile point of view to approach the study of the BV construction, we have to require that also this step of the process fits within the spectral triple framework. To reach this goal we are going to introduce the concept of {\emph{total spectral triple}}.\\
\\
{\bf The total spectral triple.} \ Let us start by briefly recalling that, given a BV-extended theory \swXS, the introduction of auxiliary fields determines the creation of a so-called total theory \stXS\ with:
$$X_{\t} := \widetilde{X} + \big\{ \mbox{auxiliary fields and antifields} \big\} \qquad \mbox{and} \qquad S_{\t}:= \widetilde{S} + S_{\aux}.$$
Our goal is to introduce a new object, called a \textquotedblleft total spectral triple\textquotedblright, able to encode the total theory \stXS. Precisely, we want to determine how to perform the step: 
$$\begin{array}{ccc}
(\mathcal{A}_{\BV}, \mathcal{H}_{\BV}, D_{\BV}, J_{\BV})& ---------\rightarrow  &(\mathcal{A}_{\t}, \mathcal{H}_{\t}, D_{\t}, J_{\t})\\
\mbox{\small{BV spectral triple}} & \mbox{\tiny{+ auxiliary fields/antifields}} & \mbox{\small{total spectral triple}}
\end{array}
$$
that is, to explain how to extract all the needed information from a given BV spectral triple in order to construct the associated total spectral triple. Moreover, we also need that this further passage coherently enters the framework we are revealing, step by step, for performing the BV construction in the language of finite spectral triple. Up to now indeed, we demonstrate how to define a BV spectral triple such that not only it encodes the BV-extended theory \swXS\ but also it determines a pair {\small{$(\mathcal{B}, \mathcal{M})$}} of a coalgebra and a comodule on it such that their induced Hochschild complex coincides with the BV complex defined by \swXS. To have that our translation of the BV construction in terms of NCG is a coherent and consistent process, the total spectral triple has to present the same properties. Precisely: \vspace{1mm}
\noindent
\begin{enumerate}
\item not only  the total spectral triple has to encode the total theory but the way how, given a total spectral triple, one can reconstruct the corresponding total theory should be exactly the same method used to extract from a BV spectral triple the corresponding extended theory \swXS. \vspace{2mm}

\item By implementing on \TOTsp\ the same construction described in Section \ref{Sect: BV and Hochschild complex} for the BV spectral triple, we can once again determine a pair {\small{$(\mathcal{B}_{\t}, \mathcal{M}_{\t})$}} with {\small{$\mathcal{B}_{\t}$}} a coalgebra and {\small{$\mathcal{M}_{\t}$}} a comodule over it. To be able to assert that this whole formalism is indeed consistent, a necessary condition to fulfill is that the Hochschild complex of {\small{$(\mathcal{B}_{\t}, \mathcal{M}_{\t})$}} is isomorphic to the BV complex defined by the total theory \stXS. \vspace{2mm}
\end{enumerate}
\noindent
All this can be summarized in the diagram appearing in Figure \ref{diag. tot sp triple}: our aim is to determine a spectral triple \TOTsp\ that could make the diagram commutative.

\begin{figure}[h]
\begin{picture}(400, 90)
\put(12, 80){\small{$(\mathcal{A}_{0}, \mathcal{H}_{0}, D_{0})$}}
\put(68, 85){\tiny{+ ghost fields}}
\put(65, 83){\vector(1,0){52}}
\put(42, 66){\tiny{(bosonic)}}
\put(40, 60){\tiny{induced th.}}
\put(36, 75){\vector(0,-1){22}}
\put(20, 40){\small{$(X_{0}, S_{0})$}}
\put(63, 45){\tiny{+ ghost fields}}
\put(55, 43){\vector(1,0){65}}

\put(120, 80){\small{$(\mathcal{A}_{\BV}, \mathcal{H}_{\BV}, D_{\BV}, J_{\BV})$}}
\put(217, 85){\tiny{+ auxiliary flds}}
\put(197, 45){\tiny{+ auxiliary fields}}
\put(215, 83){\vector(1,0){57}}
\put(143, 66){\tiny{(fermionic)}}
\put(143, 60){\tiny{induced th.}}
\put(139, 75){\vector(0,-1){22}}
\put(125, 40){\small{$(\widetilde{X}, \widetilde{S})$}}
\put(155, 43){\vector(1,0){107}}
\put(139, 35){\vector(0,-1){24}}
\put(193, 75){\vector(0,-1){42}}
\put(180, 27){\small{$(\mathcal{B},  \mathcal{M})$}}
\put(193, 24){\vector(0,-1){14}}
\put(125, 0){\small{$\mathcal{C}^{\bullet}_{\BV}(\widetilde{X}, d_{\widetilde{S}})\cong\mathcal{C}^{\bullet}_{\Hoch, \Delta}(\mathcal{B}, \mathcal{M})$}}

\put(275, 80){\small{$(\mathcal{A}_{\t}, \mathcal{H}_{\t}, D_{\t}, J_{\t})$}}
\put(284, 66){\tiny{(fermionic)}}
\put(284, 60){\tiny{induced th.}}
\put(280, 75){\vector(0,-1){22}}
\put(265, 40){\small{$(X_{\t},  S_{\t})$}}
\put(280, 35){\vector(0,-1){27}}
\put(330, 75){\vector(0,-1){42}}
\put(315, 27){\small{$(\mathcal{B}_{\t},  \mathcal{M}_{\t})$}}
\put(330, 24){\vector(0,-1){14}}
\put(255, 0){\small{$\mathcal{C}^{\bullet}_{\BV}(X_{\t}, d_{S_{\t}})\cong\mathcal{C}^{\bullet}_{\Hoch, \Delta}(\mathcal{B}_{\t}, \mathcal{M}_{\t})$}}
\end{picture}
\caption{}
\label{diag. tot sp triple}
\end{figure}

\noindent
Before explaining in details the construction, let us state the precise definition of what will be for us a {\emph{total spectral triple}}.

\begin{definition}
Let \isp\ be a finite spectral triple with induced gauge theory \siXS. Then an (eventually mixed KO-dimension) finite spectral triple \TOTsp\ is a {\emph{total spectral triple}} for \isp\ if the pair: 
$$X_{\t}:= \big(\mathcal{Q}_{f}^{*}[1] + X^{*}_{0}[1] \big) \oplus \big(X_{0} + \mathcal{Q}_{f} \big) \oplus \big( \mathcal{R}_{f}^{*}[1] \oplus \mathcal{R}_{f} \big), $$
where 
$$\mathcal{H}_{\t, f} = (\mathcal{Q}_{f}^{*}[1] \oplus \mathcal{Q}_{f}) \oplus (\mathcal{R}_{f}^{*}[1] \oplus \mathcal{R}_{f} ) \quad \mbox{and} \quad X_{0} \cong [\Omega^{1}(\mathcal{A}_{0})]_{\text{s.a.}}$$
and 
$$S_{\t}:= S_{0} + \frac{1}{4} \langle J_{\t}(\varphi), D_{\t} \varphi \rangle\ \in [\mathcal{O}_{X_{\t}}]^{0}$$
for $\varphi$ a generic element in {\small{$\mathcal{H}_{\t, f}$}}, defines the total theory {\small{$(X_{\t}, S_{\t})$}} associate to the initial gauge theory \siXS.
\end{definition}

\noindent
{\emph{Note:}} we observe that, by definition, we are in particular imposing that any total spectral triple has to satisfy condition $(1)$. Indeed, this definition reproduces exactly the same structure already observed for the BV spectral triple: 
\begin{enumerate}[$\blacktriangleright$]
\item we expect the total configuration space \stX\ to be encoded in the total spectral triple as the total Hilbert space \HTOT; \vspace{2mm}
\item the total action \stS\ is asked to coincide with the fermionic action induced by the operator \DTOT\ and the real structure \JTOT.
\end{enumerate}\vspace{1mm}
\noindent
Hence, our starting point is the BV spectral triple constructed in Section \ref{Sect: BV spectral triple} and given by:
$$(\mathcal{A}_{\BV}, \mathcal{H}_{\BV}, D_{\BV}, J_{\BV}) = (M_{n}(\mathbb{C}), \mathcal{Q}^{*}[1] \oplus \mathcal{Q}, D_{\BV}, J_{\BV}))$$
where 
$$\mathcal{Q} = [M_{n}(\mathbb{C})]_{0} \oplus [M_{n}(\mathbb{C})]_{1}$$
with effective BV Hilbert space 
$$\mathcal{H}_{\BV, f} =  \mathcal{Q}_{f}^{*}[1] \oplus \mathcal{Q}_{f} \quad \mbox{for} \quad \mathcal{Q}_{f} = [i\mathfrak{su}(n)]_{0} \oplus [i\mathfrak{su}(n)]_{1}. $$
We now proceed to the definition of the different parts of \TOTsp, starting with its algebra \ATOT.
\\
\\
\noindent
{\bf The algebra $\mathcal{A}_{\t}$.} Similarly to what done for the construction of a BV algebra \ABV, also in this case the algebra \ATOT\
 will be defined only a posteriori as the maximal unital algebra completing the triple {\small{$(\mathcal{H}_{\t}, D_{\t}, J_{\t})$}}, which we are soon going to define, to a spectral triple with KO-dimension $1$ (cf. Theorem \ref{Theor: total spectral triple}). However, making a comparison with how the extension of an initial spectral triple to a BV spectral triple appears at the level of the corresponding algebra, we could make a first guess about how the further extension via the introduction of auxiliary fields will act at the algebra level. Indeed, let us notice that the algebra {\small{$\mathcal{A}_{0}$}} incorporates the information about the initial configuration space, and hence about the initial and physically relevant part of the theory, as it holds that {\small{$X_{0}:=[\Omega^{1}(\mathcal{A}_{0})]_{\text{s.a.}}$}}. Because the underlying (existing) physical part did not change when we introduced the ghost/anti-ghost fields, this resulted in the algebra \ABV\ being equal to \AI. Hence, as introducing the auxiliary fields once again does not interfere with the existing physical content of the theory, it is not surprising that the algebra \ATOT\ will coincide with \AI. By using the tool of the Krajewski diagrams or by a direct computation one can verify that, what suggested by the physical intuition is actually mathematically true, that is:
$$\mathcal{A}_{\t}= \mathcal{A}_{\BV} = \mathcal{A}_{0} = M_{n}(\mathbb{C}).$$
\\
\noindent
{\bf The Hilbert space $\mathcal{H}_{\t}$.} In order to construct \HTOT\ one has first to determine how many and which kind of auxiliary pairs are required by the theory. Looking at the graded structure of {\small{$\mathcal{H}_{\BV}$}}, one immediately computes the level of reducibility of the theory, which in our case is {\small{$L=0$}}. Hence, according to Theorem \ref{teorema auxiliary fields caso generale}, the presence of {\small{$n^{2}-1$}} ghost fields {\small{$C_{q}$}} of degree $1$, which are encoded in the term {\small{$[i\mathfrak{su}(n)]_{1} \subset \mathcal{H}_{\BV, f}$}}, determines the appearance of the summands
 $$[\mathfrak{su}(n)]_{-1} \oplus [i\mathfrak{su}(n)]_{0} \subseteq [M_{n}(\mathbb{C})]_{-1} \oplus [M_{n}(\mathbb{C})]_{0}$$
in {\small{$\mathcal{H}_{\t, f} \subseteq\mathcal{H}_{\t}$}}, which corresponds to {\small{$n^{2}-1$}} trivial pairs  
 $$\{(B_{q}, h_{q})\}_{q= 1, \dots, n^{2}-1}\quad  \mbox{with} \quad \deg(B_{q})=  -1,  \quad \mbox{and} \quad
\deg(h_{q})= 0.$$
 \noindent
Recalling that the BV formalism forces the introduction of all the antifields corresponding to the auxiliary fields just listed, we conclude that the total Hilbert space is given by: 

\begin{equation}
\label{eq: HTOT}
\mathcal{H}_{\t}: = \mathcal{H}_{\BV} \oplus \mathcal{H}_{\aux}, \quad \mbox{where} \quad \mathcal{H}_{\BV} = [M_{n}(\mathbb{C})]^{\oplus 4}, \quad \mathcal{H}_{\aux} = [M_{n}(\mathbb{C})]^{\oplus 4} 
\end{equation}
\noindent
and the effective part of {\small{$\mathcal{H}_{\aux}$}}, which determines the domain of the induced fermionic action, is
\begin{equation}
\label{eq: Rf}
\mathcal{H}_{\aux, f} = \mathcal{R}_{f}^{*}[1] \oplus \mathcal{R}_{f} \quad \quad \mbox{for} \quad \quad 
\mathcal{R}_{f}:= [\mathfrak{su}(n)]_{-1} \oplus [i \mathfrak{su}(n) ]_{0}.
\end{equation}

\noindent 
For clarity in the notation, when realizing \HTOT\ as direct sum of matrix algebras, each of them carrying a degree, given the decomposition of \HTOT\ as 
$$\mathcal{H}_{\t} = (\mathcal{Q}^{*}[1] \oplus \mathcal{Q}) \oplus (\mathcal{R}^{*}[1] \oplus \mathcal{R}) $$
the matrix algebras contributing to each summand will be ordered by increasing degree. Explicitly, the elements in {\small{$\mathcal{H}_{\aux}$}} can be described as vectors of the type: {\small{$[h^{*}_{i}, B_{i}, B^{*}_{i}, h_{i}]$}}. This convention will be used in the following paragraph, when describing the operator \DTOT\ explicitly as a matrix.

\begin{oss}
Even though at this point the fact that the auxiliary fields {\small{$B_{i}$}} should be embedded in {\small{$M_{n}(\mathbb{C})$}} as the term {\small{$[\mathfrak{su}(n)]$}} while the fields {\small{$h_{i}$}} as {\small{$[i\mathfrak{su}(n)]$}} might appear arbitrary, one can check that this choice is the only one which would allow to comply with the requirement of having an induced auxiliary action {\small{$S_{\aux}$}} with {\emph{real}} coefficients. 
\end{oss}

\noindent
{\bf The operator $D_{\t}$.} What determines the explicit form of the operator \DTOT\ is the condition that the induced fermionic action {\small{$S_{\ferm}$}} has to coincide with the total action \stS. Precisely, it should hold that:
$$\frac{1}{2} S_{\ferm}(\varphi) := \frac{1}{4} \langle J_{\t}(\varphi), D_{\t}\varphi \rangle \cong S_{\BV} + S_{\aux} \quad \mbox{for} \quad S_{\aux} := \sum_{i=1}^{n}B^{*}_{i} h_{i},$$
where we are implicitly identifying a linear functional {\small{$S_{\ferm}$}} formally defined on {\small{$\mathcal{H}_{\t, f}$}} with its representation as polynomial in {\small{$[\mathcal{O}_{X_{\t}}]^{0}$}}. Because in the action \stS\ there are no terms involving both the auxiliary fields as well as the ghost fields/anti-ghost fields already introduced in \swX, the operator {\small{$D_{\t}$}} has to have the structure of a block matrix with
\begin{equation}
\label{Eq: DTOT}
D_{\t} = \begin{pmatrix} D_{\BV} & 0 \\ 0 & D_{\aux} \end{pmatrix}
\end{equation}
for \DBV\ and {\small{$D_{\aux}$}} two self-adjoint operators. While \DBV\ is the operator appearing in the BV spectral triple, acting and taking values on the summand \HBV\ in {\small{$\mathcal{H}_{\t}$}} and contributing the term \SBV\ to the fermionic action, {\small{$D_{\aux}$}} can be seen as an operator {\small{$D_{\aux}: \mathcal{H}_{\aux} \rightarrow  \mathcal{H}_{\aux}$}}, which will contribute the term {\small{$S_{\aux}$}}. Moreover, as {\small{$S_{\aux}$}} contains only terms which are linear in the auxiliary fields as well as in their antifields, the operator {\small{$D_{\aux}$}}, seen as a block matrix, necessarily is of the form 
$$D_{\aux}:=  \begin{pmatrix}0 & T \\
T^{*} & 0 
\end{pmatrix}
$$
for {\small{$T: \mathcal{R}\rightarrow \mathcal{R}^{*}[1]$}} a linear operator with constant entries. To explicitly discover the matrix representation of this operator $T$ one could start by imposing the condition that the induced action has to be of ghost degree $0$. Even though this requirement forces several entries of the matrix to be zero, still this is not enough to completely determine the matrix as some of the auxiliary fields have the same ghost degree. Actually, what need to be implemented in $T$ is the information about the pairing existing between the auxiliary fields: precisely, the only non-zero entries are in correspondence of auxiliary fields which belong to the same auxiliary pair. For our model, we have that the matrix $T$ has the following explicit expression: \vspace{1mm}
{\small{
\begin{gather*}  
T:= \begin{pmatrix} 
0 & 0 \vspace{1mm}\\
0 & 2\cdot \mathrm{Id}\vspace{1mm}\\
\end{pmatrix}
\end{gather*}
}}

\noindent
where each entry {\small{$2 \cdot \mathrm{Id}$}} contributes the summand {\small{$\sum_{i = 1}^{n^{2}-1} B^{*}_{i}h_{i}$}} in the auxiliary action {\small{$S_{\aux}$}}. We remark how the explicit description of the operator $T$ clearly reflects the cohomological triviality of the extension of the theory via the introduction of the auxiliary fields.\\
\\
\noindent
{\bf The real structure $J_{t}$.} Finally, to complete the description of the total spectral triple we still have to specify its real structure. Because we are simply enlarging the BV spectral triple to include additional fields, it is natural to consider as real structure the one obtained by extending the domain of {\small{$J_{\BV}$}} from \HBV \ to \HTOT:
\begin{equation}
\label{Eq: JTOT}
J_{\t}:  \mathcal{H}_{\t} \rightarrow \mathcal{H}_{\t}, \quad \mbox{with} \quad J_{\t}(\varphi_{j}): = i \cdot (\phi_{j}^{*}) 
\end{equation}
for $\phi_{j} \in  [M_{n}(\C)]_{j}$ an element in the direct sum defining \HTOT \ and $*$ denoting the matrix adjoint.\\
\\
To conclude, we summarize and formalize this construction of the total spectral triple for our {\small{$\mathrm{U}(n)$}}-model in the following theorem.

\begin{theorem}
\label{Theor: total spectral triple}
Given a finite spectral triple {\small{$(\mathcal{A}_{0}, \mathcal{H}_{0}, D_{0}) = (M_{n}(\mathbb{C}), \mathbb{C}^{n}, D_{0})$}} inducing a {\small{$\mathrm{U}(n)$}}-gauge theory, let 
$$(M_{n}(\mathbb{C}), [M_{n}(\mathbb{C})]_{-2} \oplus [M_{n}(\mathbb{C})]_{-1} \oplus[M_{n}(\mathbb{C})]_{0} \oplus [M_{n}(\mathbb{C})]_{1}, D_{\BV}, J_{\BV})$$ 
be the associated BV spectral triple, with effective BV Hilbert space {\small{$\mathcal{H}_{\BV, f} = \mathcal{Q}^{*}_{f}[1] \oplus \mathcal{Q}_{f}$}}. Then, the corresponding total spectral triple is
$$(\mathcal{A}_{\t}, \mathcal{H}_{\t}, D_{\t}, J_{\t}) = (M_{n}(\mathbb{C}), \mathcal{H}_{\BV}\oplus \big[ [M_{n}(\mathbb{C})]^{\oplus 2}_{-1} \oplus [M_{n}(\mathbb{C})]^{\oplus 2}_{0} \big], D_{\t}, J_{\t}),$$
for {\small{$\mathcal{H}_{\t, f} := \mathcal{H}_{\BV, f} \oplus (\mathcal{R}^{*}_{f}[1] \oplus \mathcal{R}_{f})$}}, where {\small{$\mathcal{R}_{f}$}} is given in \eqref{eq: Rf}, {\small{$D_{\t}$}} is as described in \eqref{Eq: DTOT} and \JTOT\ is defined in \eqref{Eq: JTOT}. Moreover, the choice of {\small{$\mathcal{A}_{\t}= M_{n}(\mathbb{C})$}} is maximal.
\end{theorem}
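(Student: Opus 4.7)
The plan is to verify the three claims embedded in the statement: (i) \TOTsp\ is a (mixed KO-dimension) real finite spectral triple; (ii) the induced pair \stXS\ coincides with the total theory associated to \siXS; and (iii) the algebra choice is maximal. Since \DTOT\ and \JTOT\ are block-diagonal extensions of \DBV\ and \JBV, most structural properties will reduce to checking the new \textquotedblleft auxiliary block\textquotedblright\ separately and then recombining with the results already established for the BV spectral triple.

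First I would check the spectral triple axioms. Faithful bounded representation of {\small{$\mathcal{A}_{\t} = M_{n}(\C)$}} on \HTOT \ is given, as in the BV case, by diagonal left matrix multiplication on each {\small{$M_{n}(\C)$}}-summand. Self-adjointness of \DTOT \ follows from the block form \eqref{Eq: DTOT}: \DBV \ is self-adjoint by the previous section, and $D_{\aux}$ is manifestly self-adjoint since its off-diagonal block $T$ is real and symmetric. The resolvent/commutator conditions are automatic in finite dimension. For \JTOT, the computation $J_{\t}^2 = \mathrm{Id}$ is the same on each summand as for \JBV. However, since the auxiliary block contributes only constant entries in $D_{\aux}$, one has $J_{\t} D_{\aux} = + D_{\aux} J_{\t}$ on {\small{$\mathcal{H}_{\aux}$}} while $J_{\t} D_{\BV} = - D_{\BV} J_{\t}$ on \HBV\ (the latter being the KO-dimension $1$ property). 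Hence \DTOT \ decomposes as a sum of two non-zero self-adjoint operators anti-commuting and commuting with \JTOT \ respectively, putting us in the {\em mixed KO-dimension} setting of Definition \ref{Def: mixed KO-dim}.

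Second, to verify that \stXS\ is the expected total theory, I would compute the fermionic action {\small{$\frac{1}{4}\langle J_{\t}(\varphi), D_{\t}\varphi\rangle$}} on {\small{$\mathcal{H}_{\t,f} = \mathcal{H}_{\BV,f} \oplus \mathcal{H}_{\aux,f}$}}. Because \DTOT\ is block-diagonal, the cross terms vanish and the computation splits as
$$\tfrac{1}{4}\langle J_{\t}\varphi, D_{\t}\varphi\rangle = \tfrac{1}{4}\langle J_{\BV}\varphi_{\BV}, D_{\BV}\varphi_{\BV}\rangle + \tfrac{1}{4}\langle J_{\BV}\varphi_{\aux}, D_{\aux}\varphi_{\aux}\rangle.$$
The first term reproduces {\small{$\frac{1}{2}S_{\ferm}^{\BV} = \widetilde{S} - S_{0}$}} by Theorem \ref{Thm: BV spectral triple, all n}. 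For the second term, plugging in $T = \mathrm{diag}(0, 2\cdot \mathrm{Id})$ and using the convention that elements of {\small{$\mathcal{H}_{\aux,f}$}} are ordered as {\small{$[h^{*}_{i}, B_{i}, B^{*}_{i}, h_{i}]$}}, a direct index-by-index computation yields {\small{$\sum_{i=1}^{n^{2}-1} B^{*}_{i} h_{i} = S_{\aux}$}}, with the factor of $2$ in $T$ balancing the {\small{$\frac{1}{4}$}} prefactor and the contribution of $J_{\t}$. Combining with {\small{$S_{0}$}} on {\small{$X_{0}$}} gives {\small{$S_{\t} = S_{0} + S_{\BV} + S_{\aux}$}}, as required. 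One must also verify that {\small{$S_{\t} \in [\mathcal{O}_{X_{\t}}]^{0}$}}, which is immediate from the bosonic/fermionic parity assignments of the auxiliary pair and its antifields. That {\small{$\{S_{\t}, S_{\t}\} = 0$}} is then automatic: {\small{$\{\widetilde{S},\widetilde{S}\} = 0$}} was proved in Theorem \ref{Thm: BV spectral triple, all n}, {\small{$S_{\aux}$}} has no variable in common with \swS \ so $\{\widetilde{S}, S_{\aux}\} = 0$, and $\{S_{\aux}, S_{\aux}\} = 0$ since $S_{\aux}$ is linear in $h$ and in $B^{*}$.

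Third, the maximality of $\mathcal{A}_{\t} = M_{n}(\C)$ I would handle exactly as in Lemma \ref{Lemma: Max algebra}. The commutation rule {\small{$[a, J_{\t} b^{*} J_{\t}^{-1}] = 0$}} forces any candidate algebra to embed diagonally in {\small{$[M_{n}(\C)]^{\oplus 8}$}} (four summands from \HBV\ and four from {\small{$\mathcal{H}_{\aux}$}}). I then impose the first-order condition by computing the double commutator {\small{$[[D_{\t}, (a_{1},\dots,a_{8})], J_{\t}(b_{1},\dots,b_{8})J_{\t}^{-1}]$}}. The block-diagonal structure of \DTOT \ makes this computation split: the BV block, handled in Lemma \ref{Lemma: Max algebra}, already collapses {\small{$a_{1} = \dots = a_{4}$}} (and likewise for $b$). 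The auxiliary block gives an analogous collapse {\small{$a_{5} = \dots = a_{8}$}}, because $T$ has a non-zero entry {\small{$2\cdot\mathrm{Id}$}} coupling $B^{*}$ and $h$. The remaining step is to link the two blocks: compatibility with the representation on \HTOT, viewed as a single {\small{$\mathcal{A}_{\t}$}}-bimodule (i.e.\ the requirement that a single $a \in \mathcal{A}_{\t}$ act on all eight summands), forces {\small{$a_{1} = \dots = a_{8}$}}, which selects {\small{$\mathcal{A}_{\t} = M_{n}(\C)$}}.

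The main obstacle is the bookkeeping in step two, where one must match the various sign and parity conventions (graded inner product, factor of $i$ in \JTOT, ordering of auxiliary summands) with the prefactor {\small{$\frac{1}{4}$}} so that the auxiliary action comes out with coefficient $+1$ and real coefficients; this is precisely the reason for embedding the $B_{i}$ into {\small{$\mathfrak{su}(n)$}} and the $h_{i}$ into {\small{$i\mathfrak{su}(n)$}} as in the remark preceding \eqref{Eq: DTOT}.
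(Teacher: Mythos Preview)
Your proposal is correct and follows essentially the same approach as the paper's own proof: verify the spectral-triple axioms block by block, identify the mixed KO-dimension from $D_{\aux}$ commuting with $J_{\t}$ while $D_{\BV}$ anti-commutes, check that the induced fermionic action reproduces $S_{\BV} + S_{\aux}$, and handle maximality along the lines of Lemma~\ref{Lemma: Max algebra}. You supply more detail than the paper (which dispatches most points as ``straightforward'' or ``by direct computation/Krajewski diagram''), and your extra verification of $\{S_{\t},S_{\t}\}=0$ is correct though not part of what needs to be shown here.

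One point to watch: in your maximality argument, because $D_{\t}$ is strictly block-diagonal between $\mathcal{H}_{\BV}$ and $\mathcal{H}_{\aux}$, the first-order condition alone only collapses $a_{1}=\dots=a_{4}$ and $a_{5}=\dots=a_{8}$ separately; it cannot by itself link the two blocks. Your final sentence (``the requirement that a single $a\in\mathcal{A}_{\t}$ act on all eight summands'') is the right idea but phrased circularly --- you are trying to \emph{determine} the maximal algebra, so you cannot assume it is a single copy of $M_{n}(\C)$. The paper does not spell this out either, simply invoking a direct computation or the Krajewski-diagram argument; if you want a clean justification you would need to appeal to the bimodule structure dictated by $J_{\t}$ being a single global real structure on all of $\mathcal{H}_{\t}$, or to the Krajewski-diagram classification, rather than to the first-order condition.
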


\noindent
{\emph{Note:}} As already done for the BV spectral triple, also in this case all the variables/components in {\small{$\mathcal{H}_{t, f}$}} and \DTOT\ have to be treated as graded variables, where their real or Grassmannian parity coincides with the even/odd parity of their ghost degree.
\begin{proof}
Checking that {\small{$(\mathcal{A}_{\t}, \mathcal{H}_{\t}, D_{\t}, J_{\t})$}} defines a spectral triple is pretty straightforward: indeed, from the representation of \DTOT\ as matrix, it is clearly a self-adjoint operator. Moreover, because the map \JTOT\ has been simply obtained as extension of the real structure \JBV, it clearly defines a real structure on the whole Hilbert space \HTOT. Also, by a direct computation, one can verify that the operator {\small{$D_{\aux}$}} commutes with the real structure: 
$$\begin{array}{ccc} D_{\aux}J_{\aux} = J_{\aux}D_{\aux} \\
\mbox{on the bosons} 
\end{array}
$$
determining a real spectral triple with mixed KO-dimension. For what concerns the conditions about the algebra \ATOT\ and its maximality, one can verify this part of the statement either via a direct computation or by constructing the corresponding Krajewski diagram. Finally, to assert that the spectral triple \TOTsp\ just defined is in particular a total spectral triple for the gauge theory \siXS\ we have to verify that:
$$X_{\t} \cong \mathcal{H}_{\t, f} + (X_{0} \oplus X_{0}^{*}[1])\qquad \mbox {and} \qquad S_{\t}  \cong S_{0} + \frac{1}{4} \langle J_{\t}(\varphi), D_{\t} \varphi \rangle$$
for {\small{$\varphi$}} a generic element in {\small{$\mathcal{H}_{\t, f}$}}. However, both these identities follows immediately from the explicit definitions given of  {\small{$\mathcal{H}_{\t}$}} and \DTOT\ in \eqref{eq: HTOT} and  \eqref{Eq: DTOT} respectively.
\end{proof}

\begin{oss}
Up to now, following the line indicated by the BV construction, we have shown how to determine, given a BV spectral triple, the associated total spectral triple. However, one can also follow the opposite direction and, given a total spectral triple \TOTsp, being able to eliminate the auxiliary fields/contractible pairs to recover the underline BV spectral triple. Indeed, to do so one has only to be able to identify the auxiliary fields and separate them from all the ghost/anti-ghost fields. To implement this selection among the fields in \HTOT\ one has simply to look at the structure of the operator \DTOT: the part of \DTOT\ containing just constants is the part of it acting on the subspace {\small{$\mathcal{H}_{\aux}$}} in \HTOT\ while the other components of \DTOT, depending on ghost/anti-ghost fields, form the part of it acting on \HBV. Hence, looking at the structure of \DTOT, we recover not only \DBV\ and {\small{$D_{\aux}$}} but also the splitting of \HTOT\ as {\small{$\mathcal{H}_{\t} = \mathcal{H}_{\BV} \oplus \mathcal{H}_{\aux}$}}. Therefore, having already determined \HBV\ and \DBV, one has only to define {\small{$J_{\BV}$}} as restriction of {\small{$J_{\t}$}} to \HBV\ and to fix {\small{$\mathcal{A}_{\BV}:=\mathcal{A}_{\t}$}}. This would complete the reconstruction of the underlying BV spectral triple.
\end{oss}

To be able to assert that the total spectral triple just defined for our model acts exactly as indicated by conditions $(1)$ and $(2)$ we still have to check that \TOTsp\ verifies condition $(2)$, as the requirement in $(1)$ has been directly implemented in the definition itself of {\emph{total spectral triple.}} Explicitly, we have to verify that, by applying to \TOTsp\ exactly the same procedure explained in Section \ref{Sect: BV and Hochschild complex}, the resulting pair {\small{$(\mathcal{B}_{\t}, \mathcal{M}_{\t})$}} induces a Hochschild complex such that:
$$(\mathcal{C}^{\bullet}_{H}(\mathcal{B}_{\t}, \mathcal{M}_{\t}), d_{H}) \cong (\mathcal{C}^{\bullet}_{\BVt}(X_{\t}, d_{S_{t}}), d_{S_{t}}),$$
for {\small{$\mathcal{C}^{\bullet}_{\BVt}(X_{\t}, d_{S_{t}})$}} the effective BV complex defined by the pair \stXS. \\
\\
\noindent
{\bf The coalgebra $\B_{\t}$} By implementing the formula in \eqref{Eq: def B} starting from the total spectral triple one finds that the homogeneous components of {\small{$\B_{\t}$}} \ are 
$$\B_{\t, 0} = [\mathcal{O}_{X_{0}}]^{\leq(\deg(S_{0})-1)} \oplus [\mathfrak{su}(n)] \oplus [i \mathfrak{su}(n)], \quad \B_{\t,-1} = X_{0}^{*} [1] \oplus [\mathfrak{su}(n)] \oplus [i \mathfrak{su}(n)],$$
and $$\B_{\t,m} = [\mathcal{Q}_{f}]^{m}, \quad \B_{\t,-m} = [\mathcal{Q}_{f}^{*}[1]]^{-m}$$ for {\small{$m>1$}}, where additional terms appear only in the homogeneous components of degree $-1$ and $0$, due to the presence of the auxiliary fields {\small{$(B_{p}, h_{p})$}} and their corresponding antifields {\small{$(B^{*}_{p}, h^{*}_{p})$}}. Concerning its coproduct structure, given a homogeneous element {\small{$y^{a}$}}, we define 
\begin{equation}
\Delta_{t}(y^{a}):= \left\lbrace S_{0} + \frac{1}{2} S_{\ferm, t}, \  y^{a} \right\rbrace     
\end{equation}
where {\small{$S_{\ferm, t}$}} should be seen as the fermionic action induced by the total spectral triple.
However, the splitting  of \HTOT\ in the direct sum {\small{$\mathcal{H}_{\BV} \oplus \mathcal{H}_{\aux}$}} and the diagonal structure of \DTOT\ allows for a more explicit and direct description of {\small{$\mathcal{B}_{\t}$}} as extension of the algebra {\small{$\mathcal{B}$}} and the coproduct structure on {\small{$\mathcal{B}_{\t}$}} can be written as
\begin{equation}
\Delta_{t}(y^{a}):= \left\lbrace S_{0} + S_{\BV}, \  y^{a} \right\rbrace  \quad \mbox{or} \quad \Delta_{t}(y^{a}):= \left\lbrace S_{\aux}, \  y^{a} \right\rbrace \end{equation}
depending if {\small{$y^{a} \in \mathcal{H}_{\BV}$}} or 
{\small{$y^{a} \in \mathcal{H}_{\aux}$}}, respectively.
\\
\\
\noindent 
{\textbf{The bimodule {\small{$\mathcal{M}_{\t}$}}}} Accordingly to what appearing in Equation \eqref{eq: def M}, the comodule {\small{$\mathcal{M}_{\t}$}} defined to be {\small{$\mathcal{M}_{\t}:= \langle \Omega^{1}(\mathcal{A}_{\t})_{\text{s.a.}}\rangle = \M \cong \mathcal{O}_{X_{0}} $}}. The fact that {\small{$\mathcal{M}_{\t}$}} coincides with \scM \ is a direct consequence of the fact that {\small{$\A_{\t} = \A_{\BV}$}}. Finally, the right coaction {\small{$\omega_{R, t}$}}of {\small{$\mathcal{M}_{\t}$}} over {\small{$\mathcal{B}_{\t}$}} is given by

\begin{equation}
\label{Eq: def coaction omega t}
    \omega_{R, t} (f):=\left\lbrace S_{0} + \frac{1}{2} S_{\ferm, t}, f\right\rbrace = \left\lbrace S_{0} + S_{\BV}, f\right\rbrace = \omega_{R}(f). 
\end{equation}

The fact that the coaction {\small{$\omega_{R, t}$}} coincides with the coaction {\small{$\omega_{R}$}} follows from the fact that the only non-zero contribution to the expression in \ref{Eq: def coaction omega t} comes from the contraction of f with an antifield {\small{$x^{*}_{i}$}}. As no antifields appear in {\small{$S_{\aux}$}}, neither the comodule not its coaction change in the passage from the BV spectral triple to its extension via auxiliary fields: {\small{$(\M_{t}, \Delta_{t}) \cong (\M, \Delta)$}}.\\
\\
\noindent
To conclude the section, given this pair {\small{$(\mathcal{B}_{\t}, \mathcal{M}_{\t})$}}, we should verify that its induced Hochschild complex is isomorphic to the BV complex of \stXS. Proving this claim is exactly the purpose of the following theorem. 

\begin{theorem}
\label{Theo: BV tot}
Let {\small{$(\mathcal{A}_{\t}, \mathcal{H}_{\t}, D_{\t}, J_{\t})$}} be a total spectral triple with effective Hilbert space {\small{$\mathcal{H}_{\t, f}$}} and relative to an initial spectral triple \isp. Then, given the induced pair {\small{$(\mathcal{B}_{\t}, \mathcal{M}_{\t})$}} the following isomorphism holds: 
$$(\mathcal{C}^{\bullet}_{\Hoch, \Delta}(\mathcal{B}_{\t}, \mathcal{M}_{\t}), d_{\Hoch, \Delta}) \cong (\mathcal{C}^{\bullet}_{\BV}(X_{\t}, d_{S_{\t}}), d_{S_{\t}})$$
for {\small{$\mathcal{C}^{\bullet}_{\Hoch}(\mathcal{B}_{\t}, \mathcal{M}_{\t})$}} the Hochschild complex of the coalgebra {\small{$\mathcal{B}_{\t}$}} over the vomodule {\small{$\mathcal{M}_{\t}$}} and {\small{$\mathcal{C}^{\bullet}_{\BV}(X_{\t}, d_{S_{\t}})$}} the BV complex induced by the total theory \stXS\ encoded in the total spectral triple \TOTsp.
\end{theorem}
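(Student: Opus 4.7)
The plan is to mirror the proof of Theorem \ref{Theorem: BV and Hochschild}, constructing an explicit family of cochain maps $\{\Phi_{\t}^{k}\}_{k \in \mathbb{Z}}$ between the BV complex of \stXS\ and the Hochschild complex of the pair {\small{$(\mathcal{B}_{\t}, \mathcal{M}_{\t})$}}, and then verifying their compatibility with the coboundary operators. First I would observe that, by construction, {\small{$\mathcal{B}_{\t}$}} is obtained from {\small{$\mathcal{B}$}} by enlarging only the homogeneous components in degrees {\small{$-1$}} and {\small{$0$}} so as to accommodate the auxiliary generators {\small{$B_{p}, h_{p}$}} and the corresponding antifields {\small{$B^{*}_{p}, h^{*}_{p}$}}, while {\small{$\mathcal{M}_{\t} = \mathcal{M}$}}, since {\small{$\mathcal{A}_{\t} = \mathcal{A}_{\BV}$}}. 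Consequently, at the level of cochain spaces one gets
$$\mathcal{C}^{k}_{\Hoch, \Delta}(\mathcal{B}_{\t}, \mathcal{M}_{\t}) \;\cong\; \Sym^{k}_{\mathcal{O}_{X_{0}}}(X_{\t}) \;=\; \mathcal{C}^{k}_{\BVt}(X_{\t}, d_{S_{\t}})$$
for every {\small{$k \in \mathbb{Z}$}}, via the natural map
$$\Phi^{k}_{\t}: \sum_{I} f_{I}\, D^{i_{1}}_{1}\cdots D^{i_{p}}_{p} \;\longmapsto\; \sum_{I} f_{I} \otimes y^{i_{1}}_{1} \otimes \dots \otimes y^{i_{p}}_{p},$$
which extends {\small{$\Phi^{k}$}} from Theorem \ref{Theorem: BV and Hochschild} by acting as the identity on the auxiliary generators.

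Next, by linearity of both coboundary operators and by the fact that both {\small{$d_{S_{\t}}$}} and {\small{$d_{\Hoch, \Delta}$}} act as graded derivations, the required compatibility
$$d_{\Hoch, \Delta} \circ \Phi^{k}_{\t} \;=\; \Phi^{k+1}_{\t} \circ d_{S_{\t}}$$
reduces to checking agreement on generators, i.e.\ on elements of {\small{$\mathcal{M}_{\t}$}} and of {\small{$\mathcal{B}_{\t}$}}. On generators already belonging to the BV-subcomplex, this is precisely the content of Theorem \ref{Theorem: BV and Hochschild}: indeed, because {\small{$S_{\aux}$}} does not involve any of the BV variables appearing in {\small{$\mathcal{B}$}} or {\small{$\mathcal{M}$}}, the bracket {\small{$\{S_{\aux}, y^{a}\}$}} vanishes on such generators, so that {\small{$\Delta_{\t}|_{\mathcal{B}}=\Delta$}} and {\small{$\omega_{R, \t} = \omega_{R}$}}. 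For the new auxiliary generators, agreement of {\small{$\Delta_{\t}(y^{a})$}} with {\small{$\{S_{\t}, y^{a}\}$}} is immediate from the explicit form {\small{$S_{\aux}=\sum_{p} h_{p}B^{*}_{p}$}} together with the pairings {\small{$\{B^{*}_{p}, B_{p}\} = \{h^{*}_{p}, h_{p}\} = 1$}} on the auxiliary fields.

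The one genuine check beyond Theorem \ref{Theorem: BV and Hochschild} is that the enlargement from {\small{$\mathcal{B}$}} to {\small{$\mathcal{B}_{\t}$}} preserves the structural properties required in the two preceding lemmas, namely that {\small{$(\mathcal{B}_{\t}, \Delta_{\t})$}} is still a 1-shifted graded coalgebra and that {\small{$\mathcal{M}_{\t}$}} is a degree-1 right comodule over it. The graded coassociativity of {\small{$\Delta_{\t}$}} and the compatibility of {\small{$\omega_{R, \t}$}} with {\small{$\Delta_{\t}$}} follow verbatim from the argument of Lemma \ref{lemma: B is a graded coalgebra}, once one invokes the classical master equation {\small{$\{S_{\t}, S_{\t}\} = 0$}} for the total action (recalled after Definition \ref{auxiliary pair}) in place of the one for \swS. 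The bidegree condition {\small{$\Delta_{\t}(y^{a}) \in \bigoplus_{i+j=a+1}\mathcal{B}_{\t, i}\otimes \mathcal{B}_{\t, j}$}} is forced by {\small{$S_{\t}$}} having total ghost degree zero, combined with the fact that {\small{$S_{\aux}$}} is quadratic in the auxiliary generators and linear in the auxiliary antifield {\small{$B^{*}_{p}$}}, so that contracting {\small{$S_{\aux}$}} with a homogeneous generator produces a tensor of the prescribed bidegree. The main obstacle, and the part requiring most care in the written proof, will be keeping track of the signs in the graded coassociativity check on the auxiliary sector; once this bookkeeping is settled, the proof of the isomorphism of cochain complexes follows exactly along the lines of Theorem \ref{Theorem: BV and Hochschild}.
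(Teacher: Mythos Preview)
Your proposal is correct and follows essentially the same approach as the paper: both proofs mirror Theorem \ref{Theorem: BV and Hochschild} by identifying the cochain spaces via $\mathcal{M}_{\t}\otimes T^{n}(\mathcal{B}_{\t})\cong\Sym^{n}_{\mathcal{O}_{X_{0}}}(X_{\t})$ and then reducing the compatibility of the coboundary operators to a check on generators using the explicit form of $\Delta_{\t}$. If anything, you are more careful than the paper, which does not spell out the verification that $(\mathcal{B}_{\t},\Delta_{\t})$ remains a 1-shifted graded coalgebra or the separate treatment of BV versus auxiliary generators.
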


\begin{proof}
The proof of this theorem is extremely similar to the one presented for Theorem \ref{Theorem: BV and Hochschild}. Starting by considering the space of cochains, we have that:
$$\mathcal{C}^{n}_{\Hoch, \Delta}(\mathcal{B}_{\t}, \mathcal{M}_{\t}) :=  \mathcal{M}_{\t} \otimes T^{n}(\mathcal{B}_{\t}) \cong \Sym_{\mathcal{M}_{\t}}^{n}(\mathcal{B}_{\t}) =\Sym_{\mathcal{O}_{X_{0}}}^{n}(X_{\t}) =: \mathcal{C}^{n}_{\BV}(X_{\t}, d_{S_{\t}}),$$
in any degree {\small{$n \in \mathbb{Z}$}}. For what concerns the coboundary operators, one should prove that the isomorphism between the cochain spaces is compatible with the two coboundary operators. However, once again one could restrict to only verify this condition on the generators. The statement follows straightforwardly from the explicit expression found for the coproduct {\small{$\Delta_{t}$}}.
\end{proof}

\noindent
This last theorem allows to conclude that, as initially claimed, also the process of further extending a BV-extended theory via the introduction of the required pairs of auxiliary fields coherently fits in the framework provided by the language of finite spectral triples.\\
\\
\noindent
{\bf The gauge-fixing fermion.} As already mentioned, the extension of the BV spectral triple via the introduction of auxiliary fields was just a technical step in order to be able to properly perform a gauge-fixing process by the definition of a suitable gauge-fixing fermion {\small{$\Psi$}}. Hence, to conclude the section, we state the notion of {\emph{gauge-fixing fermion}} within the context of finite spectral triples. 

\begin{definition}
Let {\small{$(\mathcal{A}_{\t}, \mathcal{H}_{\t}, D_{\t}, J_{\t})$}} be a total spectral triple relative to a gauge theory \siXS\ and  with effective Hilbert space 
$$\mathcal{H}_{\t, f} = \mathcal{H}_{\BV, f} \oplus \mathcal{H}_{\aux, f}$$
where 
 $$\mathcal{H}_{\BV, f} = \mathcal{Q}^{*}_{f} [1] \oplus \mathcal{Q}_{f} \quad \mbox{and} \quad \mathcal{H}_{\aux, f} = \mathcal{R}^{*}_{f} [1] \oplus \mathcal{R}_{f} . $$
Then a {\emph{gauge-fixing fermion}} {\small{$\Psi$}} for it is defined as $\Psi \in [\mathcal{O}_{\mathcal{Q}_{f} \oplus \mathcal{R}_{f}}]^{-1},$ with $\Psi\equiv 0.$
\end{definition}

\begin{oss} Before continuing, let us make few remarks about the conditions imposed on a gauge-fixing fermion in the above definition. Explicitly, we are requiring that a gauge-fixing fermion is a regular function of degree $-1$, homotopically equivalent to the zero function and defined on the part of the effective total Hilbert space {\small{$\mathcal{H}_{\t, f}$}} corresponding  to the ghost sector. Asking that a  gauge-fixing fermion {\small{$\Psi$}} is defined only on the ghost sector is trivially enforced by what is the goal of the entire gauge-fixing process, which is eliminating the antifields/anti-ghost sector. For what concerns the condition of its total degree being $-1$, that is a standard condition imposed on the gauge-fixing fermion and due to a degree balancing, for the gauge-fixed action still being of total degree $0$. Finally, we impose to {\small{$\Psi$}} to be homotopically equivalent to the zero function. As consequence, the Lagrangian submanifold defined by the gauge-fixing conditions determined by the gauge-fixing fermion {\small{$\Psi$}} would be homotopically equivalent to the manifold determined by the initial configuration space \siX.
\end{oss}

As recalled in Section \ref{Sect: intro to BV}, once a gauge-fixing has been implemented, still there is a residual symmetry, which turns out to be described by the BRST complex. Thus to claim that the entire BV costruction finds a natural description in terms of spectral triples and within the framework of NCG, also the BRST cohomology complex should have a natural description in terms of a cohomology complex that usually appear within the context of NCG. Even more, in order to have a coherent picture, the BRST complex is expected to coincide once again with a Hochschild complex induced by a total spectral triple considered together with a suitable gauge-fixing fermion {\small{$\Psi$}}. This is precisely what happens: the aim of the next section will be to provide the complete proof of this claim.

\noindent

\section{The BRST cohomology as a Hochschild complex}
\label{Sect: BRST and Hochschild complex}
\noindent
In this section, we present one of the key results of the paper: for the first time we relate the BRST cohomology complex induced by a finite gauge theory to the Hochschild complex determined by a pair {\small{$(\mathcal{B}_{\Psi}, \mathcal{M}_{\Psi})$}}, for {\small{$\mathcal{B}_{\Psi}$}} a coalgebra and {\small{$\mathcal{M}_{\Psi}$}} a comodule on it, all of this within the framework of NCG. Establishing a connection between these two cohomological theories can offer a new point of view to approach the study of the BRST cohomology, a theory which has great relevance from a physical perspective but which still deserves a deeper mathematical investigation. \\
\\
However, relating BRST and Hochschild cohomology will be just a \textquotedblleft side effect\textquotedblright. Indeed, the main purpose of this section is to complete the description of the BV construction using the language of spectral triples. Hence, we are not only facing the problem of relating the BRST complex to another cohomology complex naturally appearing in the context of NCG but also ensuring that the transposition of this last step of the BV construction in the setting of finite spectral triple is coherent and consistent with all the previous steps.  Precisely, as we conclude the previous section by introducing the concept of total spectral triple and determining the suitable transposition of the notion of gauge-fixing fermion in terms  of spectral triples, here what we have to do is to establish how, given a total spectral triple and a suitable gauge-fixing fermion {\small{$\Psi$}} on it, one can construct the BRST complex determined by the gauge-fixed theory {\small{$(X_{\t}, S_{\t})|_{\Psi}$}}, for  {\small{$(X_{\t}, S_{\t})$}} the total theory represented by the total spectral triple \TOTsp. 

 $$\begin{array}{cccccc}
(\mathcal{A}_{\t}, \mathcal{H}_{\t}, D_{\t}, J_{\t}), & \Psi\in [\mathcal{O}_{\mathcal{Q}_{f} \oplus \mathcal{R}_{f}}]^{-1} & \longrightarrow  & \mathcal{C}^{\bullet}_{\Hoch}(\mathcal{B}_{\Psi}, \mathcal{M}_{\Psi}) & \cong & \mathcal{C}^{\bullet}_{\BRST}(X_{\t}|_{\Psi}, d_{S_{\t}}|_{\Psi}) \\
\mbox{\small{total}} & \mbox{\small{gauge-fixing }} & & \mbox{\small{Hochschild}} & & \mbox{\small{BRST cohomology}} \vspace{-1mm}\\
\mbox{\small{spectral triple}} &
\mbox{\small{fermion}} & & \mbox{\small{complex }}& & \mbox{\small{complex }} 
\end{array}
$$

\noindent
As a first attempt, one might be tempted to try to implement the gauge-fixing procedure already at the level of the spectral triple, with the aim of determining a sort of gauge-fixed version of a total spectral triple. 
Following a procedure similar to the one presented in Section \ref{Sect: BV spectral triple} and \ref{Sect: gauge-fixing procedure in terms of NCG}, one might try to define a gauge-fixed spectral triple to be a finite (eventually mixed KO-dimensional) spectral triple satisfying conditions which are analogous to the ones required for the BV and the total spectral triple. In particular, first of all one would impose that this gauge-fixed spectral triple encodes the gauge-fixed pair {\small{$(X_{\t}, S_{\t})|_{\Psi}$}}, as the BV spectral triple and the total spectral triple did encode the BV-extended theory \swXS\ and the total theory {\small{$(X_{\t}, S_{\t})$}}, respectively. Moreover, one would also require that the  Hochschild cohomology complex induced by such gauge-fixed spectral triple coincides with the BRST complex determined by the pair {\small{$(X_{\t}, S_{\t})|_{\Psi}$}}. \\
\\
However, a comparison with how the construction is classically performed would suggest that the restriction to the ghost sector can occur only at the level of the cohomology complex. As briefly recalled in Section \ref{Subsection: The BRST cohomology complex}, given a BV complex, one obtains the corresponding BRST complex by imposing the gauge-fixing condition both on the cochain spaces as well as on the action of the coboundary operator. In other words, strictly speaking, the gauge-fixing process is not applied to the pair \stXS\ but it is actually performed only after having constructed the cohomology complex. Similarly, also in this spectral triple setting we expect the gauge-fixing procedure not to  act st the spectral triple level but to be performed only after having determined the pair {\small{$(\mathcal{B}_{\t},  \mathcal{M}_{\t})$}} induced by a total spectral triple. As anticipated, to have a coherent construction, the way of performing such a gauge-fixing process on  {\small{$(\mathcal{B}_{\t},  \mathcal{M}_{\t})$}} should lead to the existence of an isomorphism between the Hochschild complex defined by the gauge-fixed pair {\small{$(\mathcal{B}_{\Psi}, \mathcal{M}_{\Psi})$}} and the BRST complex of the gauge-fixed theory {\small{$(X_{\t},  S_{\t})|_{\Psi}$}}.\\
\\
To further reinforce the intuition that one cannot take the total spectral triple as starting point of the construction but rather the gauge-fixing procedure has to be implemented directly on the induced pair {\small{$(\mathcal{B}_{\t}, \mathcal{M}_{\t})$}} one could observe that, by enforcing the gauge-fixing condition on the pair \stXS, we completely change its mathematical structure. Consequently, it is not reasonable to expect {\small{$(X_{\t}, S_{\t})|_{\Psi}$}} to be nonetheless encoded in the structure of a spectral triple as done for the pairs \swXS\ and \stXS. Therefore the translation of the gauge-fixing procedure to the context of NCG would actually yields to determining how to perform the following process:
 $$\begin{array}{cccccc}
(\mathcal{B}_{\t}, \mathcal{M}_{\t}), & \Psi\in [\mathcal{O}_{\mathcal{Q}_{f} \oplus \mathcal{R}_{f}}]^{-1} & \vector(1,0){30}  & (\mathcal{B}_{\Psi}, \mathcal{M}_{\Psi})\\
\mbox{\small{pair induced by}} & \mbox{\small{gauge-fixing }} & & \mbox{\small{gauge-fixed}}  \vspace{-1mm}\\
\mbox{\small{the total spectral triple}} & \mbox{\small{fermion}} && \mbox{\small{pair}}
\end{array}
$$
\noindent
{\bf The algebra $\mathcal{B}_{\Psi}$.} \ Starting with the algebra {\small{$\mathcal{B}_{\t}$}}, the goal we want to reach by implementing on it a gauge-fixing process is, as usual, to eliminate all the antifields/anti-ghost fields from it. Therefore, we define {\small{$\mathcal{B}_{\Psi}$}} to be simply
\begin{equation}
    \label{Eq: def B Psi}
\B_{\Psi, 0} = [\mathcal{O}_{X_{0}}]^{\leq(\deg(S_{0})-1)} \oplus [i \mathfrak{su}(n)], \quad \B_{\Psi,-1} = [\mathfrak{su}(n)], \quad \mbox{and} \quad  \B_{\Psi,m} = [\mathcal{Q}_{f}]^{m}, 
\end{equation}
for {\small{$m>1$}}. In other words, instead  of having an entire collection of free variables in {\small{$X^{*}_{0}[1]$}}, {\small{$\mathcal{Q}_{f}^{*}[1]$}} and {\small{$\mathcal{R}_{f}^{*}[1]$}}, we force them to lay on the Lagrangian submanifold determined by the so-called gauge-fixing condition {\small{$\{\varphi^{*}_{j} =\partial \Psi/\partial \varphi_{j}\}$}} and {\small{$\{\chi^{*}_{k} =\partial \Psi/\partial \chi_{k}\}$}}, where by {\small{$\varphi_{j}$}} and {\small{$\chi_{k}$}} we denote, respectively, a generic field/ghost field in {\small{$\mathcal{Q}_{f}$}} and {\small{$\mathcal{R}_{f}$}}. Therefore, in this context and under these conditions {\small{$\mathcal{B}_{\Psi}$}} is simply obtained as restriction of {\small{$\mathcal{B}_{\t}$}} to its ghost sector. Finally, for what concerns its coproduct structure, we remark that, due to the linearity of {\small{$S_{\t}$}} in the antifields/anti-ghost fields, then
$$\Delta_{\Psi}(y^{a})= \Delta_{\t}(y^{a}),$$
for any generator {\small{$y^{a} \in \B_{\Psi}$}}.\\
\\
\noindent
{\bf The module $\mathcal{M}_{\Psi}$.} \ Similarly to what done for the algebra {\small{$\mathcal{B}_{\Psi}$}}, also the module {\small{$\mathcal{M}_{\Psi}$}} is obtained as restriction of {\small{$\mathcal{M}_{\t}$}} to the ghost sector. However, because 
$$\mathcal{M}_{\t}:= \langle \Omega^{1}(\mathcal{A}_{\t})\rangle  =\langle \Omega^{1}(\mathcal{A}_{0})\rangle  $$
is the  module generated by the initial fields in \siX, performing the gauge-fixing process on it does not have any effect as there are no antifields/anti-ghost fields to eliminate. Hence:
\begin{equation}
\label{eq: M_Psi}
\mathcal{M}_{\Psi} = \langle \Omega^{1}(\mathcal{A}_{0})\rangle|_{\Psi} = \langle \Omega^{1}(\mathcal{A}_{0})\rangle = \mathcal{M}.
\end{equation}

\noindent
Having defined the gauge-fixed pair {\small{$(\mathcal{B}_{\Psi}, \mathcal{M}_{\Psi})$}} we can finally state the main result of this section.

\begin{theorem}
Given {\small{$(\mathcal{A}_{\t}, \mathcal{H}_{\t}, D_{\t}, J_{\t})$}} the total spectral triple associated to an initial spectral triple {\small{$(\mathcal{A}_{0}, \mathcal{H}_{0}, D_{0})$}} and {\small{$\Psi \in [\mathcal{O}_{\mathcal{Q}_{f} \oplus \mathcal{R}_{f}}]^{-1}$}} a gauge-fixing fermion for the theory, let {\small{$(\mathcal{B}_{\Psi}, \mathcal{M}_{\Psi})$}} be the pair defined in \eqref{Eq: def B Psi} and \eqref{eq: M_Psi}. Then, the following isomorphism holds
$$\mathcal{C}^{\bullet}_{\Hoch, \Delta}(\mathcal{B}_{\Psi}, \mathcal{M}_{\Psi}) \cong \mathcal{C}^{\bullet}_{\BRST}(X_{\t}, d_{S_{\t}})|_{\Psi}$$
where {\small{$\mathcal{C}^{\bullet}_{\BRST}(X_{\t}, d_{S_{\t}})|_{\Psi}$}} denotes the BRST complex associated to the initial gauge theory \siXS\ induced by the initial spectral triple  {\small{$(\mathcal{A}_{0}, \mathcal{H}_{0}, D_{0})$}}. 
\end{theorem}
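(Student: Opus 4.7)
The plan is to mimic the strategy used in Theorems \ref{Theorem: BV and Hochschild} and \ref{Theo: BV tot}, constructing for each $k \in \mathbb{Z}$ an explicit isomorphism of $\mathcal{O}_{X_0}$-modules $\Phi_\Psi^k \colon \mathcal{C}_{\BRST}^k(X_{\t}, d_{S_{\t}})|_\Psi \to \mathcal{C}_{\Hoch, \Delta}^k(\mathcal{B}_\Psi, \mathcal{M}_\Psi)$ and then checking that the collection $\{\Phi_\Psi^k\}_{k \in \mathbb{Z}}$ intertwines the two coboundary operators $d_{S_{\t}}|_\Psi$ and $d_{\Hoch, \Delta}$.

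First I would identify the cochain spaces. By construction $\mathcal{B}_\Psi$ is obtained from $\mathcal{B}_{\t}$ by removing the antifield/anti-ghost field sector, which is precisely the content of $X_{\t}|_\Psi$ as a graded $\mathcal{O}_{X_0}$-module once the gauge-fixing conditions $\varphi_j^* = \partial \Psi / \partial \varphi_j$ and $\chi_k^* = \partial \Psi / \partial \chi_k$ have been enforced. Since $\mathcal{M}_\Psi = \mathcal{M} \cong \mathcal{O}_{X_0}$, exactly the same argument as in Theorem \ref{Theo: BV tot} yields
\begin{equation*}
\mathcal{C}_{\Hoch, \Delta}^{k}(\mathcal{B}_\Psi, \mathcal{M}_\Psi) = \mathcal{M}_\Psi \otimes T^{k}(\mathcal{B}_\Psi) \cong \Sym_{\mathcal{O}_{X_0}}^{k}(X_{\t}|_\Psi) =: \mathcal{C}_{\BRST}^{k}(X_{\t}, d_{S_{\t}})|_\Psi,
\end{equation*}
and $\Phi_\Psi^k$ is then defined by sending a monomial $\sum_I f_I D_1^{i_1} \cdots D_p^{i_p}$, where each $D_j^{i_j}$ is now a field, a ghost field or an auxiliary field (no antifields or anti-ghost fields survive after gauge-fixing), to the corresponding tensor $\sum_I f_I \otimes y_1^{i_1} \otimes \cdots \otimes y_p^{i_p}$ in the Hochschild complex.

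The second step is the compatibility with the coboundary operators. Both $d_{S_{\t}}|_\Psi$ and $d_{\Hoch, \Delta}$ act as graded derivations, so it suffices to check the intertwining identity $\Phi_\Psi^{k+1} \circ (d_{S_{\t}}|_\Psi) = d_{\Hoch, \Delta} \circ \Phi_\Psi^k$ on the generators, namely on $f \in \mathcal{M}_\Psi$ and on $y^a \in \mathcal{B}_\Psi$. By the definition of $\mathcal{B}_\Psi$, its coproduct $\Delta_\Psi$ coincides on the ghost-sector generators with $\Delta_{\t}(y^a) = \{S_{\t}, y^a\}$, while by definition $d_{S_{\t}}|_\Psi(y^a) = \{S_{\t}, y^a\}|_{X_{\t}|_\Psi}$; a parallel statement holds for $\omega_{R, \Psi}(f) = \omega_{R, \t}(f)$, since $\mathcal{M}_\Psi = \mathcal{M}$ is untouched by the gauge-fixing.

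The main subtlety, and what I expect to be the main obstacle, is verifying that the substitution $\varphi_j^* \mapsto \partial \Psi / \partial \varphi_j$ defining the restriction to the Lagrangian submanifold $X_{\t}|_\Psi$ is compatible with the Poisson bracket $\{S_{\t}, - \}$, in the sense that first bracketing and then restricting agrees with applying $\Delta_\Psi$ (respectively $\omega_{R,\Psi}$) directly on $\mathcal{B}_\Psi$. Concretely, one needs to argue that $\{S_{\t}, y^a\}$ contains the antifields only in the specific linear combinations that become polynomials purely in the ghost-sector variables after substitution. This is exactly where the linearity of $S_{\t}$ in the antifields/anti-ghost fields, already observed in the note following Theorem \ref{Theorem: BV and Hochschild} for $\mathrm{U}(n)$-gauge theories, becomes crucial: it guarantees that the relevant bracket contains no nested antifield products to worry about. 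Once this point is settled, the rest of the argument is a direct transcription of the proof of Theorem \ref{Theo: BV tot}, and the claimed cochain isomorphism, hence the statement, follows.
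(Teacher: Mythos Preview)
Your proposal is correct and follows essentially the same approach as the paper: both argue that the BRST and gauge-fixed Hochschild complexes are obtained by restricting the isomorphic complexes of Theorem~\ref{Theo: BV tot} to the ghost sector, with linearity of $S_{\t}$ in the antifields/anti-ghost fields ensuring that the coboundary operators restrict compatibly. The paper's proof is considerably terser---it simply invokes Theorem~\ref{Theo: BV tot} and the observation that both complexes are restrictions---while you spell out the cochain isomorphism and the derivation check on generators, and you explicitly flag and resolve the restriction-versus-bracket compatibility issue that the paper leaves implicit.
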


\begin{proof}
The statement follows straightforwardly from noticing that both the complexes considered can be viewed as restriction to the ghost sector of the corresponding complexes defined on the total spectral triple. As {\small{$\mathcal{C}^{\bullet}_{\Hoch, \Delta}(\mathcal{B}_{\t}, \mathcal{M}_{\t}$}} and {\small{$\mathcal{C}^{\bullet}_{\BV}(X_{\t}, d_{S_{\t}}$}} have been proved to be isomorphic in Theorem \ref{Theo: BV tot}, the isomorohism of {\small{$\mathcal{C}^{\bullet}_{\Hoch}(\mathcal{B}_{\Psi}, \mathcal{M}_{\Psi})$}} and {\small{$\mathcal{C}^{\bullet}_{\BRST}(X_{\t}, d_{S_{\t}})|_{\Psi}$}} follows.
\end{proof}

\noindent
\begin{figure}[h]
\centering
\includegraphics[width=0.99\textwidth]{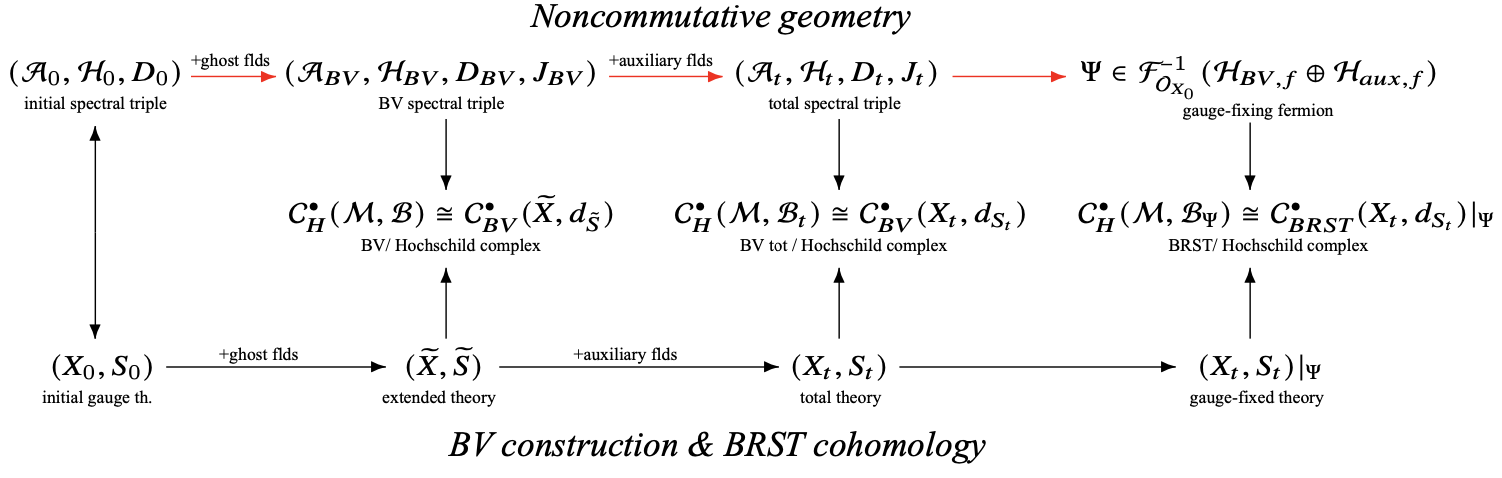}
\caption{The BV construction for finite spectral triples and its compatibility at the level of the cohomology complexes}
\label{Fig: Complete construction}
\end{figure}

\section{Conclusions: the BV construction in the setting of NCG}
\label{Sect: conclusion}
\noindent
In this article we proved how the whole BV construction, from the introduction of ghost/anti-ghost fields to the construction of the BV complex, can be performed using the language of spectral triple. Grafically, this construction has been summarized in Figure \ref{Fig: Complete construction}. Even tough here we have been focusing on finite dimensional gauge theories and hence on finite spectral triples, the extension of the construction to the case of gauge theories induced by almost commutative spectral triple appears to be natural and it is currently under investigation. The motivation and the relevance of performing this extra step lies in the fact that the class of theories induced by almost commutative spectral triples includes models that are extremely relevant from a physical point of view, such as the Standard Model of particle physics.

%%%%%%%%%%%%%%%%%%%%%%%%%%%%%%%%
\printbibliography
\end{document}